\tikzset{
  header node/.style = {
    font          = \Large,
    text depth    = +7pt,
    fill          = white,
    draw},
  header/.style = {%
    inner ysep = +1.5em,
    append after command = {
      \pgfextra{\let\TikZlastnode\tikzlastnode}
      node [header node] (header-\TikZlastnode) at (\TikZlastnode.north) {#1}
    }
  },
}
\title{Exponential Qubit Reduction in Optimization for Financial Transaction Settlement}
\author[1,2]{Elias X. Huber$^\dagger~$} 
\author[2]{Benjamin Y. L. Tan$^\ddagger~$}
\author[4]{Paul R. Griffin}
\author[2,3, 5]{Dimitris G. Angelakis$^*~$}
\affil[1]{
ETH Zürich, D-CHAB Interdisciplinary Sciences, Vladimir-Prelog-Weg 1-5/10, 8093 Zurich
}
\affil[2]{
Centre for Quantum Technologies, National University of Singapore, 3 Science Drive 2, Singapore 117543
}
\affil[3]{
AngelQ Quantum Computing, 531A Upper Cross Street, 04-95 Hong Lim Complex, Singapore 051531
}
\affil[4]{
School of Computing and Information Systems, Singapore Management University, 80 Stamford Road, Singapore 178902
}
\affil[5]{School of Electrical and Computer Engineering, Technical University of Crete, Chania, Greece 73100}
\affil[$\dagger$]{\href{mailto:elixaver@gmail.com}{elixaver@gmail.com}}
\affil[$\ddagger$]{\href{mailto:b.tan@u.nus.edu}{b.tan@u.nus.edu}}
\affil[*]{\href{mailto:dimitris.angelakis@gmail.com}{dimitris.angelakis@gmail.com}}
\date{\today}
\setlist[description]{leftmargin=1cm,labelindent=1cm}
\newcommand{\myvec}[1]{\underline{\mathbf{#1}}}
\newcommand{\mthet}{\myvec{\theta}}
\newcommand{\ra}[1]{\renewcommand{\arraystretch}{#1}}
\newcommand{\tablefsA}{\small}
\newcommand{\tablefsB}{\tiny}
\DeclareMathOperator*{\argmax}{arg\,max}
\DeclareMathOperator*{\argmin}{arg\,min}
\definecolor{goodgreen}{RGB}{168, 240, 180}
\definecolor{badred}{RGB}{247, 117, 77}
\definecolor{mediumyellow}{RGB}{229, 232, 70}
\definecolor{matplotlibbrown}{HTML}{A52A2A}
\begin{document}

\maketitle

\begin{abstract}
    We extend the qubit-efficient encoding presented in~\cite{tan_qubit-efficient_2021} and apply it to instances of the financial transaction settlement problem constructed from data provided by a regulated financial exchange. Our methods are directly applicable to any QUBO problem with linear inequality constraints.
    Our extension of previously proposed methods consists of a simplification in varying the number of qubits used to encode correlations as well as a new class of variational circuits which incorporate symmetries thereby reducing sampling overhead, improving numerical stability and recovering the expression of the cost objective as a Hermitian observable. We also propose optimality-preserving methods to reduce variance in real-world data and substitute continuous slack variables.
    We benchmark our methods against standard QAOA for problems consisting of 16 transactions and obtain competitive results. Our newly proposed variational ansatz performs best overall. We demonstrate tackling problems with 128 transactions on real quantum hardware, exceeding previous results bounded by NISQ hardware by almost two orders of magnitude. 
\end{abstract}

\renewcommand{\abstractname}{Keywords}

\begin{abstract} 
Quantum Computing, Quantum Optimization, NISQ, QUBO, Mixed binary optimization, Quantum Finance, Qubit reduction
\end{abstract}

\renewcommand{\abstractname}{Abstract}

\section{Introduction}
%\input{01_intro}
%%%
Provable asymptotic advantages of quantum computing over classical algorithms have been shown in the \emph{fault-tolerant} regime (\cite{grover_fast_1996, shor_algorithms_1994}) and \emph{quantum computational supremacy} (\cite{harrow_quantum_2018}) has been claimed experimentally in circuit sampling tasks (\cite{arute_quantum_2019, zhong_quantum_2020, madsen_quantum_2022}).\footnote{Although some problem instances have later been shown to be classically simulable, e.g.~\cite{pan_solving_2022}.} Methods that promise to extend these computational advantages to \underline{relevant} problems with \underline{available} \emph{noisy intermediate scale quantum} (NISQ) devices have been an active field of research over the past decade. A recent breakthrough in this regard was achieved by IBM Quantum (\cite{kim_evidence_2023}), claiming evidence for the utility of said NISQ devices by simulating the evolution under an Ising Hamiltonian beyond the reach of standard\footnote{Efficient classical simulation of the experiment was claimed shortly thereafter, e.g.~using tensor networks: \cite{tindall_efficient_2023, kechedzhi_effective_2023, begusic_fast_2023}.} classical simulation methods. Most research to this end of useful NISQ algorithms is concerned with problems in Hamiltonian simulation, machine learning or energy minimization/optimization (\cite{bharti_noisy_2022, wei_nisq_nodate}). This work concerns the latter.

% \paragraph{Outline of this paper} In the introduction, we give an overview of quantum optimization in NISQ, summarize different approaches to reduce the number of qubits (\ref{subsec:quantum_optimization_intro}) and introduce the transaction settlement problem (\ref{subsec:fin_trs_settlement_intro}). We then outline our methods, including the qubit reduction (\ref{subsec:qubit_compression}), cost objective (\ref{subsec:cost}) and variational ansatz (\ref{subsec:vqa_ansatz}), before concluding with simulation (\ref{subsec:simulation_results}) and quantum hardware (\ref{subsec:hw_ionq_ibmq}) results.

\paragraph{Outline of this paper} In the introduction, we give an overview of quantum optimization in NISQ, summarize different approaches to reduce the number of qubits (\ref{subsec:quantum_optimization_intro}) and introduce the transaction settlement problem (\ref{subsec:fin_trs_settlement_intro}). We extend the qubit reduction technique introduced in ~\cite{tan_qubit-efficient_2021} to find approximate solutions to problem instances larger than previously attempted. We outline the mapping used between the quantum state and the binary variables of the problem (\ref{subsec:qubit_compression}) and how the cost can be estimated using this quantum state (\ref{subsec:cost}). We introduce a new variational ansatz derived to incorporating symmetries of the encoding scheme (\ref{subsec:vqa_ansatz}), before concluding with simulation (\ref{subsec:simulation_results}) and quantum hardware (\ref{subsec:hw_ionq_ibmq}) results.

\subsection{Quantum Optimization -- Quadratic Unconstrained Binary Optimization}\label{subsec:quantum_optimization_intro}

The optimization problem we consider in~\ref{subsec:fin_trs_settlement_intro} will generalize \emph{quadratic unconstrained binary optimization} (QUBO) problems, which have the form
\begin{equation}
    \argmin_{\myvec{x} \in \{0,1\}^I} C(\myvec x) = \argmin_{\myvec{x} \in \{0,1\}^I} \myvec x^T Q \myvec x
    \label{eq:QUBO}
\end{equation}
where $I$ is the number of binary entries of the vector $\myvec x$ and $Q$ is any real matrix, $Q \in \mathbb{R}^{I\times I}$.
Finding the vector $\myvec x$ minimizing equ.~\ref{eq:QUBO} for general $Q$ is NP hard (\cite{barahona_computational_1982}). Many combinatorial/graph problems such as MaxCut can be readily mapped to QUBO problems and a wide range of industrial applications is known. This includes training of machine learning models (\cite{date_qubo_2021}) %TODO: Read
and optimization tasks such as assignment problems (\cite{vikstal_applying_2020}), route optimization (\cite{harwood_formulating_2021}) or - the focus of this study - financial transaction settlement (\cite{braine_quantum_2019}).\footnote{For further applications see chapter 2 in \cite{punnen_quadratic_2022}.} This broad applicability and (by benchmarking existing classical solvers) \enquote{verifiable} advantage makes QUBO problems a great test-bed in the search for a useful quantum advantage.

The solution of equation~\ref{eq:QUBO} corresponds to the ground state of an Ising Hamiltonian $H_Q$ on $I$ qubits,
\begin{equation}
    H_Q = \frac{1}{4} \sum_{i,j = 1}^IQ_{ij}(1-\sigma_z^{i})(1-\sigma_z^{j}),
    \label{eq:QUBO_hamiltonian}
\end{equation}
with $\sigma_a^i$ referring to the Pauli operator $a$ on qubit $i$. This allows mapping a QUBO problem on $I$ variables to the problem of finding the ground state of a Hamiltonian on $I$ qubits. We extend equation~\ref{eq:QUBO_hamiltonian} in section~\ref{subsec:qubit_compression} and \ref{subsec:cost} by applying the qubit compression from~\cite{tan_qubit-efficient_2021} to reduce the number of qubits to $O(\text{log}I)$ at the cost of losing the formulation~\ref{eq:QUBO_hamiltonian} as the ground state of a Hermitian operator.
\emph{Quantum solvers} (QS) to the Ising Hamiltonian or more general ground-state problems have been studied extensively. A short overview is given in table~\ref{tab:Qsolver_Annealing_QAOA_VQA_assisted} and the following: %TODO: Maybe reference some review article for further details?

\paragraph{Annealing} Introduced as early as 1994 (\cite{finnila_quantum_1994}) and inspired by simulated annealing (\cite{kirkpatrick_optimization_1984}), \emph{quantum annealing} aims to find the ground state of $H_Q$ in~\ref{eq:QUBO_hamiltonian} by \emph{adiabatically} transforming $H_{\text{tot}}(t) = s(t) H_Q + (1-s(t))H_m$, with the \emph{mixing Hamiltonian} $H_m = \sum_{j=1}^I\sigma^x_j$, over a time span $t \in [0, t_{\text{end}}]$. Here, $s(t)$ is the annealing schedule, with $s(0)=0$ and $s(t_{\text{end}})=1$. Reading out the state of the annealing device at the end of this transformation yields candidates for the optimal solution $\myvec x$. Annealing devices are not guaranteed to find optimal solutions efficiently and can only implement a limited set of Hamiltonians, often restricted in their connectivity (resulting in limitations on the non-zero entries of $Q$) (\cite{yarkoni_quantum_2022}). Despite these limitations, general-purpose QUBO solvers based on hybrid classical computation and quantum annealing are commercially available with as many as 5000 (1 million) physical nodes (variables, $I$ in equ.~\ref{eq:QUBO}) for D-wave's Advantage$^\text{TM}$ annealer (\cite{inc_d-wave_nodate}).\footnote{Note that it is not public how exactly the quantum annealer is used as a subroutine in this hybrid computation.} 
\paragraph{QAOA} \emph{\textbf{Q}uantum \textbf{A}pproximate \textbf{O}ptimization \textbf{A}lgorithms}~(\cite{farhi_quantum_2014}) can be regarded as implementing a parametrized, \emph{trotterized} version of the quantum annealing schedule on gate-model based quantum computers. The parameterized $p$-layered circuit $e^{-iH_Q\beta_p}e^{-iH_m\gamma_p}...e^{-iH_Q\beta_1}e^{-iH_m\gamma_1}$ is applied to $\ket{+}^{\otimes I}$ and measured in the computational basis. This yields candidate vectors $\myvec x$ by identifying each binary variable with one qubit. The parameters $\{\beta_j, \gamma_j\}$ are classically optimized to minimize equ.~\ref{eq:QUBO} (minimize $\braket{H_Q}$). 
QAOA provides theoretical guarantees in its convergence to the exact solution for $p\to \infty$ given optimal parameters. Yet, implementing the evolution of $H_Q$ and reaching sufficient depth $p$ on NISQ devices can be infeasible in the case of many non-zero entries of $Q$.
\paragraph{Hardware-efficient VQA} In this work, we make use of general \emph{Variational Quantum Algorithms} to minimize a cost estimator (in the context of quantum chemistry often referred to as VQE, \emph{variational quantum eigensolver} (\cite{peruzzo_variational_2014}), and applied beyond Ising Hamiltonians). VQAs are general quantum circuit ansätze parameterized by classical parameters, hence QAOA can be seen as a special case of a VQA. We use the term \emph{hardware-efficient} VQA loosely for ansätze whose gates, number of qubits and circuit depth suit current NISQ devices. Analogously to QAOA, the parameters of the VQA circuit are optimized classically through evaluation of some classical cost function on the measured bit-vector. As we will see later, this cost function does not necessarily correspond to a Hermitian observable. 
VQAs are widely studied in the NISQ era beyond their application to combinatorial optimization problems (\cite{tilly_variational_2022, benedetti_parameterized_2019, mcclean_theory_2016}). 
Challenges, most notably vanishing gradients for expressive circuits (\cite{mcclean_barren_2018, arrasmith_effect_2021, wang_noise-induced_2021}) and remedies (\cite{liu_mitigating_2022, pesah_absence_2021, patti_entanglement_2021, grant_initialization_2019, dborin_matrix_2021, skolik_layerwise_2021, cerezo_cost_2021, schatzki_theoretical_2022, sack_avoiding_2022}) exist aplenty but will not play a central role in this paper. While the generality of VQAs allows for tailored hardware-efficient ansätze which are independent of the problem itself, this comes at the cost of losing the remaining theoretical guarantees of QAOA and adiabatic ground state computation. %TODO: We go as far as that our cost function can no longer be expressed as a hermitian observable; extensively studied, the exact form/training ... not the focus here

\paragraph{Non-VQA, quantum-assisted solvers} Other quantum algorithms for solving ground-state problems have been proposed in the literature. Examples include quantum-assisted algorithms, often inspired by methods such as Krylov subspace, imaginary time evolution or quantum phase estimation. For example, quantum computers are used to calculate overlaps between quantum states employed in a classical outer optimization loop (\cite{seki_quantum_2021, kyriienko_quantum_2020, bharti_iterative_2021, takeshita_increasing_2020, motta_determining_2020, huggins_non-orthogonal_2020, stair_multireference_2020}). Although some of these approaches are variational in the circuit ansatz, they do not directly correspond to the classical-quantum feedback loop in the VQA setting described above and are beyond the focus of this work.
%TODO: More detail in thesis, e.g. see Kishor paper

\medskip

\paragraph{Classical solvers} It should be noted at this point, that approaches using classical computing for tackling QUBO problems exist. Among them\footnote{A more extensive overview can be found in chapter 11 of \cite{punnen_quadratic_2022}.} are general purpose optimization suites such as Gurobi (\cite{gurobi_optimization_llc_gurobi_2023}), CPLEX (\cite{cplex_v12_2009}) or SCIP (\cite{achterberg_scip_2009}) as well as dedicated approximation algorithms such as simulated annealing (\cite{kirkpatrick_optimization_1983}), TABU search (\cite{wang_multilevel_2012})) or the relaxation-based Goemans and Williamson (\cite{goemans_improved_1995}) algorithm which guarantees an approximation ratio of at least 0.878\footnote{Which is optimal for any polynomial-time classical algorithm assuming the unique games conjecture (\cite{khot_optimal_2005}).} for Max-Cut problems.
Due to the NP-hardness of the general problem, all classical solvers are either approximations or have no polynomial worst-case runtime guarantees.

\medskip

\paragraph{NISQ-Limitations} Quantum computers are not expected to break NP-hardness (cf. \cite{bennett_strengths_1997, aaronson_limits_2008} and the lack of any polynomial-time quantum algorithm for an NP-hard problem) and it is often justified to regard quantum approaches to QUBO as heuristics \emph{hoped} to provide practical advantages rather than general purpose solvers with rigorous runtime \emph{and} optimality guarantees.
This makes benchmarking on relevant problem instances paramount in guiding the search for promising quantum algorithms. Yet, most NISQ-era quantum approaches suffer from a combination of
\begin{enumerate}
    \item \textbf{Problem size} limited by the number of available qubits
    \item Constraints on the \textbf{problem class} (connectivity of $Q$)
\end{enumerate}
making a direct application of QS to \emph{relevant} problem instances infeasible on NISQ-devices (\cite{guerreschi_qaoa_2019}). 
While 1.~is a consequence of the limited number of qubits available on NISQ devices, 2.~can be seen as a consequence of noise in the qubit and operations:
Computations become infeasible due to low coherence times and noisy gates paired with often deep circuits (e.g.~arising from the limited lattice-connectivity of devices based on superconducting qubits) upon decomposition into hardware-native gates. %QAOA - TODO: Cite
Constraints on the problem class can also arise from the fundamental design of the algorithm itself. %Annealing; Divide and Conquer, Relaxations...

How these limitations on problem size and class apply to the different QS is summarized in table~\ref{tab:Qsolver_Annealing_QAOA_VQA_assisted}. Various work has been done to address these challenges. Improved problem embeddings (\cite{date_efficiently_2019}), decomposition (\cite{mitarai_overhead_2021}), compilation and hardware-efficient ansätze are just some approaches to deal with connectivity issues. A wide variety of qubit-reduction methods has been suggested in the quantum optimization and quantum chemistry literature, see table~\ref{tab:qubit-reduction-methods}. 

Proposing a solution to the limitations in table~\ref{tab:Qsolver_Annealing_QAOA_VQA_assisted} and pushing the boundaries of QUBO problems accessible by QS is a central motivation for this work. We give a detailed description of our qubit-reduction method in section~\ref{subsec:qubit_compression}.

\begin{table*}[h]\centering \small
\ra{1.3}
\begin{tabular}{@{}lc>{\columncolor{lightgray!30!white}}c>{\columncolor{lightgray!30!white}}cc@{}}\toprule
    \multirow{3}{*}{\diagbox[width=3cm, height=1.5cm]{\normalsize NISQ \\ \small limitation}{\normalsize Algorithm}} & \multicolumn{4}{c}{\large Quantum solvers/heuristics (QS)}
    \\
\cmidrule{2-5} 
    & \normalsize{Annealing} 
    &\normalsize{QAOA} 
    &\setlength\extrarowheight{-3pt}\begin{tabular}{@{}c@{}}\tiny{(Hardware-efficient)}\\\normalsize{VQA / VQE}  \end{tabular}
    & \normalsize{Quantum assisted solver}
    \\ \midrule
    &&&&\\ %empy line
    \small{\#Variables $I$} 
    & {\setlength\extrarowheight{-3pt}\begin{tabular}{@{}c@{}}\small{\#qubits = $I$}\\ Realized experimentally: $I\leq 5000$\end{tabular}} 
    & \multicolumn{2}{c}{\cellcolor{lightgray!30!white}\setlength\extrarowheight{-3pt}\begin{tabular}{@{}c@{}}\small{\#qubits = $I$}\\ Realized experimentally: $I\leq 127$\end{tabular}} 
    & \small{\#qubits = $I$}\\
    &&&&\\ %empty line
    {\setlength\extrarowheight{-3pt}\begin{tabular}{@{}l@{}}\small{Connectivity: }\\$Q_{ij} \overset{!}{=}0$ for some $i\neq j$\end{tabular}} 
    & {\setlength\extrarowheight{-3pt}\begin{tabular}{@{}c@{}} restricted to device\\connectivity\end{tabular}} 
    & {\setlength\extrarowheight{-3pt}\begin{tabular}{@{}c@{}}problem vs. device \\ connectivity\\$\leftrightarrow$ circuit depth\end{tabular}} 
    & ansatz-dependent
    & {\setlength\extrarowheight{-3pt}\begin{tabular}{@{}c@{}}ansatz-dependent, \\ overlap calculation\end{tabular}} 
    \\
\bottomrule
\\
\small{References}
&\cite{inc_d-wave_nodate}
&\cellcolor{white}\cite{harrigan_quantum_2021, otterbach_unsupervised_2017, pelofske_quantum_nodate, zhu_multi-round_2022, shaydulin_qaoa_2023}
&\cellcolor{white}\cite{braine_quantum_2019, peruzzo_variational_2014, tilly_variational_2022, mcclean_theory_2016}
&\cite{seki_quantum_2021, kyriienko_quantum_2020, bharti_iterative_2021, takeshita_increasing_2020, motta_determining_2020, huggins_non-orthogonal_2020, stair_multireference_2020}\\ 
\end{tabular}
\caption{Overview of QS for ground state problems and NISQ limitations in the \enquote{vanilla} formulations of these approaches. Shown is the relation between physical qubits and the number of variables as well as the impact of the problem connectivity. Grey underlaid will be the focus in this work: Our qubit-efficient encoding makes use of VQA and we will benchmark our results against QAOA.
}
\label{tab:Qsolver_Annealing_QAOA_VQA_assisted}
\end{table*}

\begin{table*}[h]\centering \small
\ra{1.3}
\begin{tabular}{@{}lccc>{\columncolor{lightgray!30!white}}ccc@{}}%\toprule
\multirow{3}{*}{\diagbox[width=2.5cm, height=1.7cm]{\normalsize Features}{\normalsize Reduction \\ {\small method}}} 
& &\multicolumn{2}{c}{\normalsize Divide \& Conquer} &\cellcolor{white} & &
\\
\cmidrule{3-4} 
& \begin{tabular}{@{}c@{}}\normalsize Problem \\ reduction\end{tabular} & {\normalsize Objective} & {\normalsize Quantum circuit} & \begin{tabular}{@{}c@{}}\normalsize Binary encoding \\ \normalsize qubit compression\end{tabular} & \begin{tabular}{@{}c@{}}\normalsize Qubit efficient \\ \normalsize relaxation\end{tabular} & {\normalsize Qubit-reuse}
\\
\bottomrule
&&&&&&\\
\normalsize{Description}
& \begin{tabular}{@{}c@{}} Pre-processing to \\ simplify problem\end{tabular}
& \multicolumn{2}{c}{\begin{tabular}{@{}c@{}} Split in smaller problems \\ and recombine\end{tabular}}
& \begin{tabular}{@{}c@{}} Replace one-hot enc. \\ by binary enc.\end{tabular}
& \begin{tabular}{@{}c@{}} Map to \\ relaxation\end{tabular}
& \begin{tabular}{@{}c@{}} Mid-circuit \\ msm. \& reset \end{tabular}
\\
&&&&&&\\
\normalsize{QS applicable} & All & All & All gate-based & VQA (QAOA) & (VQA) & VQA 
\\
&&&&&&\\
\begin{tabular}{@{}l@{}}\normalsize Qubit \\ \normalsize reduction\end{tabular} 
& \begin{tabular}{@{}c@{}} problem-\\ dependent \end{tabular} % TODO: verify, read
& \multicolumn{2}{c}{$\times \frac{1}{c}$ or set to constant}
& exponential
& \begin{tabular}{@{}c@{}} $\times \frac{1}{c}$ / \\ exponential \end{tabular}
& set to constant
\\
&&&&&&\\
{\normalsize Caveats}
& \begin{tabular}{@{}c@{}} No guaranteed \\ reduction\end{tabular}
& \multicolumn{2}{c}{\begin{tabular}{@{}c@{}} Divide/Conquer impacted\\ by connectivity, circuit overhead\end{tabular}}
& msm. overhead %TODO: Potential depth overhead?
& \begin{tabular}{@{}c@{}} $c$ $\leftrightarrow$ problem \\ connectivity / \\ connectivity \& \\ msm. overhead\end{tabular}
& \begin{tabular}{@{}c@{}} Depth \& msm. \\ overhead \end{tabular}
\\
\bottomrule
\normalsize{References}
& \cite{glover_logical_2018, lewis_quadratic_2017}
& \cite{harrigan_quantum_2021, otterbach_unsupervised_2017, pelofske_quantum_nodate, shaydulin_qaoa_2023, zhu_multi-round_2022, fujii_deep_2022}
&  {\setlength\extrarowheight{-3pt}\begin{tabular}{@{}r@{}}
{\small circ. cutting:} \cite{bechtold_investigating_2023}\\ 
{\small clustering:} \cite{peng_simulating_2020}\\
{\small causal cones:} \cite{amaro_filtering_2022}
\end{tabular}}
& {\setlength\extrarowheight{-3pt}\begin{tabular}{@{}c@{}}here:~\cite{tan_qubit-efficient_2021}\\ other: \cite{shee_qubit-efficient_2022, glos_space-efficient_2022, fuchs_efficient_2021}
\end{tabular}} 
&  {\setlength\extrarowheight{-3pt}\begin{tabular}{@{}c@{}}\cite{fuller_approximate_2021, teramoto_quantum-relaxation_2023} / \\ \cite{rancic_noisy_2023, winderl_comparative_2022} \end{tabular}}
& \cite{liu_variational_2019}
\\ 
\end{tabular}
\caption{Overview of different methods proposed in the literature to reduce the number of qubits needed in ground state problems. Not all methods are directly applicable to arbitrary QUBO Hamiltonians. Different columns correspond to a rough qualitative classification by the author, grey underlaid what this work is based on. Quantum solvers in row \enquote{QS applicable} are set in braces if they can only be used with some of the references in that column or if further restrictions apply. In the row \enquote{Caveats} we summarize limitations or additional overhead incurred by the methods: evaluation of more circuits, more shots or Pauli measurements, more connectivity-demanding circuits (for shallow decompositions) or deeper circuits. This table focuses on methods that can be used to reduce the number of qubits, applicable in the framework of at least one of the QS in table~\ref{tab:Qsolver_Annealing_QAOA_VQA_assisted}. Qubit-efficient approaches to QUBO problems exist beyond that, e.g.~\cite{dunjko_computational_2018}.
}
\label{tab:qubit-reduction-methods}
\end{table*}

\subsection{Financial Transaction Settlement}\label{subsec:fin_trs_settlement_intro}

We refer to the \emph{transaction settlement problem} as a computational task, consisting of parties $\{1,\ldots,K\}$ with balances $\{\myvec{bal}_k\}$ submitting trades $\{1,\ldots,I\}$ to a clearing house. The task faced by the clearing house is to determine the maximal set of transactions that can be executed without any party $k$ falling below its credit limit $\myvec{lim}_k$. An overview of the notation is given in table~\ref{tab:notation} and a graph representation of a transaction settlement problem with parties as nodes and transactions as edges is shown in figure~\ref{fig:TRS_settlement_example}.

\begin{table}[t]
\centering
\begin{tabular}{@{}l r c l r@{}}
\emph{Financial exchange data} & \emph{Example} & \phantom{a} & \emph{Inputs} & \emph{Math notation} \\ \cmidrule{1-2}\cmidrule{4-5} 
\multicolumn{2}{c}{\cellcolor{gray!10} SETTLEMENT\_INSTRUCTION } & & transactions & $i \in \{1, \ldots, I\}$ \\ 
PARTICIPANT & 205 & & \multirow{2}{*}{parties} & \multirow{2}{*}{$k \in \{1,\ldots, K\}$} \\
COUNTERPARTY & 270 \\
INSTRUMENT & nc157 & & currencies/securities & $j \in \{1, \ldots, J\}$ \\
QUANTITY & 1300 & & transaction value (security) & \multirow{2}{*}{$\myvec{v}_{ik} \in \mathbb{R}^J$} \\
CONSIDERATION & 441.85 & & transaction value (currency) \\
\multicolumn{2}{c}{\emph{transaction weights set to one}} & & transaction weights & $\myvec{w} \in \mathbb{R}^I_{\ge0}$ \\
\multicolumn{2}{c}{\emph{credit limit absorbed in balance}} & & credit limits & $\myvec{lim}_k \in \mathbb{R}^J$ \\
\multicolumn{2}{c}{\emph{balance generated}}  & & balance & $\myvec{bal}_k \in \mathbb{R}^J$ \\
SETTLEMENT\_TYPE & DVP \\
& & & \emph{Decision Variables} & \\ \cmidrule{4-4}
& & & settle transaction or not & $\myvec{x} \in \{0,1\}^I$\\
& & & slack variables & $\myvec{s}_k \in \mathbb{R}^J_{\ge0}$\\
\end{tabular}
\caption{The left part of this table shows the format of the settlement instructions data samples obtained from a regulated financial exchange. Each settlement instruction consists of a sending (PARTICIPANT) and receiving (COUNTERPARTY) party, specifies the security transacted (SECURITY), the quantity traded (QUANTITY) as well as the countervalue (CONSIDERATION) in Singapore dollars. For SETTLEMENT\_TYPE \emph{Delivery Vs Payment} (DVP), a security is traded against a cash settlement. Here, the only alternative is \emph{Free Of Payment} (FOP), in which case only the security is transferred from the seller to the buyer. The right part shows the corresponding problem inputs and their mathematical notation (adapted from~\cite{braine_quantum_2019}).}
\label{tab:notation}
\end{table}

In the case when not all parties have sufficient balances to meet all settlement instructions they are involved with, finding this maximal set can be difficult with classical computing resources. Intuitively, this is because a party's ability to serve outgoing transactions may depend on its incoming transactions, creating many interdependencies between different parties (cf. figure~\ref{fig:TRS_settlement_example}). 
%Practically, this can lead to challenges given high transaction volume during times of cash shortages. 
Whilst classical technology is sufficient for current transaction volumes, increases could be expected from more securities in emerging markets and digital tokens, for example. Furthermore, cash shortages make optimization more challenging as it becomes harder to allocate funds optimally among various settlement obligations, determining the priority of different trades and parties and an increased risk of settlement failures. Quantum technologies offer a potential path to mitigate these issues.

Transactions can be conducted both in currencies and securities such as equity and bonds (hence $\myvec{bal}_k$ and $\myvec{lim}_k$ are vector-valued). 
A financial exchange may, for example, handle as many as one million trades involving 500-600 different securities by up to 100 financial institutions (parties) per day. 
% For transactions through the Central Depository (CDP) of the Singapore Exchange (SGX), the parties are around 70 financial institutions, which, representing their constituents, jointly engage in up to one million trades with 500-600 different securities per day. 

\begin{figure}[h!]
    \centering
    \includegraphics[width=\linewidth]{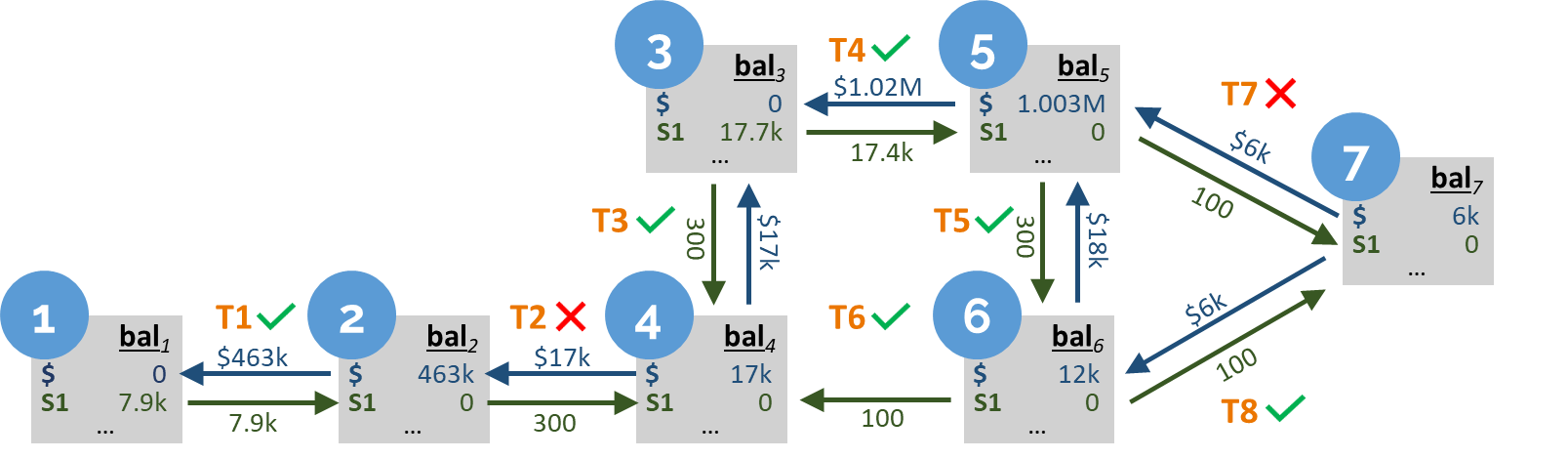}
    \caption{Example of a transaction settlement problem constructed using data provided from a regulated financial exchange. Eight transactions (arrows) between seven parties (numbered squares) are depicted. Each party has initial balances for cash (\$) and different securities (S1, \ldots). The optimal solution which settles the maximal amount of transactions without violating balance constraints is indicated through \textcolor{green}{\checkmark} and \textcolor{red}{$\times$}. The solution is not unique, another optimal solution would settle T2 instead of T3. Even for a problem of only eight transactions, non-trivial dependencies between different transactions exist: For example, T4 can only be settled if T5 is settled which in turn requires T8.}%maybe mention FOP, DVP
    \label{fig:TRS_settlement_example}
\end{figure}

\paragraph{QUBO formulation}

To obtain a QUBO formulation of the transaction settlement problem, we follow a slightly simplified version of \cite{braine_quantum_2019}.
% The mathematical formulation\footnote{This does not exactly correspond to the actual settlement process at SGX which makes use of dynamic \emph{cash calls} to individual parties.} 
The mathematical formulation as a binary optimization problem with inequality constraints looks as follows:
\begin{align}
    \argmax_{\myvec x}~\myvec w^T \myvec x \label{eq:max_wx}\\%~\myvec x^T A \myvec x &\\
    \textrm{subject to \emph{balance constraints}: }\sum_i x_i \myvec v_{ik} + \myvec{bal}_k - \myvec{lim}_k &\geq 0 ~\forall k \in \{1,\ldots, K\}\label{eq:constraints}
\end{align}
where for generality, a weight $w_i$ is given for each transaction $i$ and $\myvec v_{ik}$ represents the balance changes (in cash and securities) for party $k$ in transaction $i$. In practice one might choose $w_i$ proportional to the transaction value of transaction $i$, for simplicity, we will always choose $w_i \equiv 1$.

The solution of this linear constrained binary optimization problem equals the solution of the mixed binary optimization (MBO)
\begin{equation}
    \argmax_{\substack{\myvec x \\ \myvec s_k \geq 0}} \left[ \myvec w^T \myvec x  - \lambda \sum_{k=1}^K \left(\sum_{i=1}^I x_i \myvec v_{ik} + \myvec{bal}_k - \myvec{lim}_k - \myvec s_k\right)^2\right]
    \label{eq:trs_qubo1}
\end{equation}
for large $\lambda$, referred to as the \emph{slack parameter}. Here, continuous \emph{slack variables} $\myvec s_k \geq 0$ (element-wise) were introduced to capture the inequality constraints as penalty terms in the objective. Note, that by approximating $\myvec s_k$ as a binary representation, i.e.~$(\myvec s_k)_i \approx \sum_{l = -L_1}^{l = L_2}\tilde{b}_{kil}2^l, \tilde{b}_{kil}\in \{0,1\}$, the problem could further be transformed into a QUBO problem without any constraints. For large enough $\lambda$, any violation of the constraints~\ref{eq:constraints} will result in a less-than-optimal solution vector.
Equation~\ref{eq:trs_qubo1} can directly be rewritten:
\begin{align}
    &\argmin_{\substack{\myvec x \\ \myvec s_k \geq 0}} \myvec x^T A \myvec x + \myvec b(\myvec s)^T \myvec x + c(\myvec s) \label{eq:trs_settlment_as_qubo_w_slack}\\ 
    \textrm{where: }  A &:= -\lambda V V^T,~\textrm{with }V\in \mathbb{R}^{I\times KJ}, V_{il} := (\myvec v_{ik(l)})_{j(l)} \left[\text{with }k(l) := \lceil \frac{l}{J}\rceil, j(l) := (l\text{ mod }J)+1\right]\label{eq:QUBO_trs_A}\\
    b_i(\myvec s) &:= w_i - 2\lambda \sum_{k=1}^K \left [ \myvec{bal}_k-\myvec{lim}_k-\myvec s_k \right ] \myvec v_{ik}\label{eq:QUBO_trs_b}\\
    c(\myvec s) & := - \lambda \sum_{k=1}^K \left [ \myvec{bal}_k -\myvec{lim}_k - \myvec s_k \right ]^2
\end{align}
For fixed $\myvec s$, the minimization over binary $\myvec x$ is a QUBO problem as in equ.~\ref{eq:QUBO} with $Q = A + \textrm{Diag}[\myvec b(\myvec s)]$\footnote{Here we use: $\myvec x^T \text{Diag}[\myvec b]\myvec x = \sum_{i = 1}^I b_i x_i^2 \overset{x_i^2 = x_i}{=} \myvec b^T\myvec x$}, where $\textrm{Diag}[\myvec b]$ is the matrix with the vector $\myvec b$ on the diagonal and zeros elsewhere.

\subsection{Contribution of this work}

The structure of this work is as follows: In section \ref{subsec:transaction_settlement_methods}, we outline the construction of the transaction settlement instances from data provided by a regulated financial exchange. In section \ref{subsec:qubit_compression}, we use the encoding scheme listed in \cite{tan_qubit-efficient_2021} to reduce the number of qubits required and extend the ideas to include a new variational cost objective and ansatz. %, allowing for easier calculation of the gradient and sampling probabilities. 
Section \ref{subsec:simulation_results} presents the results, using the transaction settlement instances generated as a testbed for comparing our methods with QAOA and exploring different encodings. Section \ref{subsec:hw_ionq_ibmq} offers comparisons in the solutions obtained when using the exponential qubit reduction to tackle problems with 128 transactions on real quantum hardware by IonQ and IBM Quantum, exceeding previous results using quantum hardware \cite{braine_quantum_2019}. We present some analysis regarding the results obtained, before concluding with section \ref{04_Conclusion}. 
To the best of our knowledge, this is the first work that tackles mixed binary optimization problems with a qubit-efficient approach on a quantum computer.

%%%

\FloatBarrier

\section{Methodology}\label{sec:methodology}
%\input{02_Methodology}
%%%

To give an overview of the methodology, we first detail how the financial settlement problem can be constructed using data provided by a regulated financial exchange (\ref{subsec:transaction_settlement_methods}). We will then describe the quantum algorithm (fig~\ref{fig:PQC_overview}) consisting of a heuristic to exponentially reduce the number of qubits (section~\ref{subsec:qubit_compression}), a cost-objective (section~\ref{subsec:cost}), a parameterized quantum circuit to generate solution bit-vectors (section~\ref{subsec:vqa_ansatz}) and finally the classical optimization (section~\ref{subsec:opt}).

%save circuit1
\sbox0{\begin{quantikz}
\lstick[wires=2]{Ancilla} 
&\lstick{$\ket{0}$} &\qw &\gategroup[wires=4,steps=2,style={inner sep=6pt}]{} & & &\meter{Z}\\
& \vdots & &  & & &\vdots\\
\lstick[wires=2]{Register}
& \vdots & & 
%\rotatebox{90}{\parbox{1cm}{\hspace{1.5cm}PQC$(\myvec \theta)$}} 
&  
%\rotatebox{90}{\parbox{1cm}{\hspace{1.5cm}e.g. register-preserving or standard ansatz}} 
& &\vdots\\
& \lstick{$\ket{0}$} & \qw & & & &\meter{}
\end{quantikz}}

\begin{figure}[h!]
    \centering
    \begin{subfigure}[t]{0.8\textwidth}
        \begin{tikzpicture}[
stepnode/.style={draw, thick, rounded corners=5pt, minimum width=2cm, minimum height=1cm, align=center},
scale=1
            ]
		%\coordinate (h1) at (0,0);
		
		% Node containing the PQC box
		\node[draw=gray, header = $\text{P{\small arameterized} Q{\small uantum} C{\small ircuit}}$%_{\myvec \theta^{(n)}}$
            ] (pqc) at (0,0) {\usebox0};
		
		\node[rotate=90, font = {\large}] at ($(0.8,-0.1)+(pqc)$) {$U(\myvec \theta^{(n)})$};
		
		% Node containing the measurement string
		\node[stepnode, right=2 cm of pqc, yshift = 0cm] (measstring) {(i) Collect measurements \\ $\myvec m = -1, 1, \ldots 1$};
		
		% Node containing the measurement outcomes
		\node[right=1 cm of measstring] (measms) {$\mathcal{M}=\left\{\myvec m_1, \ldots \myvec m_{n_{\text{shots}}}\right\}$};

            \node[stepnode, below=1cm of measms] (calc_cost) {(ii) Calculate cost function};
  
		\node[below=1cm of calc_cost, font = {\large}] (exp_cost) {$\hat{C}(\myvec \theta) \simeq\mathbb{E}_{\myvec \theta}[C]$};

		\node[stepnode, align=center, left=2cm of exp_cost] (class_opt) {(iii) Classical \\ optimization};

            \node[left=1cm of class_opt, font = {\large}] (theta) {$\myvec \theta^{n+1}$};

		\draw[->] let \p1 = (pqc.east) in (\x1,0) to (measstring.west);
            \draw[->] (measstring) to (measms.west);
            \draw[->] (measms) to (calc_cost);
            \draw[->] (calc_cost) to (exp_cost);
            \draw[->] (exp_cost) to (class_opt);
            \draw[->] (class_opt) to (theta.east);
            \draw[->] (theta.west) -| ($(pqc)+(0.75,-3)$) -- ($(pqc)+(0.75,-2.5)$);
		
	\end{tikzpicture}
        \label{sfig:PQC}
    \end{subfigure}
    \caption{Workflow of a quantum-classical hybrid optimization algorithm. The algorithm involves (i) collecting measurements from the parameterized quantum circuit (PQC) with parameters $\myvec \theta^{(n)}$, (ii) calculating the cost function (and potentially its gradients) from the measurement outcomes, and (iii) optimizing the parameters using classical optimization techniques. The steps are further detailed in sections~\ref{subsec:cost}, \ref{subsec:opt} and \ref{subsec:vqa_ansatz}}
    \label{fig:PQC_overview}
\end{figure}

\subsection{Problem instance - Financial Transaction Settlement}\label{subsec:transaction_settlement_methods}

\paragraph{Dataset} This work uses anonymized transaction data to generate settlement problems of arbitrary size $I$.
The format of the settlement instructions made available for this purpose can be seen in table~\ref{tab:notation}. To generate a problem instance we proceeded as follows:
\begin{enumerate}
    \item Fix the number of transactions ($I$), number of parties ($K$) and an integer $R \leq I$.
    \item Choose $I-R$ random transactions from the dataset and randomly assign them to parties (sender and recipient). A single \emph{transaction} consists of a security being transferred from one party to the other and (if delivery vs payment) a cash transaction in the other direction.\footnote{To avoid confusion, we will refer to \emph{transfers} instead when only considering a single security or cash transfer.}
    \item As our dataset does not provide account balances or credit limits, we set $\myvec{lim}_k = \myvec 0$ and choose minimal non-negative balances $\myvec{bal}_k$ for each party $k$ such that all previously chosen transactions can be jointly executed without any party's balance becoming negative. The balances hence depend on the first $I-R$ transactions chosen. This choice of $\myvec{bal}_k$ is made by considering the net balance-change for each party if all transactions were conducted.
    \item Choose additional $R$ random transactions from the dataset and randomly assign them to parties (without changing the balances assigned in the previous step).
\end{enumerate}
This procedure ensures the optimal solution contains at least $I-R$ valid transactions. Due to the minimal choice of the balances, most of the $R$ transactions chosen last are expected to be invalid in the optimal solution.

%sample graph (vgl. powerpoint to explain how generated/ maybe also use for visualization)
    % ~8 TRS
    % DVP and DOP
    % 1 security
    % Show which transactions are valid and which invalid
%mention that it is complicated to solve (in visualization sample graph)

To mitigate large differences in transaction volumes between different parties ($\text{S}\$~10-10^6$) as well as different units (cash and different securities) in the data samples, %resulting in large variances in the elements of matrix $\myvec{v}_{ij}$ in table~\ref{tab:notation}, 
we renormalize each party's balance, credit limit and transaction volume:
\begin{align}
    \forall \text{parties } k\in \{1,\ldots, K\}, \forall \text{securities } j\in \{1,\ldots, J\}: \\
    \gamma_{kj} :=& \text{mean}\left(\{|(v_{ik})_j|\}_{i=1}^I\setminus\{0\}\right)\\
    (v_{ik})_j \mapsto& \frac{(v_{ik})_j}{\gamma_{kj}}~\forall i\in \{1,\ldots, I\}\\
    (\myvec{lim}_k)_j \mapsto& \frac{(\myvec{lim}_k)_j}{\gamma_{kj}}\\
    (\myvec{bal}_k)_j \mapsto& \frac{(\myvec{bal}_k)_j}{\gamma_{kj}}
\end{align}
which, using equ~\ref{eq:max_wx} and~\ref{eq:constraints}, does not affect the optimal solution.

We show in lemma~\ref{lemma:connectivity} (appendix~\ref{app:proofs}), that the connectivity for the QUBO matrix of the transaction settlement is bounded by twice the average number of transactions per party, $\frac{4I}{K}$, plus a variance term which vanishes for d-regular graphs.

% Furthermore, for any fixed $\myvec x$, the optimal slack variables can be found by
% \begin{equation}
%     \myvec s_k(\myvec x) = \max \left(\myvec 0, -\myvec{lim}_k + \sum_{i=1}^I x_i \myvec v_{ik}+\myvec{bal}_k\right)
%     \label{eq:slack_through_x}
% \end{equation}
% where $\max(\circ, \circ)$ is to be taken element-wise. 
% Using equ.~\ref{eq:slack_through_x} removes the need for separate optimization over the slack variables. 

\subsection{Qubit-efficient Mapping}\label{subsec:qubit_compression}

\paragraph{Mapping QUBO to VQA} 
%Having shown how the transaction settlement problem can be phrased as a QUBO problem with slack variables, we now outline how to map a QUBO problem to the variational optimization of a parameterized quantum circuit (PQC) more generally than equation~\ref{eq:QUBO_hamiltonian}.
The underlying idea to solve QUBO problems with VQAs is to use the PQC to generate bit-vectors $\myvec x$. The parameters $\myvec \theta$ of the PQC are then tuned such that the generated $\myvec x$ are likely to approximately minimize equation~\ref{eq:QUBO}. 
Formally:
\begin{equation}\label{eq:QUBO_as_expectation_min}
    \min_{\myvec \theta} \mathbb{E}_{\myvec \theta}[C] := \min_{\myvec \theta} \sum_{\myvec{x} \in \{0,1\}^I} \text{Prob}_{\myvec \theta}(\myvec x) \myvec x^T Q \myvec x = \min_{\myvec \theta}\sum_{i \neq j}p_{ij}(\myvec \theta) Q_{ij} + \sum_i p_i(\myvec \theta) Q_{ii}
\end{equation}
where the right hand side only depends on the marginals $p_{ij}(\myvec \theta) := \text{Prob}_{\myvec \theta}(x_i = 1, x_j = 1)$ and $p_i(\myvec \theta) := \text{Prob}_{\myvec \theta}(x_i = 1)$. 

Instead of searching in a discrete space, this turns the problem into the optimization of the continuous parameters of a generator of bit-vectors. This appears similar but is different from relaxation-based approaches, which often replace binary variables through continuous ones to obtain a more tractable optimization problem whose solutions are projected back to a binary format: Here, the model (the PQC) directly generates bit-vectors $\myvec x$ and if it is expressive enough in the distributions $\text{Prob}_{\myvec \theta}(\myvec x)$ it parameterizes, then an optimal $\myvec \theta$ will generate an optimal bit-vector $\myvec x$ deterministically.

The steps to solve this minimization are shown in figure~\ref{fig:PQC_overview}. %We specify in later sections how we generate the bit-vectors (\ref{subsec:qubit_compression}) and how the expectation value is minimized (\ref{subsec:cost}, \ref{subsec:opt}). 
%In summary, as a generator a PQC is used, the expectation value in equation~\ref{eq:QUBO_as_expectation_min} is approximated through a cost function on finite-shot measurements of the PQC and the parameters are optimized classically.

\smallskip

The core of mapping a QUBO problem to a variational minimization problem therefore consists of specifying how to generate bit-vectors $\myvec x$ with a quantum circuit.
In standard QAOA or variational approaches, this mapping is straightforward (equation~\ref{eq:QUBO_hamiltonian}): As the number of qubits $n_q$ equals the number of variables $I$, we simply measure in the computational basis (Pauli-Z) and associate the outcome 1 (-1) of qubit $q_i$ with the bit $x_i$ equal 0 (1). We use a different mapping, generalizing the qubit-efficient approach in~\cite{tan_qubit-efficient_2021}:
\paragraph{Qubit-efficient binary encoding}

We use $n_a$ qubits (\emph{ancillas}) to represent a subset of $n_a$ bits and $n_r$ qubits (\emph{register}) to provide an address labelling this subset. Compared to the approach for standard QAOA, we (partly) encode the bit-position in a binary encoded number instead of the one-hot encoded qubit-position. Hence the name \enquote{binary encoding} in table~\ref{tab:qubit-reduction-methods}.

Formally, consider a covering $\mathcal{A} = \{A_1, ..., A_{N_r}\},~N_r = 2^{n_r}$ of the set of bit-positions $B = \{1,\ldots,I\}$ with $|A_i| \in \{0, n_a\}~\forall i$ and each $A_i$ ordered. Regard the quantum state $\ket{b_1\ldots b_{n_a}}_{\text{anc}}\otimes \ket{r}_{\text{reg}}$ as corresponding to bit $A_r[l]$ (the $l^\text{th}$ entry of $A_r$) equal to $b_l~\forall l \in \{1,\ldots, n_a\}$. 
This quantum state fixes only the subset $A_r$ of the bits. In general, we interpret
\begin{itemize}
    \item Superpositions in the ancilla state $\leftrightarrow$ probabilistic sampling in the computational basis of different bit-vectors $b_1\ldots b_{n_a}$
    \item Superpositions in the register state $\leftrightarrow$ probabilistic sampling in the computational basis of different bit-sets $A_r$
\end{itemize}
resulting in the general form
%Generalizing this description, we represent a bit-vector $\myvec x$ in a quantum state of the form
\begin{equation}
    \label{eq:quantum_state_qubit_compression}
    \ket{\psi({\myvec \theta})} = \sum_{r=1}^{N_r} \beta_r(\myvec\theta) \left[a_r^{00...0}(\myvec\theta) \ket{00 ... 0}_{\text{anc}} + a_r^{00...1}(\myvec\theta) \ket{00 ... 1}_{\text{anc}} + ... \right]\otimes\ket{r}_{\text{reg}},
\end{equation}
%parameterized by $\myvec \theta$, the set of variational parameters of the PQC.
where we already indicated that the PQC parameterized by $\myvec \theta$ determines the values of the register amplitudes ($\beta_r(\myvec\theta)$) and normalized bit-vector amplitudes ($a_r^{b_1\ldots b_{n_a}}$). %A schematic explanation of this approach can be found in~\cite{tan_qubit-efficient_2021}.
We achieve an exponential compression from $n_q = I$ qubits to $n_q = n_a + \lceil\text{log}_2(I/n_a)\rceil$ in the case of a \emph{disjoint covering} (also \emph{perfect matching}). In general, a covering consisting of $|\mathcal{A}| = R$ bit-sets requires $n_a + \lceil\text{log}_2(R)\rceil$ qubits. 

For the simplest case of the \emph{minimal encoding}, defined by $n_a = 1$, each subset consists of just one binary variable with a total of $n_r = I$ subsets. A quantum state in this encoding can be written as
\begin{equation}
    \label{eq:quantum_state_min_enc}
    \ket{\psi(\myvec \theta)}_{n_a = 1} = \sum_{r=1}^{I} \beta_r(\myvec \theta) \left[a_r^0(\myvec \theta) \ket{0}_{\text{anc}} +  a_r^1(\myvec \theta) \ket{1}_{\text{anc}} \right] \otimes \ket{r}_{\text{anc}}, 
\end{equation}
and represents the bit-vector $\myvec x$ if $|a_r^i| = \delta_{ix_r}$. The total number of qubits required is $n_q = 1 + \lceil\text{log}_2(I)\rceil$.

The large decrease in qubits comes with a few drawbacks:
\begin{itemize}
    \item A single measurement in the computational basis only specifies a subset $A_r$ of the bit-positions, and it is not immediate how to sample full bit-vectors $\myvec x$.
    \item Even arbitrary state-preparation through the PQC may only allow limited distributions on the vector x. Consider for example the minimal encoding: %defined by $n_a = 1$. 
    It generates bit-vectors distributed as $\text{Prob}_{\myvec \theta}(\myvec x) = \text{Prob}_{\myvec \theta}^1(x_1)\cdot\ldots\cdot\text{Prob}_{\myvec \theta}^I(x_I) = \prod_r^{I} |a_r^{1}|^2$, where $a_r^{1}$ are the coefficients of the ancilla qubits in equation~\ref{eq:quantum_state_min_enc}, corresponding to a \emph{mean-field} approximation (\cite{veszeli_mean_2021}).
 %Every measurement only samples a single bit. Due to the independence of different measurements, any PQC and sampling procedure will generate bit-vectors distributed as $\text{Prob}_{\myvec \theta}(\myvec x) = \text{Prob}_{\myvec \theta}^1(x_1)\cdot\ldots\cdot\text{Prob}_{\myvec \theta}^I(x_I)$, corresponding to a \emph{mean-field} approximation (\cite{veszeli_mean_2021}) which we can overcome by increasing the number of ancilla qubits used.     %However, target distribution of solution vector is independent (as deterministic) 
    \item Different from QAOA, the cost objective may no longer correspond to the expectation of a Hermitian observable. This issue and a resolution are discussed in appendix~\ref{app:cost_as_observable}.
\end{itemize}

\paragraph{Sampling algorithm} 
%In summary, the proposed exponential qubit compression reinterprets the measurement in the computational basis as sampling bit-values $b_1\ldots b_{n_a}$ on subsets $A_r$ of the entries of $\myvec x$. What remains is to stitch these individual solutions together to obtain a sampling algorithm for all of $\myvec x$ (similar to the conquer step in divide and conquer approaches).

We will adopt a simple \emph{greedy} approach here, which fixes entries of $\myvec x$ as they are sampled throughout multiple measurements and concludes once every entry is sampled. We furthermore determine the covering $\mathcal{A}$ through a k-means-inspired clustering on the graph representation of the problem. %More details on this and suitable coverings $\mathcal{A}$ can be found in the appendix (\ref{app:sampling}).
Given uniform $\beta_r(\myvec \theta)$ and a disjoint covering $\mathcal{A}$, the probability of any one register not being sampled after $n_{\text{shots}}$ measurements is exponentially small, bounded by $\text{exp}(-\frac{n_{\text{shots}}}{N_r})$. %TODO: Also specify average number of samples (See Ipad, Notes Paper draft)
In practice, we sample multiple bit-vectors $\myvec x$ to find candidates for the optimal solution. This allows us to reduce the average number of measurements by reusing measurement outcomes (in particular those that were sampled multiple times before conclusion of the algorithm). Nonetheless, the qubit compression comes at the cost of significant sampling overhead.% compared to the standard encoding where each shot generates a complete bit-vector. At least $\frac{I}{n_a}$ samples are needed to generate a single bit-vectors for the qubit-efficient encoding.

%It cannot be excluded, that the PQC might prepare states with $\beta_r(\myvec \theta) \simeq 0$ in which case the sampling algorithm may never terminate. This can be avoided by choosing a suitable variational ansatz (cf.~section~\ref{subsec:vqa_ansatz}), specifying a maximal number of measurements before randomly choosing all non-sampled $\myvec x$-entries resulting in an implicit cost penalty or adding said penalty explicitly to the cost function (cf.~section~\ref{subsec:cost}).

\subsection{Cost function}\label{subsec:cost}
%TODO: For paper -> only give intuition, formulas to appendix

Having specified how to generate bit-vectors from measurement samples of the PQC fully determines $\text{Prob}_{\myvec \theta}(\myvec x)$ and hence the minimization problem in equation~\ref{eq:QUBO_as_expectation_min}. In practice, we cannot access $\text{Prob}_{\myvec \theta}(\myvec x)$ directly, but rather obtain finite-shot measurements on the state prepared by the PQC. Hence, we need to specify an estimator of the expected cost $\mathbb{E}_{\myvec \theta}[C]$. We will refer to this estimator as $\hat{C}(\myvec \theta)$.

For the explicit formulation of $\hat{C}(\myvec \theta)$, we make use of the formulation of $\mathbb{E}_{\myvec \theta}[C]$ in terms of marginal probabilities $p_{ij}(\myvec \theta) = \text{Prob}_{\myvec \theta}(x_i = 1, x_j = 1)$ and $p_i(\myvec \theta) = \text{Prob}_{\myvec \theta}(x_i = 1)$ in equation~\ref{eq:QUBO_as_expectation_min}.
For the latter, we use heuristic estimators $\hat{p}_i(\myvec \theta)$ and $\hat{p}_{ij}(\myvec \theta)$ which are constructed by counting the number of times a certain bit (or pair of bits) was sampled with value equal to one.

The exact formulas for these estimators are given in appendix~\ref{subapp:cost_estimator_formula} with a derivation for disjoint coverings in appendix~\ref{subapp:derivation_disj_covering_estimators}. Intuitively,
\begin{equation}
    \hat{p}_{ij}(\myvec \theta) 
    = (1-\hat{\mu}_{ij})\hat{q}_{ij}+\hat{\mu}_{ij}\hat{p}_i\hat{p}_j
\end{equation}
where $0\leq \hat{\mu}_{ij} \leq 1$ and the asymptotic convergence
\begin{align}
    \hat{p}_i(\myvec \theta) &\xrightarrow{n_{\text{shots}}\to \infty} \frac{\sum_{\substack{r = 1 \\ i \in A_r}}^{N_r} |\beta_r(\myvec\theta)|^2\sum_{\substack{b_k \in \{0,1\}\\b_{l_r(i)}=1}} |a_r^{b_1\ldots b_{n_a}}|^2}{\sum_{\substack{r = 1 \\ i \in A_r}}^{N_r} |\beta_r(\myvec\theta)|^2}  \label{eq:convergence_prob_est_1}\\
    \hat{q}_{ij}(\myvec \theta)&\xrightarrow{n_{\text{shots}}\to \infty} \frac{\sum_{\substack{r = 1 \\ i,j \in A_r}}^{N_r} |\beta_r(\myvec\theta)|^2\sum_{\substack{b_k \in \{0,1\}\\b_{l_r(i)}=b_{l_r(j)}=1}} |a_r^{b_1\ldots b_{n_a}}|^2}{\sum_{\substack{r = 1 \\ i,j \in A_r}}^{N_r} |\beta_r(\myvec\theta)|^2} \label{eq:convergence_prob_est_2}
\end{align}
motivates the expressions $\hat{p}_i$ and $\hat{p}_{ij}$.

Following equation~\ref{eq:QUBO_as_expectation_min}, the cost estimator $\hat{C}(\myvec \theta)$ for the transaction settlement problem then takes the form 

\begin{equation}
    \hat{C}(\myvec \theta) = \sum_{\substack{i,j = 1\\i\neq j}}^I \hat{p}_{ij}(\myvec \theta) A_{ij} + \sum_{i = 1}^I \hat{p}_{i}(\myvec \theta) (A_{ii}+b_i(\myvec s)) + c(\myvec s)
    \label{eq:cost_estimator}
\end{equation}

which is optimized with respect to $\myvec \theta$ and the slack variables $\myvec s = (\myvec s_1, \ldots, \myvec s_K) \geq 0$. 
The optimal slack variables can be obtained straightforwardly using 
\begin{equation}
    \hat{\myvec s}_k(\myvec \theta) = \max \left(\myvec 0, -\myvec{lim}_k + \sum_{i=1}^I \hat{p}_i(\myvec \theta) \myvec v_{ik}+\myvec{bal}_k\right)
    \label{eq:slack_estimator}
\end{equation}
where $\max(\circ, \circ)$ is to be taken element-wise. 
The optimal slack variables substitute $\myvec s$ in equ~\ref{eq:cost_estimator}, thus removing the need for separate optimization over the slack variables. 
% We have not written the dependence of $\hat{C}$ on the slack variables explicitly, as instead of optimizing them separately we substitute $\myvec s$ by
% \begin{equation}
%     \hat{\myvec s}_k(\myvec \theta) = \max \left(\myvec 0, -\myvec{lim}_k + \sum_{i=1}^I \hat{p}_i(\myvec \theta) \myvec v_{ik}+\myvec{bal}_k\right)
%     \label{eq:slack_estimator}
% \end{equation}
% in equation~\ref{eq:cost_estimator} (cf.~\ref{eq:slack_through_x}).

\paragraph{Remarks:} 
\begin{enumerate}
    \item It is not possible to express equation~\ref{eq:cost_estimator} as the expectation of a Hermitian observable on a state of the form of equation~\ref{eq:quantum_state_qubit_compression} due to denominators in the expressions for $\hat{p}$, $\hat{q}$ (equ.~\ref{eq:p_hat_pauli}, \ref{eq:q_hat_pauli} in appendix) and $\hat{\mu}_{ij}$ (equ.~\ref{eq:estimator_mu}) as well as the functional form of $\hat{\myvec s}$. We will show in appendix~\ref{app:cost_as_observable} how this problem can be resolved for fixed $\myvec s$ given uniform $\beta_r(\myvec \theta)$.
    %\item The choice of the marginal probabilities $\hat{p}_i$ and $\hat{p}_{ij}$ is intuitive due to equation~\ref{eq:convergence_prob_est_1} and \ref{eq:convergence_prob_est_2}. However, their correspondence to the actual probabilities obtained from the sampling algorithm described in the previous section is not clear. While their estimation of these probabilities is unclear for general coverings $\mathcal{A}$, we give a proof for disjoint coverings $\mathcal{A}$ in appendix~\ref{app:marg_prob_from_sampling}.
    \item In the limit of the \emph{full encoding} ($n_a = I$, $n_r = 0$), we get \[\hat{p}_{ij} = \hat{q}_{ij} = \left \langle \mathds{1}_{\min\{i,j\}-1}\otimes\ket{1}\bra{1}\otimes\mathds{1}_{|i-j|-1}\otimes\ket{1}\bra{1}\otimes \mathds{1}_{I-i-j}  \right \rangle_{\mathcal{M}} \]
    and
    \[\hat{p}_i = \left \langle \mathds{1}_{i-1}\otimes\ket{1}\bra{1}\otimes \mathds{1}_{I-i}  \right \rangle_{\mathcal{M}}\]
    resulting in the \enquote{standard} cost estimator identical to e.g.~QAOA.
    \item Runtime and memory cost: 
    %The cost estimator $\Hat{C}(\myvec \theta)$ is a function of $O(I^2)$ sums over the set of measurement outcomes in the computational basis. The naive run-time for classically computing the cost estimator hence scales as $O(n_{\text{shots}}I^2)$, while the memory required only scales as $O(n_{\text{shots}}+I^2)$. Our approach does not require full tomography with memory requirements as high as $O(4^I)$. In the most straightforward estimation, only single-qubit Pauli-Z measurements are required. In further extensions, methods such as classical shadows (\cite{huang_predicting_2020}) may be used to more efficiently estimate the cost and reduce $n_{\text{shots}}$.
    The naive run-time for classically computing the cost estimator scales as $O(n_{\text{shots}}I^2)$, while the memory required only scales as $O(n_{\text{shots}}+I^2)$. Our approach does not require full tomography with memory requirements as high as $O(4^I)$. %In the most straightforward estimation, only single-qubit Pauli-Z measurements are required. 
    In further extensions, methods such as classical shadows (\cite{huang_predicting_2020}) may be used to more efficiently estimate the cost and reduce $n_{\text{shots}}$.
\end{enumerate}

If certain registers are hardly sampled, i.e. $ |\beta_r(\myvec\theta)|^2\simeq 0$, we may encounter division by zero in the expressions for $\hat{p}_i$ and $\hat{q}_{ij}$. In practice, this can be dealt with by setting the corresponding estimators to $1/2$ whenever estimates for $|\beta_r(\myvec\theta)|^2$ fall below some $\epsilon > 0$, resulting in indirect penalization. Alternatively, we can add an explicit regularization term $\hat{R}(\myvec \theta) = \eta \sum_{r = 1}^{N_r}\left [\hat{r}_r(\myvec \theta) -\frac{1}{N_r} \right]^2$ to the cost function. %This is detailed in appendix~\ref{subapp:cost_regularization} and the gradient of the cost estimator is given in appendix~\ref{subapp:gradient_cost}.

\subsection{Variational ansatz}\label{subsec:vqa_ansatz}

In this work, we consider two types of PQC:
\begin{itemize}
    \item A \emph{hardware-efficient} ansatz consisting of RY rotations and entangling CNOT layers.
    \item A \emph{register-preserving} ansatz of conditional RY rotations incorporating constraints and symmetries tailored to the qubit-efficient encoding.
\end{itemize}
Both variational circuits are depicted in figure~\ref{fig:var_ansatz}. The hardware-efficient ansatz was used identically in~\cite{tan_qubit-efficient_2021}, the register-preserving ansatz is one of the main contributions of this work.

We mentioned difficulties arising from vanishing register-amplitudes in the previous section. We will now formally define register-uniform quantum states and register-preserving circuits before discussing the advantages offered by them:

% Definition register-uniform
\begin{restatable}[Register-uniform]{de}{definitionreguniform}
    \label{def:reg_uniform}
    We call a quantum state $\ket{\psi}_{ar}\in \mathcal{H}_{\text{anc}}\otimes\mathcal{H}_{\text{reg}}$ \emph{register-uniform} with respect to the orthonormal basis $\left(\ket{1},\ldots ,\ket{N_r}\right)$ of $\mathcal{H}_{\text{reg}}$, if it can be written as
    \begin{equation}
        \ket{\psi} = \frac{1}{\sqrt{N_r}}\sum_{r = 1}^{N_r}\ket{\phi_r}_{\text{anc}}\otimes\ket{r}_{\text{reg}}
        \label{eq:reguniform_def}
    \end{equation}
    where $\ket{\phi_r}_{\text{anc}} \in \mathcal{H}_{\text{anc}}$ is arbitrary with $\braket{\phi_r | \phi_r}_{\text{anc}} = 1 ~\forall r$. 
\end{restatable}

% Definition register-preserving
\begin{restatable}[Register-preserving]{de}{definitionregpreserving}
    \label{def:reg_preserving}
    We call a unitary $U$ acting on $\mathcal{H}_{\text{anc}}\otimes\mathcal{H}_{\text{reg}}$ \emph{register-preserving} with respect to the orthonormal basis $\left(\ket{1},\ldots ,\ket{N_r}\right)$ of $\mathcal{H}_{\text{reg}}$ if it always maps register-uniform states to register-uniform states (with respect to the same basis).
\end{restatable}
%TODO: Maybe mention that register-preserving could also be defined more generally (not just for uniform registers...)
The set of register-preserving unitaries with respect to the same basis is closed under concatenation. 
Our register-preserving ansatz first prepares the register-uniform plus state $\ket{+}^{\otimes n_q} = H^{\otimes n_q} \ket{0}$ ($H$ being the Hadamard gate) and then acts through register-preserving unitaries on it.

\paragraph{Notation} We use the bra-/ket-notation only for normalized states. Furthermore, Latin letters inside bra and ket indicate computational basis states, while Greek letters indicate general quantum states.
We will refer to \emph{register-preserving circuits} as quantum circuits which output register-uniform states with respect to the computational basis (and fix said basis from now on, omitting further mention of it).

The following claims about register-uniform states and register-preserving circuits are proved in appendix~\ref{app:proofs}:

% Equivalent definitions of register-uniform
\begin{restatable}[]{lm}{lemmareguniform}
    \label{lemma:reg_uniform}
    The following is equivalent to a state $\ket{\psi}\in \mathcal{H}_{\text{anc}}\otimes\mathcal{H}_{\text{reg}}$ being register-uniform:
    \begin{equation}
        \bra{r}\text{tr}_{\mathcal{H}_{\text{anc}}}(\ket{\psi}\bra{\psi})\ket{r} = \frac{1}{N_r} ~\forall r\in \{1,\ldots, N_r\}
        \label{eq:reguniform_equivalent_def}
    \end{equation}
\end{restatable}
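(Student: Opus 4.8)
The plan is to prove the equivalence in Lemma~\ref{lemma:reg_uniform} by two implications. The forward direction is a direct computation: assuming $\ket{\psi}$ is register-uniform, write it as in equation~\ref{eq:reguniform_def}, form the density matrix $\ket{\psi}\bra{\psi} = \frac{1}{N_r}\sum_{r,r'}\ket{\phi_r}\bra{\phi_{r'}}_{\text{anc}}\otimes\ket{r}\bra{r'}_{\text{reg}}$, and trace out the ancilla Hilbert space. Since $\text{tr}_{\mathcal{H}_{\text{anc}}}(\ket{\phi_r}\bra{\phi_{r'}}) = \braket{\phi_{r'}|\phi_r}$, we get $\text{tr}_{\mathcal{H}_{\text{anc}}}(\ket{\psi}\bra{\psi}) = \frac{1}{N_r}\sum_{r,r'}\braket{\phi_{r'}|\phi_r}\ket{r}\bra{r'}_{\text{reg}}$, whose diagonal entries are $\bra{r}\cdot\ket{r} = \frac{1}{N_r}\braket{\phi_r|\phi_r} = \frac{1}{N_r}$, using the normalization $\braket{\phi_r|\phi_r}=1$. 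That establishes equation~\ref{eq:reguniform_equivalent_def}.

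For the converse, I would start from a general $\ket{\psi}\in\mathcal{H}_{\text{anc}}\otimes\mathcal{H}_{\text{reg}}$ and expand it in the register basis as $\ket{\psi} = \sum_{r=1}^{N_r}\ket{\chi_r}_{\text{anc}}\otimes\ket{r}_{\text{reg}}$, where now the $\ket{\chi_r}$ are (possibly unnormalized, possibly zero) vectors in $\mathcal{H}_{\text{anc}}$ — this is always possible and unique because $(\ket{1},\ldots,\ket{N_r})$ is an orthonormal basis of $\mathcal{H}_{\text{reg}}$. Computing the reduced density matrix as above gives diagonal entries $\bra{r}\text{tr}_{\mathcal{H}_{\text{anc}}}(\ket{\psi}\bra{\psi})\ket{r} = \braket{\chi_r|\chi_r}$. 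The hypothesis~\ref{eq:reguniform_equivalent_def} then forces $\braket{\chi_r|\chi_r} = \frac{1}{N_r}$ for every $r$; in particular each $\ket{\chi_r}$ is nonzero, so we may normalize by setting $\ket{\phi_r} := \sqrt{N_r}\,\ket{\chi_r}$, which satisfies $\braket{\phi_r|\phi_r}=1$ and $\ket{\chi_r} = \frac{1}{\sqrt{N_r}}\ket{\phi_r}$. Substituting back yields exactly the form of equation~\ref{eq:reguniform_def}, so $\ket{\psi}$ is register-uniform.

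I do not anticipate a serious obstacle here — the statement is essentially a restatement of the definition via the reduced density matrix, and both directions are short linear-algebra computations. The only point requiring a little care is making sure that in the converse direction one genuinely recovers a \emph{normalized} $\ket{\psi}$ with well-defined normalized ancilla states $\ket{\phi_r}$: this follows because equation~\ref{eq:reguniform_equivalent_def} pins down each partial norm $\braket{\chi_r|\chi_r}$ to be strictly positive and equal to $1/N_r$, and summing over $r$ also confirms $\braket{\psi|\psi}=1$ for consistency. I would also note explicitly that the off-diagonal entries of the reduced density matrix play no role — only the diagonal constraint is needed — which is what makes the characterization clean.
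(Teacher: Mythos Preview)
Your proof is correct and follows essentially the same approach as the paper: both directions are handled by expanding $\ket{\psi}$ in the register basis, computing the diagonal of the reduced register density matrix, and matching it against $1/N_r$. The only cosmetic difference is that the paper expands the ancilla states further in a computational basis with coefficients $\lambda_{ir}$, whereas you work directly with the (possibly unnormalized) ancilla vectors $\ket{\chi_r}$; the logic is identical.
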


% Equivalent definitions of register-preserving
\begin{restatable}[]{thm}{theoremregpreserving}
    \label{theorem:reg_preserving}
    The following are equivalent for a unitary $U$ acting on $\mathcal{H}_{\text{anc}}\otimes\mathcal{H}_{\text{reg}}$:
    \begin{enumerate}[label=(\roman*)]
        \item $U$ is register-preserving
        \item $U (\ket{\phi}_{\text{anc}}\otimes\ket{r}_{\text{reg}}) = (U_r\ket{\phi}_{\text{anc}})\otimes\ket{f(r)}$ where $U_r$ is a unitary on $\mathcal{H}_{\text{anc}}$ $\forall r$ and $f: \{1,\ldots,N_r\}\to\{1,\ldots,N_r\}$ is bijective.
        \item $U$ can be written as a sequence of unitary matrices on $\mathcal{H}_{\text{anc}}$ conditioned on a subset of register-qubits and basis-permutations on the register.
    \end{enumerate}
\end{restatable}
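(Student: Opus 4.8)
The plan is to establish the cycle of implications $(ii)\Rightarrow(iii)\Rightarrow(i)\Rightarrow(ii)$, which is the most economical route; each individual arrow is short once the right normal form is identified.

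First, $(ii)\Rightarrow(iii)$. Given a $U$ of the form in $(ii)$, I would write it as a composition of two pieces: the map $P: \ket{\phi}_{\text{anc}}\otimes\ket{r}\mapsto \ket{\phi}_{\text{anc}}\otimes\ket{f(r)}$, which is a basis-permutation on the register (tensored with the identity on the ancilla), and the map $C: \ket{\phi}_{\text{anc}}\otimes\ket{r}\mapsto (U_r\ket{\phi}_{\text{anc}})\otimes\ket{r}$, which is precisely a controlled-unitary on the ancilla conditioned on the full register. Since $U = P\circ C$ (reading off the action on each basis state $\ket{\phi}\otimes\ket{r}$), this exhibits $U$ in the form demanded by $(iii)$. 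One should check that a controlled-unitary conditioned on the \emph{entire} register is allowed in $(iii)$ (it is — "a subset of register-qubits" includes the whole set), or alternatively expand the diagonal family $\{U_r\}$ into a product of gates each conditioned on one register qubit via the standard multiplexor decomposition; I would mention this but not grind through it.

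Next, $(iii)\Rightarrow(i)$. By the remark already made in the text that register-preserving unitaries are closed under concatenation, it suffices to check that each elementary building block in $(iii)$ is register-preserving. A register-basis-permutation $V$ sends $\frac{1}{\sqrt{N_r}}\sum_r\ket{\phi_r}\otimes\ket{r}$ to $\frac{1}{\sqrt{N_r}}\sum_r\ket{\phi_r}\otimes\ket{\pi(r)} = \frac{1}{\sqrt{N_r}}\sum_{r'}\ket{\phi_{\pi^{-1}(r')}}\otimes\ket{r'}$, still of the register-uniform form since the $\ket{\phi_{\pi^{-1}(r')}}$ remain unit vectors. A unitary $W$ on $\mathcal{H}_{\text{anc}}$ conditioned on a subset $S$ of register qubits acts as $\ket{\phi_r}\otimes\ket{r}\mapsto (W_{r|_S}\ket{\phi_r})\otimes\ket{r}$ where $W_{r|_S}$ depends only on the bits of $r$ in positions $S$; since each $W_{r|_S}$ is unitary, $W_{r|_S}\ket{\phi_r}$ is again a unit vector, so the image is register-uniform. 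Hence any composition of such blocks is register-preserving, giving $(i)$.

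Finally, $(i)\Rightarrow(ii)$, which I expect to be the main obstacle. Suppose $U$ is register-preserving. For each fixed $r$, the state $\frac{1}{\sqrt{N_r}}\sum_{s}\ket{s'}\otimes\ket{s}$ obtained by... — more carefully: I would apply $U$ to a family of probe states to pin down its action. Start with $\ket{\psi_r^{\phi}} := \frac{1}{\sqrt{N_r}}\sum_{s=1}^{N_r}\ket{\chi_s}\otimes\ket{s}$ where $\ket{\chi_s} = \ket{\phi}$ if $s=r$ and $\ket{\chi_s}$ is some fixed reference $\ket{0}_{\text{anc}}$ otherwise; this is register-uniform. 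Its image $U\ket{\psi_r^{\phi}}$ must be register-uniform, so $U\ket{\psi_r^{\phi}} = \frac{1}{\sqrt{N_r}}\sum_{s}\ket{\eta_s}\otimes\ket{s}$ with each $\ket{\eta_s}$ a unit vector. Comparing $U\ket{\psi_r^{\phi}}$ with $U\ket{\psi_r^{\phi'}}$ and using linearity, the difference is $\frac{1}{\sqrt{N_r}}U(\,(\ket{\phi}-\ket{\phi'})\otimes\ket{r}\,)$, and register-uniformity of all the probe images forces this to be supported on a \emph{single} register block; that block is independent of $\phi$ (take $\phi'=0$) — call it $f(r)$ — so $U(\ket{\phi}\otimes\ket{r}) = (U_r\ket{\phi})\otimes\ket{f(r)}$ for some linear $U_r$. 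Norm preservation of $U$ on each $\ket{\phi}\otimes\ket{r}$ shows $U_r$ is an isometry on $\mathcal{H}_{\text{anc}}$, hence unitary (finite dimension). Injectivity of $f$ follows since if $f(r)=f(r')$ with $r\neq r'$ then $U$ maps the two-dimensional span of $\ket{0}\otimes\ket{r},\ket{0}\otimes\ket{r'}$ into a single block in a way incompatible with $U$ being unitary (the image $\frac{1}{\sqrt2}(U_r\ket0 + U_{r'}\ket0)\otimes\ket{f(r)}$ of a register-uniform-type combination is not register-uniform unless... ) — more cleanly, $U$ bijective plus the block structure forces $f$ to be a bijection of $\{1,\ldots,N_r\}$. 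I would present this last injectivity argument carefully since it is where register-\emph{uniformity} (not mere preservation of the block decomposition) does real work; everything else is bookkeeping.
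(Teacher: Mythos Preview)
Your arrows $(ii)\Rightarrow(iii)$ and $(iii)\Rightarrow(i)$ are essentially the paper's.

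The gap is in $(i)\Rightarrow(ii)$, and it sits exactly at the step you dismiss as bookkeeping. You assert that register-uniformity of the probe images forces $U\bigl((\ket\phi-\ket{\phi'})\otimes\ket r\bigr)$ to be supported on a single register block. But register-uniformity of both images only tells you that each $\ket{\eta_s^\phi}$ and $\ket{\eta_s^{\phi'}}$ has unit norm; two unit vectors can differ in every block, so nothing so far prevents the difference from spreading across all $s$. This single-block property \emph{is} the content of the implication --- it is precisely what the paper isolates as Claim~1 and proves by an elaborate contradiction: assume $U(\ket{\phi_s}\otimes\ket s)$ hits two register blocks, then engineer a register-uniform input whose image has a mis-normalised block by tuning the global phases of the ancilla slots (a further sub-claim, Claim~2, is needed to guarantee a nonzero cross term). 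Once the single-block structure is in hand, well-definedness and bijectivity of $f$ are genuine one-liners from unitarity; you have the emphasis inverted.

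Your probe-state framework \emph{can} be completed, and arguably more cleanly than the paper's double contradiction, but it still hinges on a phase argument you omitted. Write $U(\ket\phi\otimes\ket r)=\sum_s A_s^{(r)}\ket\phi\otimes\ket s$ and feed in $\ket{\chi_r}=e^{i\theta}\ket\phi$ for varying~$\theta$ (other slots fixed at $\ket 0$); demanding $\|A_s^{(r)}\ket\phi+\gamma_s\|^2=1$ for all $\theta$ and all unit $\ket\phi$ forces $(A_s^{(r)})^\dagger\gamma_s=0$ and $A_s^{(r)\dagger}A_s^{(r)}=c_s^{(r)}\mathds{1}$. If $c_s^{(r)}>0$ then $A_s^{(r)}$ is invertible on the finite-dimensional ancilla, so $\gamma_s=0$; the all-$\ket 0$ base probe then gives $c_s^{(r)}=1$, and $\sum_s c_s^{(r)}=1$ from unitarity of $U$ pins down a unique $f(r)$. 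That is the missing step; without it the sketch does not establish the block structure.
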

Note\footnote{To avoid confusion with the \emph{permutation operator} used in quantum physics which refers to permuting particle-labels.}, that in theorem~\ref{theorem:reg_preserving} we refer to \emph{basis}-permutations, not \emph{qubit}-permutations, although former contains the latter.

%\input{tikz_ansatz}
%%%
%TODO: Maybe nicer drawing (vgl. Ben's paper)
\begin{figure}[h]
    \centering
    \begin{subfigure}[t]{0.7\textwidth}
    \resizebox{\textwidth}{!}{
        \begin{quantikz}
        \lstick[wires=2]{Ancilla} 
            & \gate[wires=1]{R_Y(\theta_1)} \gategroup[wires=5,steps=5,style={dotted, cap=round, inner sep=7pt}]{Conditional rotations} &\qw ~...~ &\gate[wires=1]{R_y(\theta_{n_r})} & \qw &\qw &\qw &\qw \gategroup[wires=5,steps=3,style={dotted, cap=round, inner sep=7pt}]{Basis permutation} & \qw & \qw \\
            &\qw   &\qw &\qw &\qw & \gate[wires=1]{R_y(\theta_{n_r+1})} &\qw ~...~ & \qw & \qw&\qw\\
        \lstick[wires=3]{Register} 
            & \ctrl{-2} &\qw &\qw&\qw &\ctrl{-1} & \qw &\qw &\qw &\targ{}\\
            ~~~\raisebox{0.2cm}{\vdots} & & \qw ~...~  & \qw & \qw & \qw & \qw ~...~ & \targ{} & \qw ~...~ &\ctrl{-1}\\
            & \qw& \qw & \ctrl{-4} &\qw & \qw & \qw ~...~ & \ctrl{-1} & \qw &\qw
        \end{quantikz} 
    }
    \caption{Register-preserving ansatz}
    \label{fig:reg-pres_ansatz}
    \end{subfigure}
    \begin{subfigure}[t]{0.27\textwidth}
    \resizebox{\textwidth}{!}{
        \begin{quantikz}
        \lstick[wires=2]{Ancilla} 
            & \gate[wires=1]{R_Y(\theta_{1~})} \gategroup[wires=5,steps=1,style={dotted, cap=round, inner sep=7pt}]{Rotations} & \qw & \gategroup[wires=5,steps=2,style={dotted, cap=round, inner sep=7pt}]{Entangling} \qw & \targ{}\\
            &\gate[wires=1, ]{R_y(\theta_{n_a})} &\qw & \targ{} & \ctrl{-1}\\
        \lstick[wires=3]{Register} 
            &  ~~~\raisebox{0.2cm}{\vdots}~~~ & \qw & \ctrl{-1} & \targ{}\\
            & ~~~\raisebox{0.2cm}{\vdots}~~~ & \qw & \targ{} & \ctrl{-1}\\
            & \gate[wires=1]{R_y(\theta_{n_q})} & \qw & \ctrl{-1} & \qw
        \end{quantikz} 
    }
    \caption{Hardware-efficient ansatz}
    \label{fig:hwe_ansatz}
    \end{subfigure}
    \caption[Variational Ansatz]{One layer of the register-preserving ansatz (left) and the hardware-efficient ansatz (right) used in the result section. To obtain the PQC $U(\myvec \theta)$, these circuit layers $L$ are preceded by a Hadamard gate on every qubit and are repeated a chosen number $d$ of times (\emph{depth}), i.e. $U(\myvec \theta) = L(\myvec \theta)^d H^{\otimes n_q}$. Both ansätze contain parameterized rotations followed by a layer of CNOTs. Here, the basis permutation layer of the register-preserving ansatz is identical to the entangling layer of the hardware-efficient ansatz, with the difference that it only acts on register-qubits. Furthermore, in the case of the register-preserving ansatz, parameterized single-qubit $R_Y$ rotations were applied to all ancillas prior to the first layer.}
    \label{fig:var_ansatz}
\end{figure}
%%%

Theorem~\ref{theorem:reg_preserving} provides a list of ingredients that may be used to construct register-preserving variational ansätze. Namely, we can combine conditional unitaries (such as CNOT, Toffoli gates), acting on the ancillas and conditioned on the register-qubits, with arbitrary unitaries that only act on the ancilla qubits. Furthermore, we can permute computational basis states on the register qubits. These permutations could be cryptographic permutation pads~\cite{kuang_quantum_2022}, binary adder circuits~\cite{draper_addition_2000} or heuristic constructions from NISQ-friendly gates such as CNOT, SWAP and X gates.

When defining register-preserving circuits, we demand that \emph{any} register-uniform state is mapped to a register-uniform state. This may not always be necessary. In the case of this work, we always start with the same input state $\ket{+}^{\otimes n_q}$ which allows more general unitaries than theorem~\ref{theorem:reg_preserving}, as the following lemma demonstrates:

\begin{restatable}[]{lm}{lemmaregpreservingsometimes}
    \label{lemma:reg_pres_sometimes}
    For register-uniform states as in equation~\ref{eq:reguniform_def} with $\braket{\phi_{r_1}|\phi_{r_2}}_{\text{anc}}\in \mathbb{R}~\forall r_1, r_2\in \{1,\ldots, N_r\}$, a unitary $U = \mathds{1}_{\text{anc}}\otimes U_\text{reg}$ only non-trivially acting on the register-qubits always maps $\ket{\psi}$ to a register-uniform state if and only if
    \begin{equation}
        (U^\dagger)_{r_1s}U_{sr_2} \in i\mathbb{R}~ \forall r_1,r_2,s\in \{1,\ldots, N_r\} \text{ with } r_1<r_2.
    \end{equation}
\end{restatable}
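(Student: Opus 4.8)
The plan is to reduce register-uniformity of $U\ket{\psi}$ to a bilinear condition on the entries of $U_{\text{reg}}$ via Lemma~\ref{lemma:reg_uniform}, and then to probe that condition with a sufficiently rich family of test states. Writing $(U_{\text{reg}})_{sr} = \bra{s}U_{\text{reg}}\ket{r}$ and using $U = \mathds{1}_{\text{anc}}\otimes U_{\text{reg}}$, equation~\ref{eq:reguniform_def} gives
\[
  U\ket{\psi} = \frac{1}{\sqrt{N_r}}\sum_{s=1}^{N_r}\ket{\chi_s}_{\text{anc}}\otimes\ket{s}_{\text{reg}}, \qquad \ket{\chi_s} := \sum_{r=1}^{N_r}(U_{\text{reg}})_{sr}\ket{\phi_r}_{\text{anc}},
\]
where the $\ket{\chi_s}$ need not be normalized. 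Since the $(s,s)$-entry of the register reduced state of $U\ket{\psi}$ equals $\tfrac{1}{N_r}\braket{\chi_s|\chi_s}$, Lemma~\ref{lemma:reg_uniform} tells us that $U\ket{\psi}$ is register-uniform iff $\braket{\chi_s|\chi_s} = 1$ for all $s$. Expanding $\braket{\chi_s|\chi_s}$, the diagonal $r=r'$ terms sum to $\sum_r |(U_{\text{reg}})_{sr}|^2 = (U_{\text{reg}}U_{\text{reg}}^{\dagger})_{ss} = 1$; for the off-diagonal terms one uses that the Gram matrix $G_{rr'} := \braket{\phi_r|\phi_{r'}}$ is real, hence symmetric, and combines the $(r,r')$ and $(r',r)$ contributions. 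This yields the equivalence
\[
  U\ket{\psi}\ \text{register-uniform}\quad\Longleftrightarrow\quad \sum_{r<r'}\mathrm{Re}\!\left((U^{\dagger})_{rs}U_{sr'}\right)G_{rr'} = 0 \quad \forall s. \qquad (\star)
\]

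The forward implication is then immediate: if $(U^{\dagger})_{r_1 s}U_{sr_2}\in i\mathbb{R}$ for all $r_1 < r_2$ and all $s$, every summand in $(\star)$ vanishes irrespective of $G$, so $U\ket{\psi}$ is register-uniform for every admissible $\ket{\psi}$. For the converse I would exploit the freedom in choosing the $\ket{\phi_r}$. Fix any unit vector $\ket{\phi}\in\mathcal{H}_{\text{anc}}$ and, for an arbitrary sign vector $\epsilon\in\{-1,+1\}^{N_r}$, set $\ket{\phi_r} = \epsilon_r\ket{\phi}$; the resulting state is register-uniform (it is the product state $\ket{\phi}\otimes\tfrac{1}{\sqrt{N_r}}\sum_r\epsilon_r\ket{r}$) with real Gram matrix $G_{rr'} = \epsilon_r\epsilon_{r'}$. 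By hypothesis $U$ maps it to a register-uniform state, so $(\star)$ forces, for every $s$, the real numbers $c^{(s)}_{rr'} := \mathrm{Re}\!\left((U^{\dagger})_{rs}U_{sr'}\right)$ to satisfy $\sum_{r<r'}c^{(s)}_{rr'}\epsilon_r\epsilon_{r'} = 0$ for \emph{every} $\epsilon$. Averaging this identity against $\epsilon_a\epsilon_b$ over the cube $\{-1,+1\}^{N_r}$ and using the orthogonality relation $2^{-N_r}\sum_{\epsilon}\epsilon_a\epsilon_b\,\epsilon_r\epsilon_{r'} = \delta_{\{a,b\},\{r,r'\}}$ isolates $c^{(s)}_{ab} = 0$ for each pair $a < b$, i.e. $(U^{\dagger})_{r_1 s}U_{sr_2}\in i\mathbb{R}$, which is the claim.

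The step I expect to be the crux is the choice of test states in the converse direction: one needs the achievable Gram matrices to probe each coefficient $c^{(s)}_{rr'}$ separately, and it is not a priori clear this is possible — the ancilla space may be as small as two-dimensional (the minimal encoding $n_a = 1$). The sign-vector family resolves this cleanly and is dimension-free, since it varies only the $\pm$ signs of a single fixed ancilla vector; linear independence of the degree-two monomials $\epsilon\mapsto\epsilon_r\epsilon_{r'}$ on the Boolean cube then finishes the argument. An alternative, valid only when $\dim\mathcal{H}_{\text{anc}}\ge N_r$, is to observe that the realizable $G$ fill a neighbourhood of $\mathds{1}$ in the off-diagonal coordinates (small symmetric perturbations of the identity remain positive definite, hence are Gram matrices), on which the linear functional in $(\star)$ must vanish identically; the sign-vector route is preferable as it avoids this dimension hypothesis.
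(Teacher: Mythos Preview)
Your argument is correct, and the reduction to the bilinear condition $(\star)$ via Lemma~\ref{lemma:reg_uniform} is exactly what the paper does. The only substantive difference is in the converse step: the paper dispatches it by saying one can ``engineer [the $\ket{\phi_r}$] such that exactly one chosen term $\braket{\phi_{r_1}|\phi_{r_2}}_{\text{anc}}$ in the sum is non-zero'' and then reads off each coefficient directly, whereas you probe $(\star)$ with the Rademacher family $\ket{\phi_r}=\epsilon_r\ket{\phi}$ and extract the coefficients via orthogonality of the degree-two monomials $\epsilon\mapsto\epsilon_r\epsilon_{r'}$ on the Boolean cube.

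Your route is not merely different but strictly more robust: the paper's ``exactly one nonzero overlap'' construction tacitly needs enough room in $\mathcal{H}_{\text{anc}}$ to host $N_r-1$ pairwise orthogonal unit vectors, which fails for the minimal encoding $n_a=1$ once $N_r\ge 3$ (three mutually orthogonal unit vectors do not fit in a two-dimensional space). Your sign-vector family sidesteps this entirely, as you note, since it lives in a one-dimensional subspace of $\mathcal{H}_{\text{anc}}$ regardless of $N_r$. The paper's version is slightly quicker to state when the dimension hypothesis holds, but yours is the argument that actually covers all cases of interest.
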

While theorem~\ref{theorem:reg_preserving} only allows permutations on the register-qubits, this lemma allows (a single) application of $\text{exp}\left(-i\frac{\theta}{2}P\right)$ for any self-inverse permutation $P$, e.g. R$X^n$ rotations or the RBS gate, on the register-qubits.
The condition $\braket{\phi_{r_1}|\phi_{r_2}}_{\text{anc}}\in \mathbb{R}$ is trivially fulfilled for states which are a real-valued linear combination of computational basis states such as $\ket{+}^{\otimes n_q}$.

%\smallskip

\paragraph{Our ansatz} The circuits used in this work are shown in figure~\ref{fig:var_ansatz}. The register-preserving circuit acts with conditional RY rotations on every ancilla-qubit, conditioned on individual register-qubits. The RY rotation on ancilla-qubit $b$ conditioned on register-qubit $c$ can be regarded as a parameterized rotation on $b$ for half the registers (those registers $r$, for which the binary encoding of r has a 1 at position $c$). A basis permutation layer consisting of CNOTs is added to ensure consecutive conditional RY rotations act on a different set of registers (this basis permutation layer is omitted if only a single layer is used, $d=1$).

In terms of optimization parameters, we optimize $n_a*n_r$ parameters per register-preserving layer and $n_q = n_r+n_a$ for the hardware-efficient ansatz.

\paragraph{Discussion register-preserving ansatz}
Only allowing register-preserving gates in the variational ansatz imposes challenges in keeping the variational ansatz both expressive and NISQ-friendly, at least on superconducting hardware (cf. section~\ref{subsec:hw_ionq_ibmq}). On the other hand, we see the following motivations and advantages for exploring register-preserving circuits:
\begin{enumerate}
    \item Respect the symmetries of the qubit-efficient approach: 
    % It is not clear what, if anything, can be gained by varying over non-uniform register-probabilities in the estimator of the cost function in section~\ref{subsec:cost}. 
    In light of challenges associated with barren plateaus for over-expressive ansätze, incorporating symmetries into the circuit architecture -- here: register-preservation and real-valued amplitudes\footnote{Real-valued amplitudes are the reason we only make use of RY rotations (instead of RX, RZ).} in the computational basis -- is promising as it has been shown to help with the problem of vanishing gradients (\cite{schatzki_theoretical_2022, liu_variational_2019}).
    \item Numeric stability: The cost estimator $\Hat{C}(\myvec \theta)$ (equ.~\ref{eq:cost_estimator}) makes use of estimators for the register-amplitudes $|\beta(\myvec \theta)|^2$. These can be fixed to $\frac{1}{N_r}$ for a register-preserving circuit, adding numerical stability (especially as the terms affected are in the denominator) and reducing the computational overhead. In figure~\ref{fig:shot_noise_variance_grad}, we visualize the variance of the gradient-estimator with respect to shot noise at fixed parameters $\myvec \theta$. The register-preserving ansatz shows much smaller variance, which suggests a lower number of required shots $n_{\text{shots}}$ (cf.~bullet 3.). %Especially for small circuit depths, the gradient is much less sample-dependent than in the case of the hardware-efficient ansatz. This may allow similar estimation quality with fewer shots. 
    \item Sampling overhead: Register-uniform states minimize the expected number of samples needed to cover each register (\cite{boneh_coupon-collector_1989}).  %page 18
    Furthermore, theorem~\ref{theorem:reg_preserving} $(ii)$ shows, that the net effect of any register-preserving unitary on the register-qubits is a permutation. If this permutation $P: \ket{r}\mapsto \ket{f(r)}$ is easily inverted, then the bit-vector sampling can be made deterministic in the register (without otherwise impacting the prediction), by using the input state $H^{\otimes n_a}\ket{0}_{\text{anc}}\otimes P^{-1} \ket{r}_{\text{reg}}$ instead of $H^{\otimes n_a+n_r}\ket{0}$. In any case, we can reduce the number of circuit evaluations needed by using initial states of the form $H^{\otimes n_a}\ket{0}_{\text{anc}}\otimes \ket{r}_{\text{reg}}$ and ensuring that we sample a different register $f(r)$ in every run by iterating over $r \in \{1,\ldots, N_r\}$. 
    \item Expression as expectation value of Hermitian observable: As all denominators in the expressions for $\hat{p}_i$, $\hat{q}_{ij}$ and all of $\hat{\mu}$ are replaced by constants, this allows -- for fixed $\myvec s$ -- to express $\hat{C}(\myvec \theta)$ as the expectation value of a Hermitian observable (although the product $\hat{p}_i \hat{p}_j$ requires preparation of a product state, see appendix~\ref{app:cost_as_observable}).
    A majority of the literature (including aforementioned classical shadows) and software are tailored primarily for Hermitian expectation values. 
    Areas include theoretical results (e.g.~adiabatic theorem), the variational ansatz and optimizer itself, estimation and error mitigation as well as fault-tolerant methods for the evaluation of expectation values.
    Expressing our cost function as a Hermitian expectation hence widens the cross-applicability of other results and code-bases.
    %This allows to apply existing results on the minimization of Hermitian expectation values (more details in appendix~\ref{app:cost_as_observable}).
\end{enumerate}

\begin{figure}
    \captionsetup{width = 0.5\linewidth}
    \floatbox[{\capbeside\thisfloatsetup{capbesideposition={right,top},capbesidewidth=0.5\linewidth}}]{figure}%[\FBwidth]
    {\caption[Shot-noise gradient]{\colorbox{lightgray!30!white}{16 Transactions, 12 parties, 6 qubits, $n_a = 4$} | Variance in partial derivatives of cost estimator over ten samples ($10^4$ shots each). The variance was in turn averaged over all entries of the gradient. This was done for circuits of different depths (x-axis), for each of which we uniformly sampled 25 different parameters $\myvec \theta$ (scatter). The median over the different parameters is depicted as a solid line, the inter-quartile range as a shaded region.}\label{fig:shot_noise_variance_grad}}
    {\includegraphics[width=1\linewidth]{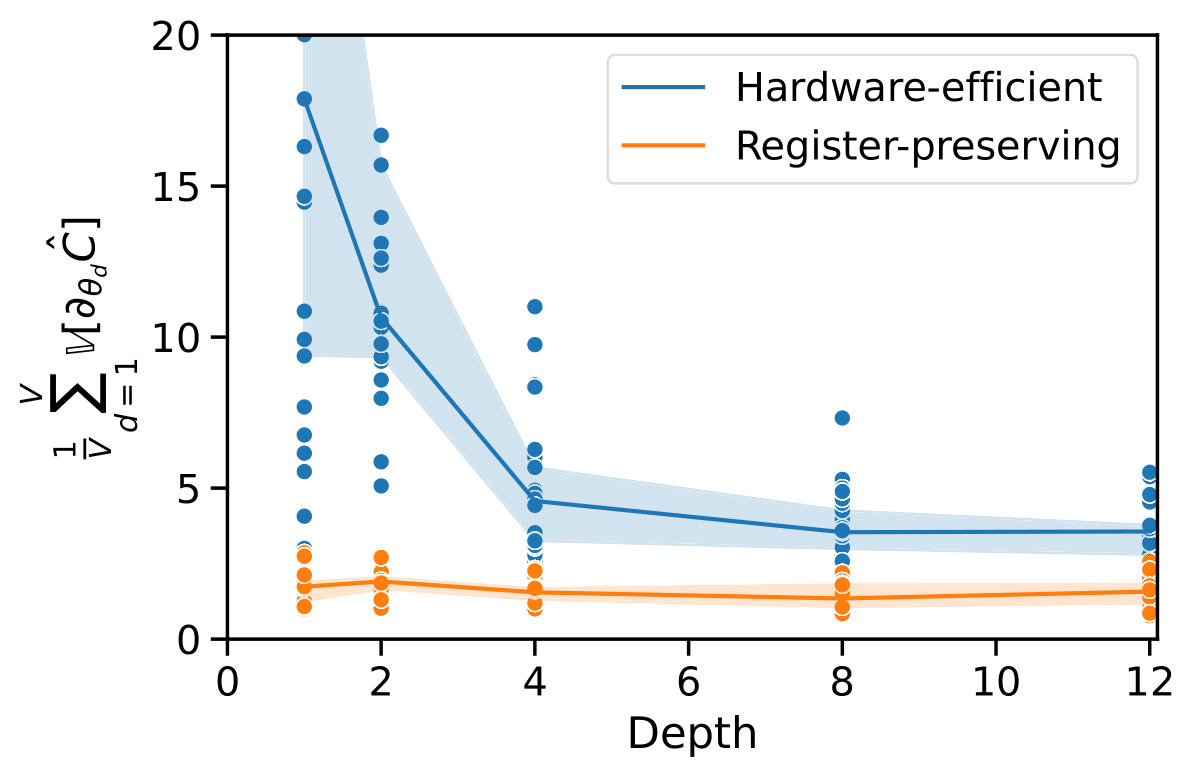}}
\end{figure}

\subsection{Optimization}\label{subsec:opt}

Many different optimization procedures have been suggested in the literature to find the optimal parameters for a PQC through classical optimization. This includes the parameter-initialization (\cite{truger_warm-starting_2023, egger_warm-starting_2021, akshay_parameter_2021, grant_initialization_2019, mitarai_generalization_2019}), choice of meta-parameters (\cite{goh_techniques_2022}) as well as the parameter update itself (\cite{barkoutsos_improving_2020, schuld_evaluating_2019, ostaszewski_structure_2021, nakanishi_sequential_2020, skolik_layerwise_2021}, an overview of gradient-based and gradient-free optimizers can be found in section D. of \cite{bharti_noisy_2022}). 
%Furthermore, a variety of techniques has been proposed to deal with challenges such as vanishing gradients (\cite{liu_mitigating_2022, pesah_absence_2021, patti_entanglement_2021, grant_initialization_2019, dborin_matrix_2021, skolik_layerwise_2021, cerezo_cost_2021}).

While the optimization of a PQC has been shown to be NP-hard (\cite{bittel_training_2021}) and may well be the most important ingredient to practical advantage for any quantum QUBO solver, the focus of this work is on the qubit-efficient methods rather than on the optimization itself. Our results were obtained with two different commonly used optimizers: The gradient-free optimizer COBYLA (implemented in scipy~\cite{virtanen_scipy_2020}) as well as standard gradient descent, with gradients calculated through the parameter-shift rule (\cite{mitarai_quantum_2018, schuld_evaluating_2019}). 

The full optimization step for updating the circuit parameters $\myvec \theta^{(n)}\mapsto \myvec \theta^{(n+1)}$ is depicted in figure~\ref{fig:PQC_overview}.

%%%
\FloatBarrier
\begin{figure}
    \centering
    \includegraphics[width = 0.95\linewidth]{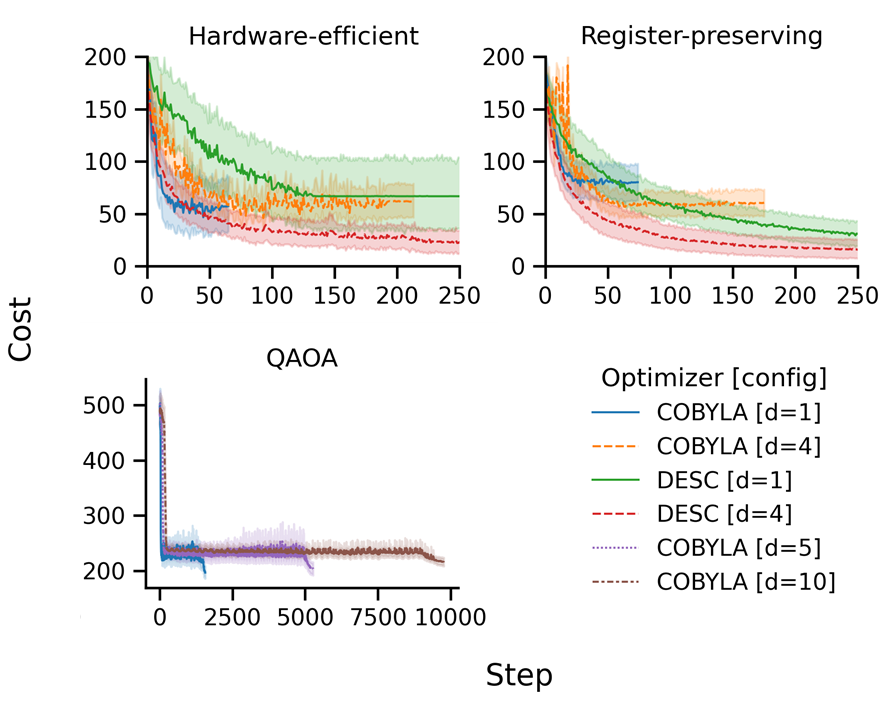}
    \caption[Training traces]{\colorbox{lightgray!30!white}{16 Transactions, 10 parties, 5 (register-efficient) / 16 (QAOA) qubits, $n_a = 1$, $2\times 10^4$ shots} | Cost during parameter optimization for different circuit ansätze (Hardware-efficient, Register-preserving and QAOA) applied to a transaction settlement problem with 16 transactions. Two different optimizers, COBYLA (gradient-free) and gradient descent (DESC) were used in the qubit-efficient approach while only COBYLA was used for QAOA. The depth of the circuit (in number of layers/p-value for QAOA) is given in square brackets. The solid lines depict means over 25 (qubit-efficient) and 10 (QAOA) different training runs with random starting points. The 95\% confidence interval is given as a shaded region. Quantitative values against QAOA are not comparable due to different cost functions used for the parameter training.} %https://seaborn.pydata.org/tutorial/error_bars.html#confidence-interval-error-bars
    \label{fig:training_trace_16TRS}
\end{figure}

\section{Results and Discussion}\label{sec:results}

%TODO: Connection from previous section (having applied the learnings ...), maybe overview of parameters

Here, we present results from applying the methodology presented in section~\ref{sec:methodology} to transaction settlement problems of 16 and 128 transactions. We compare hardware-efficient and register-preserving qubit compression with QAOA. We show results for both a simulator backend (Pennylane \cite{bergholm_pennylane_2022}) and quantum hardware from IBM Quantum and IonQ. The statistics for uniformly random solution-sampling is also provided for benchmarking. For 16 transactions, this includes the optimal solution. %The configurations are summarized in table~\ref{tab:setup_joined_results_bm}. 
Throughout this section $R$ (see~\ref{subsec:transaction_settlement_methods}) was set to $\lfloor \frac{I}{4}\rfloor$ and we considered only cash and one security ($J=2$). We randomly generated three sets of $I=$ 16 transaction instructions with $K=$ 10, 12 and 13 parties respectively and one settlement problem with 128 transaction instructions, $K = 41$.

% \begin{table}[ht] %benchmarking-Sa2023-04-14, benchmarking-H128TRS2023-05-04
%     \caption{Setup}
%     \centering
%     \begin{tabular}{@{}l r@{}}
%         \toprule
%         {\bfseries Parameter} & {\bfseries Value(s)} \\
%         \midrule
%         Cost penalty $\lambda$ & 10\\ %(\ref{subsec:qubit_compression}, \ref{subsec:cost})\\
%         Register-regularization penalty$^*$ $\eta$  & 1000\\
%         Gradient max steps & 1500\\ %(\ref{subsec:cost}, \ref{app:gradient_cost})\\
%         Gradient stepsize & $2.5\times 10^{-4}$\\ %(\ref{subsec:cost}, \ref{app:gradient_cost})\\
%         Optimizer & COBYLA, Gradient descent (DESC)\\ %(\ref{subsec:opt})\\
%         Number of runs / starting point & 25\\
%         Parameter initialization & Uniform random $\in [0,2\pi]$ \\
%         Depth $d$ & 1, 4\\ %TODO: Make sure no other depth used
%         Ancilla qubits $n_a$& 1, 4, 8, 16\\
%         $n_{\text{shots}}$ & $10^4$, $2\times 10^4$\\
%         Ansatz & Register-preserving, Hardware-efficient\\
%         Register-mapping & Disjoint covering unless $n_r+\geq1$ \\
%         \bottomrule
%         $^*$ For Hardware-efficient ansatz only
%     \end{tabular}
%     \label{tab:setup_joined_results_bm}
% \end{table}

\subsection{Simulation, 16 transactions -- Comparison with QAOA}\label{subsec:simulation_results}

\paragraph{Training convergence} Figure~\ref{fig:training_trace_16TRS} shows the training convergence during the parameter optimization, averaged over different random initial parameters of the PQC. While COBYLA returned optimized parameters within a few hundred steps or less, its cost value is consistently outperformed by gradient descent (DESC), especially for an increasing number of circuit parameters.

The register-preserving ansatz not only outperforms the hardware-efficient PQC, but also produces solutions with less variance for different starting points. For all the qubit-efficient approaches, deeper circuits also improved the performance.

For QAOA, the substitution in equ~\ref{eq:slack_estimator} to optimize both slack variables and variational parameters simultaneously is infeasible as the variational ansatz depends on the QUBO matrix and by extension, the slack variables (cf. equation~\ref{eq:trs_settlment_as_qubo_w_slack}). Results for QAOA were obtained by alternating the optimization of slack variables and circuit parameters 50 times, with up to $10^3$ COBYLA-iterations to update the circuit parameters at each cycle. The optimization landscape appears to be dominated by the slack variables, and each update changes the optimization landscape for the variational parameters. This unusual optimization landscape is likely the reason why no significant improvements were observed for increasing p-values.

From our brief comparison, hardware-efficient ansätze appear to be more suited for MBO problems as they are agnostic to changes in the QUBO matrix. Despite these challenges, we maintain our QAOA results for the purposes of comparison and leave the exploration of more effective implementations of the QAOA to MBO to future work.

\paragraph{Bit-vector quality} As the cost estimator used in the optimization is only a proxy for the actual quality of the bit-vectors $\myvec x$ generated, we show the empirical cumulative distribution of the cost associated with bit-vectors generated from the trained PQC in figure~\ref{fig:cdf}. We normalized the cost for each transaction settlement problem and averaged over different configurations (three settlement problems, $n_a \in \{1,4,8\}$ for~\ref{fig:cdf_QAOA}, COBYLA and gradient descent, up to 25 training runs), drawing 50 bit-vectors per configuration.

Subfigure~\ref{fig:cdf_QAOA} shows the cumulative distributions for both the qubit-efficient approach and QAOA.  Except for the hardware-efficient ansatz with one layer, our qubit-efficient approach performs better than QAOA on average. As in the training traces, no significant differences in the results for QAOA were found by varying the depth (p-value) from one to ten. The register-preserving ansatz performs best for all depths.

In subfigure~\ref{fig:cdf_AncReg}, weak improvement can be observed by using 8 instead of 1 ancilla qubits and by adding another register-qubit ($n_r+ =1$). 

% All approaches perform better than the random sampling of bit-vectors. 
During the training, we observed that gradient descent yields better minima than COBYLA in the cost estimator but tends to \emph{sparse} solutions, i.e. the associated distribution on bit-vectors is strongly concentrated around a single value (cf. figure~\ref{fig:cdf_HW}). Here, redundant encoding of bit-vector-positions in the ancillas (i.e.~$n_r+ > 0$) was found to help in generating more diverse solution candidates.

%Note: Plots from "benchmarking-Sa2023-04-14"
\begin{figure}
    \centering
    \begin{subfigure}[t]{0.49\textwidth}
        \includegraphics[width =1.0\linewidth]{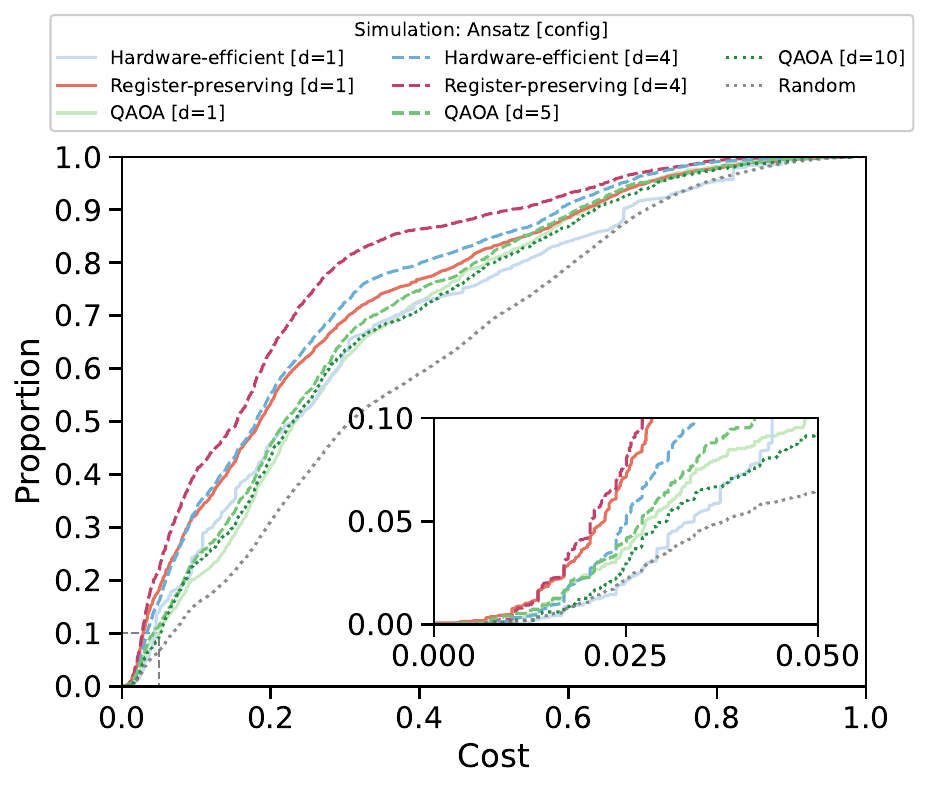}%cdf-plot_Sampler_QAOA_n_a1-8_shots10k.png
        \caption{Different circuit ansätze (hardware-efficient, register-preserving and QAOA) are compared at different depths (number of layers/p-value). $n_a, n_r+ \in \{(1,0),(4,0),(8,0),(8,1)\}$, COBYLA.}
        \label{fig:cdf_QAOA}
    \end{subfigure}
    \begin{subfigure}[t]{0.475\textwidth}
        \includegraphics[width =1.0\linewidth]{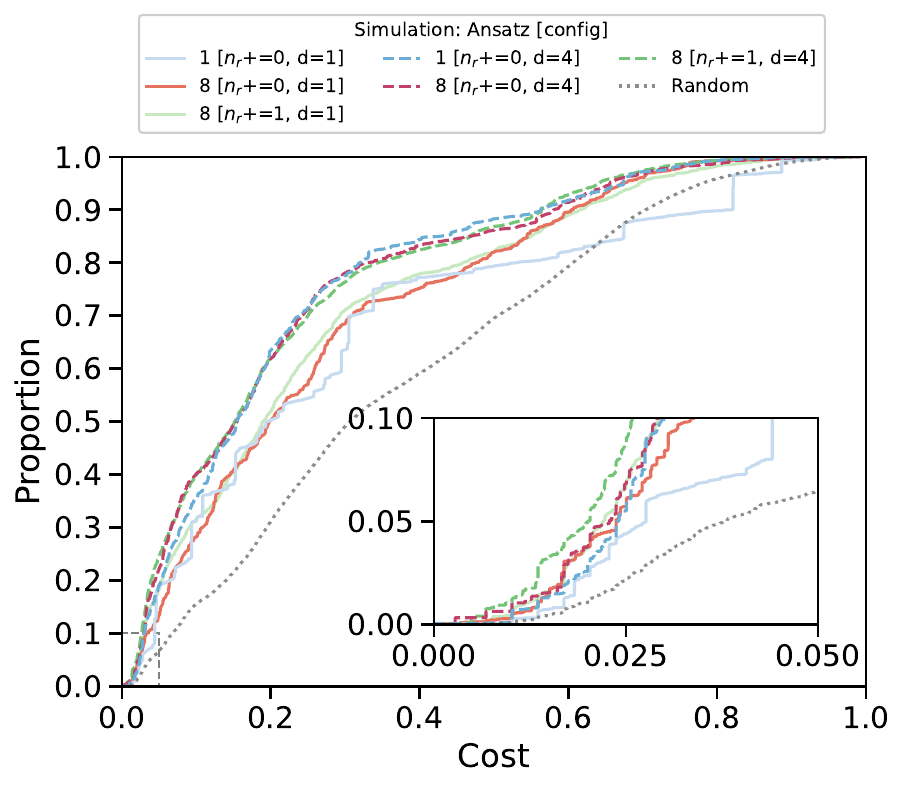}%cdf-plot_AncReg-noQAOA_n_a1-8_shots10k.png
        \caption{Different encodings with one and eight (both as a disjoint covering, $n_r+ = 0$, and as a redundant covering, $n_r+ = 1$) ancilla qubits are compared for different optimization algorithms. Both COBYLA and gradient descent are included.}
        \label{fig:cdf_AncReg}
    \end{subfigure}
    \caption[Empirical cumulative distribution function]{\colorbox{lightgray!30!white}{16 Transactions, 10-13 parties, 5-16 qubits, $10^4$ shots} | Empirical cumulative distribution function of normalized cost over bit-vectors generated by different trained PQCs: For a given cost x, the empirical probability y of generating bit-vectors with cost at most x is plotted. Due to a rescaling of the x-axis (cost normalization), an x-value of zero corresponds to the optimal bit-vector and a value of one to the bit-vector maximizing the cost. Both can be easily found for a problem with only 16 transactions. Insets show the area with the lowest 5\% of the cost. In the calculation of the cost, equation~\ref{eq:trs_settlment_as_qubo_w_slack} was used with optimal slack-variables $\myvec s (\myvec x)$ for a given bit-vector $\myvec x$. The distribution for uniformly random bit-vectors is shown as a grey dotted line.}
    \label{fig:cdf}
\end{figure}
%%%
% \FloatBarrier

% \newpage

\subsection{Hardware, 16 \& 128 transactions -- Results on IonQ and IBMQ QPUs}\label{subsec:hw_ionq_ibmq}
%Note: Plots from "benchmarking-Hardware-runs", "benchmarking-Hardware-runs_128TRS-2023-06-14"

%\input{03b_hardware_results}
%%%
To investigate the generation of bit-vectors on real quantum hardware (QPU), we optimized different configurations of both register-preserving and hardware-efficient PQCs on a simulator for 16 and 128 transactions. The pre-trained circuit parameters were then executed on the Geneva/Hanoi QPU provided by IBM Quantum and the Harmony/Aria QPU provided by IonQ.\footnote{Different backends from both providers were used as \texttt{ibmq\_geneva} was retired while this work was in process and \texttt{ionq\_harmony} only provides 11 qubits, necessitating the larger \texttt{ionq\_aria} device for 128 transactions with 16 ancillas.}
The resulting cost-distributions of generated bit-vectors for a settlement problem with 16 transactions and 10 parties / 128 transactions and 41 parties are depicted in figure~\ref{fig:cdf_HW} and \ref{fig:ionq_ibm_128trs_HW}.

% To investigate the generation of bit-vectors on real quantum hardware (QPU), we executed different configurations of both register-preserving and hardware-efficient PQCs on a simulator for 16 and 128 transactions. The circuit parameters were pre-trained on a noise-free simulator, potentially resulting in systematic errors (relative to a run-time optimization on the noisy device itself) when executed on hardware. %TODO: Citation?
% The optimized circuit parameters were then executed on the Geneva/Hanoi QPU provided by IBM Quantum and the Harmony/Aria QPU provided by IonQ.\footnote{Different backends from both providers were used as \texttt{ibmq\_geneva} was retired while this work was in process and \texttt{ionq\_harmony} only provides 11 qubits, necessitating the larger \texttt{ionq\_aria} device for 128 transactions with 16 ancillas.} %TODO: Link to QPU / cite properly

% The transaction settlement problem with 128 transactions (visualized in the appendix, figure~\ref{fig:graph_settlement_instructions_128}) has up to 22 transfers per party, still far from the real-world scenario outlined in section~\ref{subsec:transaction_settlement_methods} but significantly more complex than the toy example in figure~\ref{fig:TRS_settlement_example}.

% The resulting cost-distributions of generated bit-vectors for a settlement problem with 16 transactions and 10 parties / 128 transactions and 41 parties is depicted in figure~\ref{fig:cdf_HW} and \ref{fig:ionq_ibm_128trs_HW}.

\paragraph{IBMQ vs IonQ} For $4$ layers of the register-preserving circuit (\ref{fig:cdf_reg-pres-HW}), the IBMQ results are significantly worse than for IonQ. This can partly be attributed to the connectivity requirements of the long-range conditional Y-Rotations used in the register-preserving ansatz (figure~\ref{fig:reg-pres_ansatz}). This favours the all-to-all connectivity of \texttt{ionq\_harmony}, which foregoes the need for depth-increasing SWAP networks. The hardware-efficient ansatz (\ref{fig:cdf_hwe-HW}) on the other hand is compatible with the lattice connectivity of IBMQ devices and shows similar performance for both QPUs. 

%When compiled to IBMQ hardware, the gate-count scales almost linearly in $n_a*n_r$ for the register-preserving circuit and in $n_q = n_a+n_r$ for the hardware-efficient ansatz, explaining the unfavourable performance of the register-preserving ansatz on \texttt{ibm\_geneva}. 
%While the compilation on IonQ devices is proprietary, the all-to-all connectivity in the trapped ion's system forgoes the need for SWAP networks when implementing the conditional RY rotations, which should result in a lower overall gate-count.

For 128 transactions, the benchmarked IonQ device (\texttt{ionq\_aria}) slightly outperforms IBMQ (\texttt{ibm\_hanoi}) even with the hardware-efficient ansatz, the results of which are depicted in figure~\ref{fig:ionq_ibm_128trs_HW}.

% \FloatBarrier

\paragraph{Impact of noise}
In general, the noisy results obtained from real quantum backends yielded worse results than noise-free simulations. However, the additional variation in the generated bit-vectors could also help to generate solutions of lower cost. This is observed in figure~\ref{fig:cdf_HW_128TRS}, where the hardware results yielded bit-vectors of lower cost than the lowest simulated vectors with a probability of more than 10\%. Noise does not necessarily move the distribution towards uniform random bit-vectors: On real hardware, the decay into the physical ground state is more likely than the excited state. Depending on the $\sigma_z$-to-bit mapping, this can result in a larger or smaller number of settled transactions than uniform randomness and, potentially, in performance worse than uniform random (as for \texttt{ibm\_geneva} in figure~\ref{fig:cdf_reg-pres-HW}).

\smallskip

Figure~\ref{fig:ViolxTRS_HW_128TRS} emphasizes the need for classical post-processing methods that search for feasible solutions in the vicinity of infeasible solutions generated by the PQC (cf.~\ref{subsec:outlook}): None of the bit-vectors generated by both simulation and real hardware fulfil all the constraints on the security-account balances (cf. equation~\ref{eq:constraints}). Alternatively, the cost penalty $\lambda$ could be increased to put even higher priority on the balance constraints relative to the maximization of the number of transactions.

\begin{figure} %TODO: Maybe redo with normalized cost
    \begin{subfigure}[t]{0.47\textwidth}
        \includegraphics[width =1.0\linewidth]{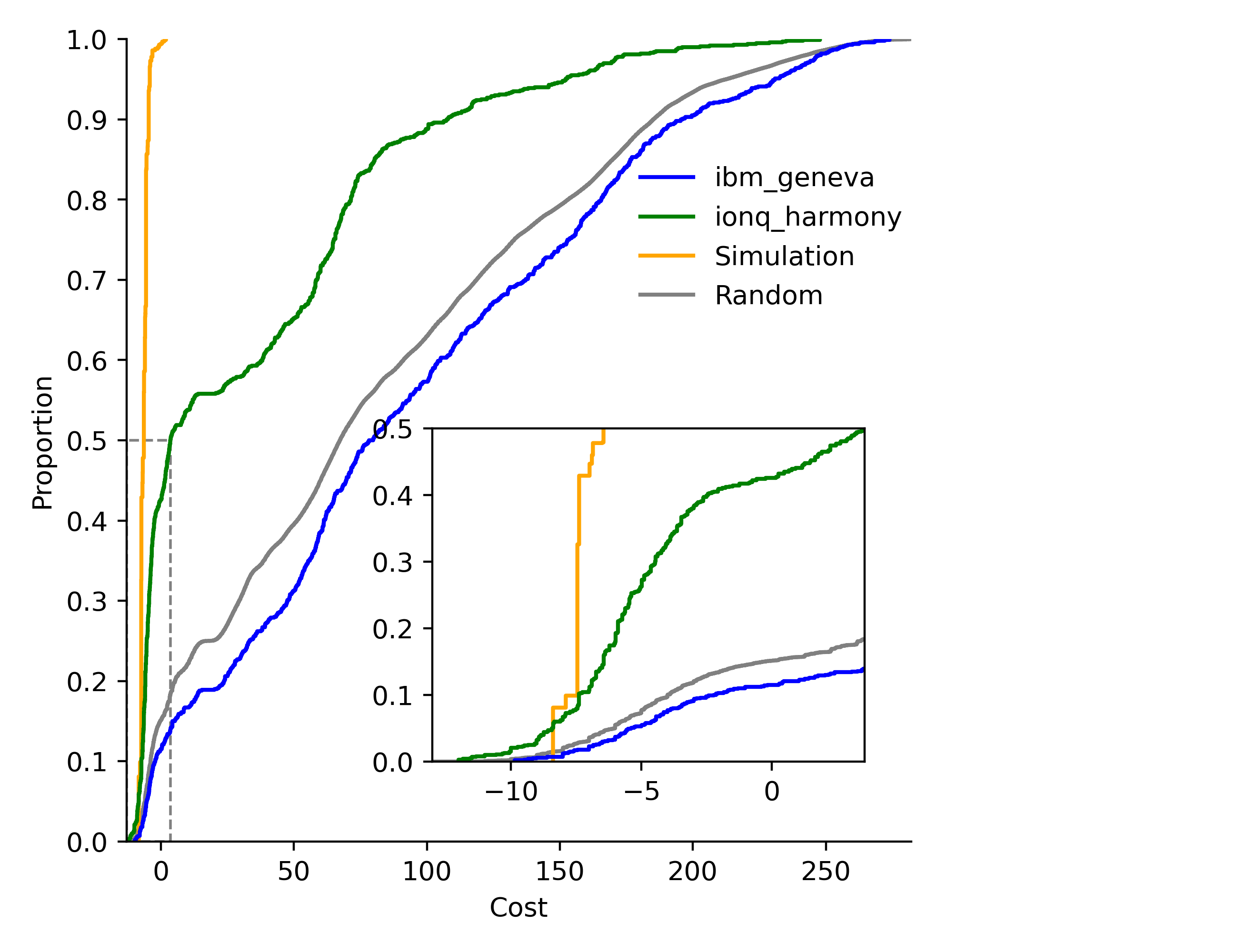}
        \caption{Register-preserving ansatz}
        \label{fig:cdf_reg-pres-HW}
    \end{subfigure}
    \begin{subfigure}[t]{0.47\textwidth}
        \includegraphics[width =1.0\linewidth]{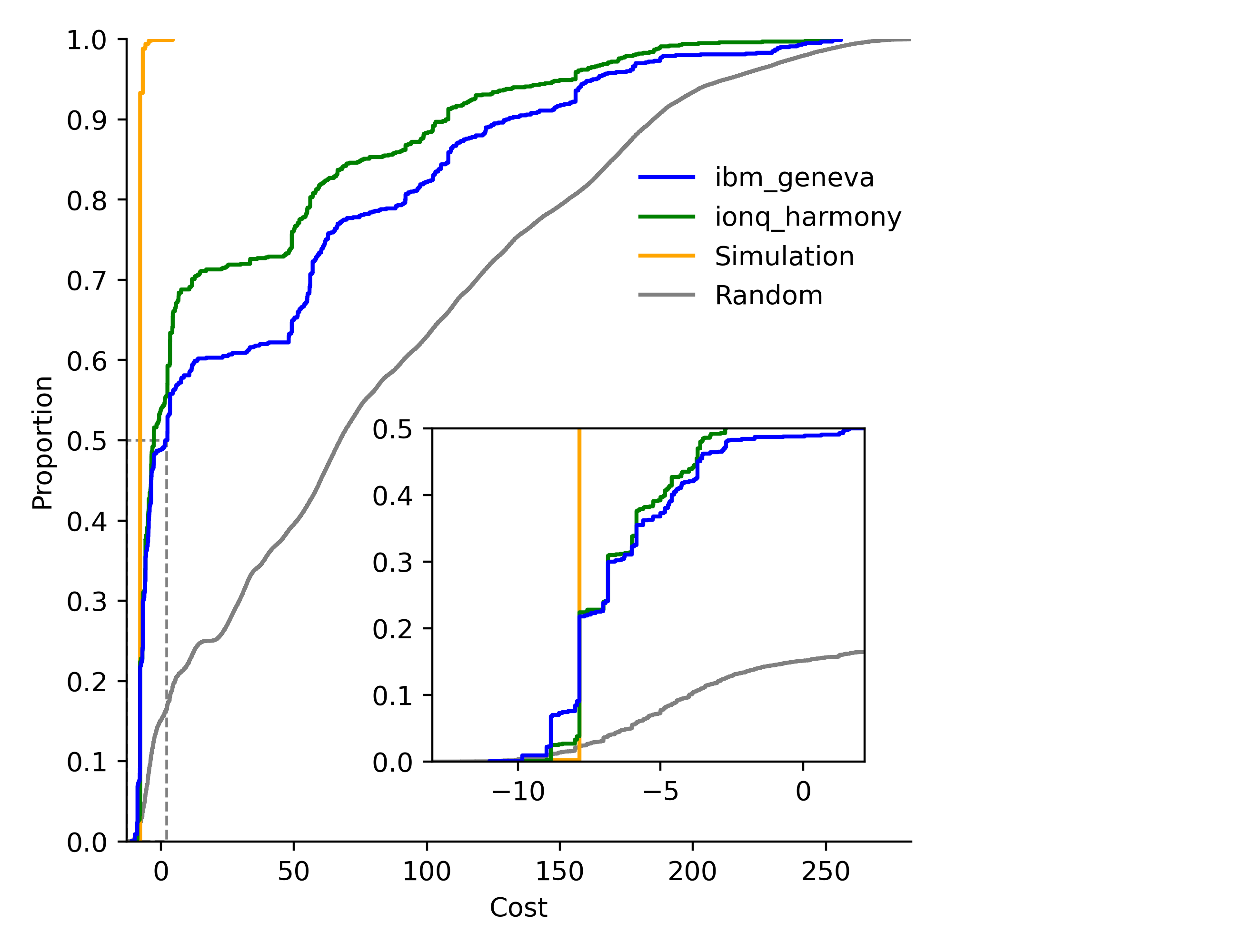}
        \caption{Hardware-efficient ansatz}
        \label{fig:cdf_hwe-HW}
    \end{subfigure}
    \caption[Hardware results 16 Transactions]{\colorbox{lightgray!30!white}{16 Transactions, 10 parties, 6 qubits, $d=4$, DESC, $n_a = 4$, disjoint covering} | Empirical cumulative distribution function (cf.~figure~\ref{fig:cdf}%\textsuperscript{*}
    ) of cost (not normalized) over bit-vectors generated by a trained PQC. Executed on a simulator (Simulation), IBMQ quantum computer (\texttt{ibm\_geneva}) and IonQ quantum computer (\texttt{ionq\_harmony}). Uniformly random bit-vectors are shown by a grey line (Random). For each generator, $1000$ bit-vectors were drawn from $2.4 \times 10^4$ shots. Each subfigure corresponds to a single optimization run in which the PQC was optimized using gradient descent. Both runs produce narrow distributions around a fixed bit-vector, resulting in a steep curve (especially in the noise-free simulation). \\
    % \small\textsuperscript{*} Note how \emph{steep} the cumulative function is for the simulation. This is due to only considering a fixed chosen PQC as opposed to figure~\ref{fig:cdf} where we uniformly average over the PQCs resulting from different training runs and configurations. The PQC used here was trained with gradient descent, resulting in the convergence to a sparse solution (nearly) always outputting an identical bit-vector.
    }\label{fig:cdf_HW}    
    
\end{figure}

% TODO: Potentially redo and have cost not normalized
% 128 Transactions, benchmarking-H128TRS2023-05-04, Job-ID 56, 24000 shots for IBM and IonQ aria
\begin{figure}
    \centering
    \begin{subfigure}[t]{0.52\textwidth}
        \includegraphics[width =1.0\linewidth]{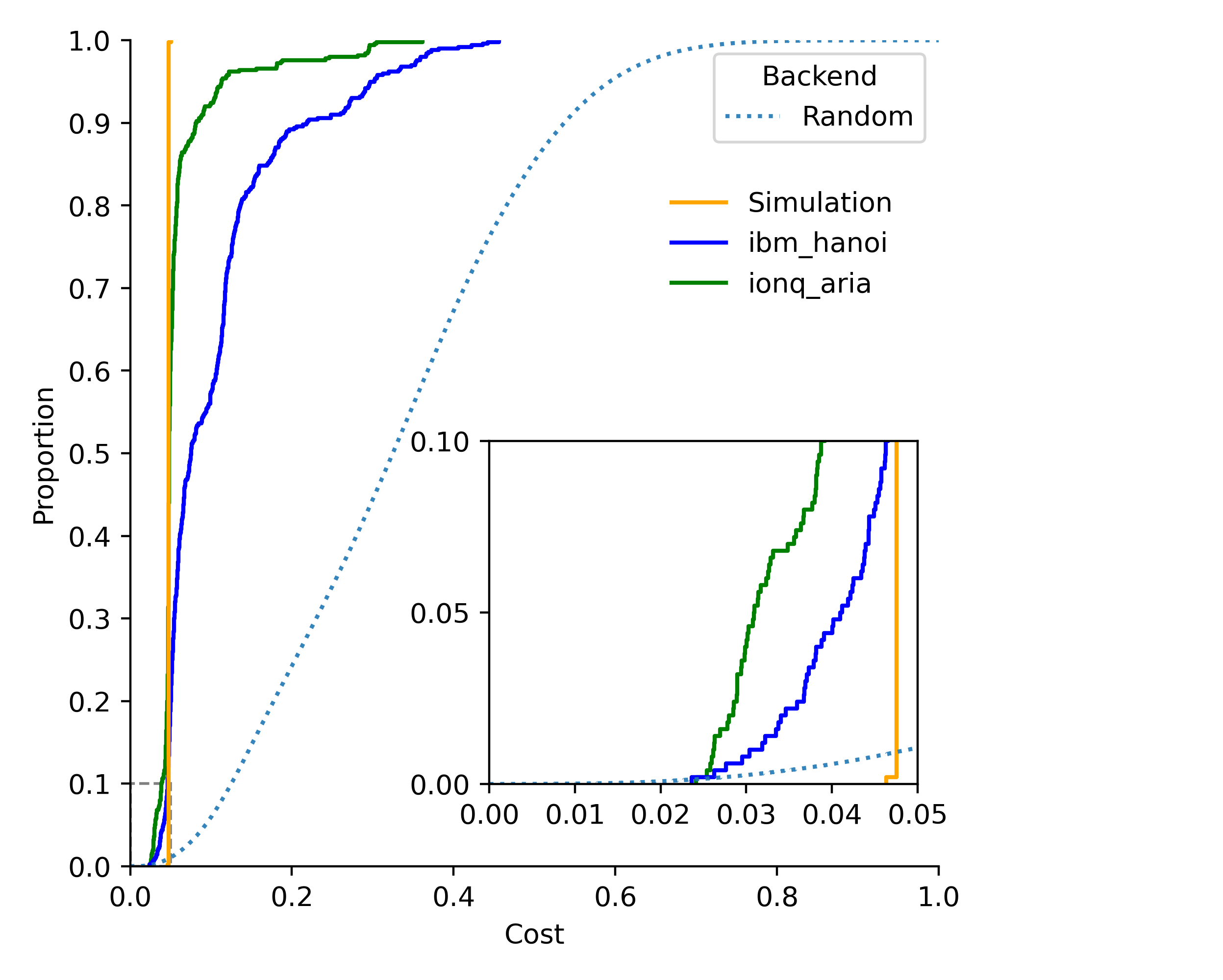}
        \caption{Empirical cumulative distribution function (cf. figure~\ref{fig:cdf}). To generate the blue dotted line, showing the distribution of uniformly random bit-vectors, we generated $10^6$ such vectors, the cost of which was used to normalize the x-axis.}
        \label{fig:cdf_HW_128TRS}
    \end{subfigure}
        \begin{subfigure}[t]{0.45\textwidth}
        \includegraphics[width =1.0\linewidth]{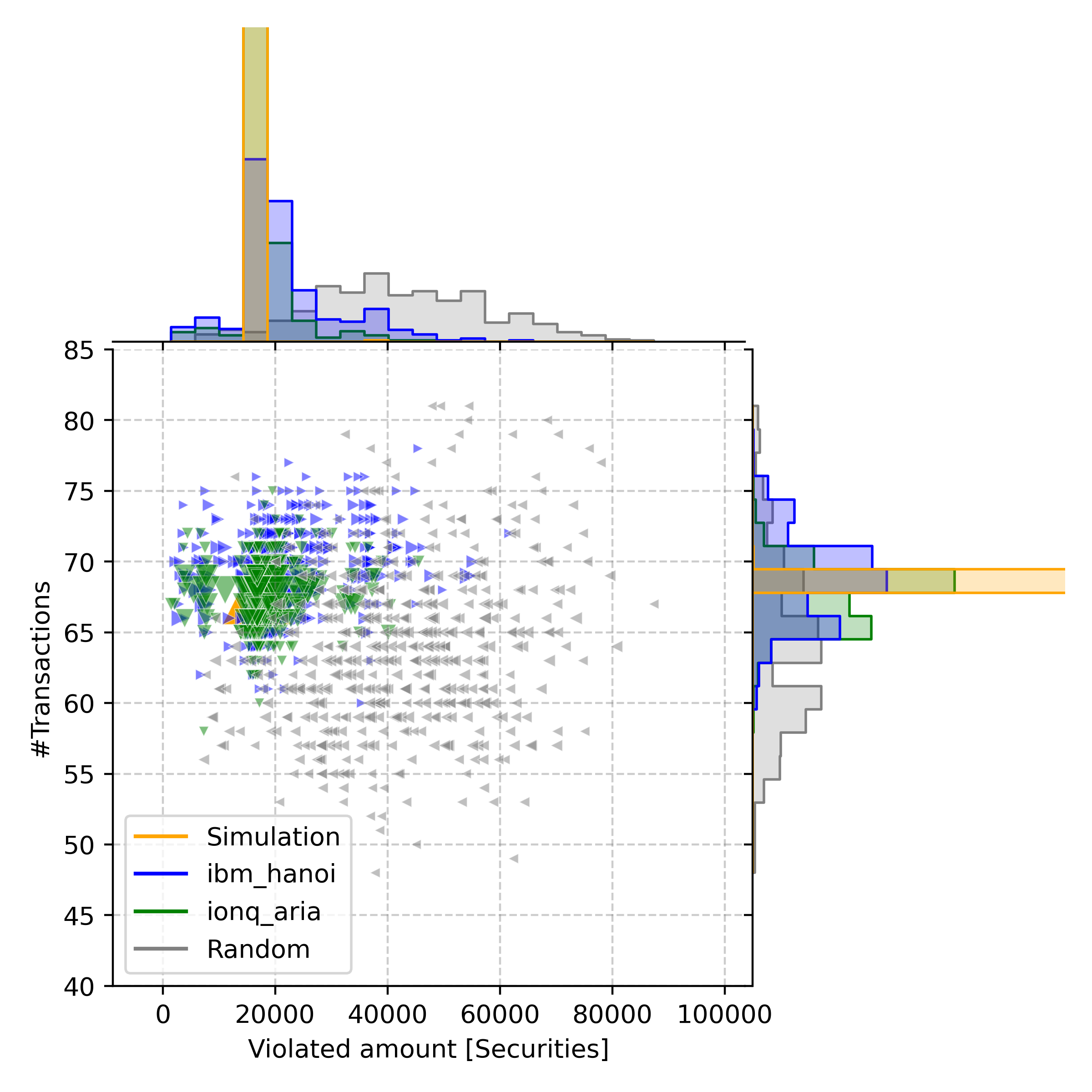} 
        \caption{Scatter-plot of sampled bit-vectors. Scatter-size increases with degeneracy. The x-axis shows the negative sum of final security-account balances over all parties with negative balances upon settlement according to the corresponding bit-vector. The number of transactions is the $L_1$ norm of the corresponding bit-vector. %The distribution-histograms are shown as marginal plots.
        }%HWibm_hanoi-ionq_aria_56_Hist2D_n500_Securities_.png
        \label{fig:ViolxTRS_HW_128TRS}
    \end{subfigure}
    \caption[Hardware results 128 Transactions]{\colorbox{lightgray!30!white}{128 transactions, 41 parties, 19 qubits, hardware-efficient ansatz, $d=1$, DESC, $n_a = 16$, disjoint covering} | Visualization of results for 128 transactions on the \texttt{ibm\_hanoi} and \texttt{ionq\_aria} backends ($2.4 \times 10^4$ shots, 500 bit-vectors).}
    \label{fig:ionq_ibm_128trs_HW}
\end{figure}

%%%
%  \FloatBarrier

\newpage 

\section{Conclusion and Outlook}
\label{04_Conclusion}

%\input{04_Conclusion}
%%%
Increasing the scope of tractable problems and benchmarking with industry data is important to gauge the applicability of heuristics-reliant variational quantum algorithms to optimization and to find promising applications. In this work, we extended the qubit-efficient encoding in~\cite{tan_qubit-efficient_2021} by providing explicit formulas of the cost objective and its gradient for arbitrary number of ancilla qubits. We introduced a new ansatz for uniform register sampling. We argue that register-preserving ansätze have the additional benefits of numerical stability, shot-reductions and selective sampling of individual registers, and expressing the cost estimator as a Hermitian observable.% (\ref{app:cost_as_observable}).

We demonstrate our methods on mixed binary optimization problems arising from financial transaction settlement ~\cite{braine_quantum_2019}, benchmarking problems of up to 128 transactions and 41 parties constructed from transaction data provided by a regulated financial exchange.
We also showed how the optimal slack variables can be obtained without the need for an outer loop optimization. We observed that our qubit-efficient methods outperformed standard QAOA, even when executed on quantum backends. Our proposed register-preserving ansatz stood out as best in many of the instances considered.

\label{subsec:outlook}

% To get a full picture of the potential of our methods to deliver real business value and potential advantages over classical methods, a comparison to state-of-the-art classical solvers on problem instances faced in real scenarios is needed. Suitable problem instances can -- due to the generic formulation as a QUBO problem with linear inequality constraints -- be beyond financial transaction settlement.\footnote{For a list of examples see~\cite{punnen_quadratic_2022} ch.~2.2(.1).} 

\paragraph{Post-processing} 
While not explored in this work, post-processing by projecting sampled solutions to valid bit-vectors fulfilling all problem constraints%, similar to the methods proposed in \cite{goh_techniques_2022}, 
may be a necessity. One possible method to do so includes projecting a solution bit-vector generated by the QS to the best bit-vector \emph{in the vicinity} that fulfils all constraints. Restricting the search to a ball of Hamming distance $\leq k$, solutions can be sampled at the asymptotic runtime of $O(I^k)$ (as opposed to $2^I$ for a full search). While without guarantees for the optimality or even existence of a close valid solution, one may hope that if the QS provides solutions of high quality, only small adjustments are needed to obtain a good solution which adheres to all constraints. This search can be refined with heuristics, e.g.~by only adjusting transactions involving parties (and potentially their k-nearest-neighbours on the graph) whose balance constraints are violated.

% For this purpose, post-processing by projecting sampled solutions to valid bit-vectors fulfilling all problem constraints may well be a necessity. The constraints may be enforced by projecting a solution bit-vector generated by the QS to the best bit-vector \emph{in the vicinity} that fulfils all constraints. What counts as vicinity is partly set by computational limitations. While without guarantees for the optimality or even existence of a close valid solution, one may hope that if the QS provides solutions of high quality, only small adjustments are needed to obtain a good solution which adheres to all constraints. Restricting the search to a ball of Hamming distance $\leq k$, solutions can be sampled by flipping $n \in \{1,\ldots, k\}$ bits at the asymptotic runtime of $O(I^k)$ (as opposed to $2^I$ for a full search).

% An alternative to brute force proximity search is to design heuristic projections. For the example of our balance constraints in equ.~\ref{eq:constraints}, a heuristic projection may only adjust transactions involving parties (and potentially their k-nearest-neighbours) whose balance constraints are violated.
% Even more narrow, one could invalidate outgoing transactions of the appropriate security for parties whose constraints are violated until these parties end on non-negative balances. 

% For the purpose of this proof of concept we did not implement any such projection, for a related example (with a different set of constraints) we refer to~\cite{goh_techniques_2022}.

\paragraph{Method exploration} 
Overall, further exploration of different ancilla-register-mappings, variational (register-preserving) ansätze, optimization algorithms and (scaling of) meta-parameters such as circuit depth, penalty terms and step-size is warranted to validate and refine the qubit-compression approach presented in this work. 
For example, how the restrictions given by theorem~\ref{theorem:reg_preserving} regarding the register-preserving ansatz can best be extended in practice, e.g. by changing the computational basis and keeping track of phases on the ancilla qubits (cf.~\ref{lemma:reg_pres_sometimes}), is still an open question. 
Another important consideration is the lack of correlation between the individual measurements in the sampling algorithm used in this work. Exploring sample rejection or the addition of hidden layers to the ansatz provides one avenue to extend the probability distributions of $\myvec x$ which our PQC parameterizes.

\medskip

Our methods, despite being tested on synthetic problems, demonstrated the broad applicability of quantum algorithms beyond small toy examples. Witnessing advantages of our methods over classical solvers would require a comparison to state-of-the-art classical solvers on problem instances faced in real scenarios. 
Most of our methods are directly applicable beyond the transaction settlement problem to any QUBO problem with linear inequality constraints, setting them apart from other qubit-efficient methods to the best of our knowledge and making them suitable to tasks beyond settling financial transactions.\footnote{For a list of examples see~\cite{punnen_quadratic_2022} ch.~2.2(.1).}

\newpage

\section*{Declarations}
\label{sec:declarations}
\addcontentsline{toc}{section}{\nameref{sec:declarations}}

\subsection*{Ethical Approval and Consent to participate}
Not applicable.

\subsection*{Consent for publication}
The authors consent to publication by EPJ Quantum Technology.

\subsection*{Availability of supporting data}
Given permission by the regulated stock exchange, anonymized transaction data samples used and/or analysed during the current study are available from the corresponding author on reasonable request.

\subsection*{Competing interests and Authors' contributions}
E.H. wrote the main manuscript and conducted the underlying analysis, excluding simulations of QAOA which were conducted by B.T., who also contributed to the research design and revised the manuscript. D.A. and P.G. coordinated the project and facilitated contact with the regulated financial exchange. All authors reviewed the manuscript and have no competing interests to declare.

\subsection*{Acknowledgements and Funding}

%We thank  SGX  for useful discussions and their support in this project. 
We thank Daniel Leykam for his valuable comments and suggestions. This research is supported by the National Research Foundation, Singapore and A*STAR (\#21709) under its CQT Bridging Grant and Quantum Engineering Programme (NRF2021-QEP2-02-P02) and by EU HORIZON-Project101080085—QCFD. We acknowledge IBM Quantum, IonQ and Amazon Web Services.

% \newpage

% \thispagestyle{empty}

% \mbox{}

\newpage

\FloatBarrier

%\bibliographystyle{plain}
%\bibliography{00a_Paper-citations}
\printbibliography

\appendix
\section{Appendix: Additional Figures}\label{app:figs}

\begin{figure}[h!]
    \centering
    \includegraphics[width =0.5\linewidth]{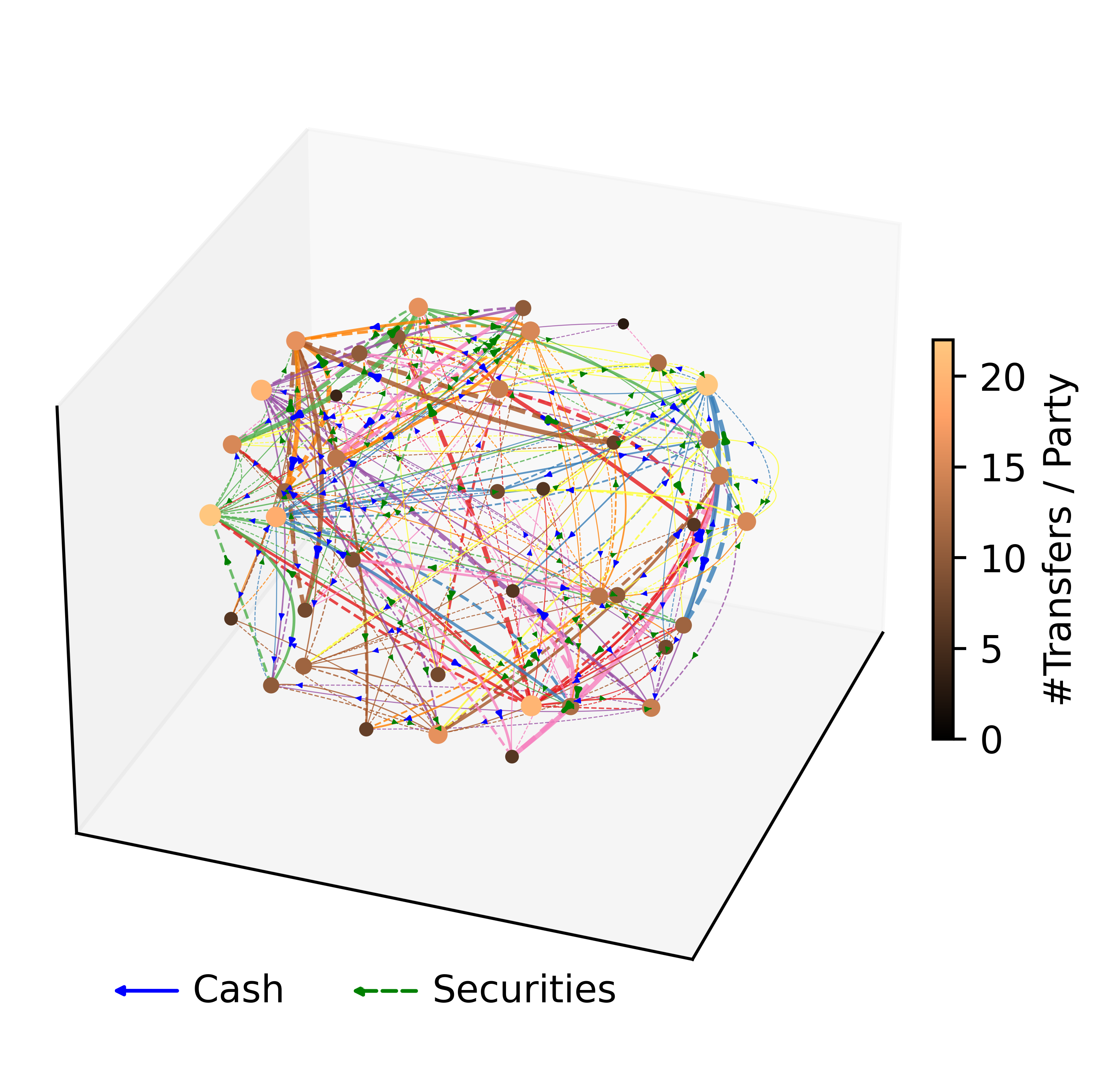}
    \caption[Visualization 128 transactions]{\colorbox{lightgray!30!white}{128 Transactions (each with security and optional cash payment), 41 parties} | Parties and transactions are shown similar to figure~\ref{fig:TRS_settlement_example} for a settlement problem generated from data of a regulated financial exchange (same settlement problem as figure~\ref{fig:ionq_ibm_128trs_HW}). Transfers of securities (cash) are visualized by dotted (solid) edges and green (blue) arrow-heads. The thickness of the edges scales proportional to the transaction volume. The nodes represent parties and are colored and scaled according to the number of in- and outgoing transfers. The edge-coloring corresponds to the register-mapping, i.e.~the sets $A_i \in \mathcal{A}$. For $n_a = 16$ ancilla qubits and $n_r = 3$ register qubits used to generate the register-mapping, we have $2^3=8=|\mathcal{A}|$ different colors.
    }
    \label{fig:graph_settlement_instructions_128}
\end{figure}

%\begin{sidewaysfigure} 
%    \centering
%    \subfloat[Visualization 128 transactions][\colorbox{lightgray!30!white}{128 Transactions (each with security and cash payment), 41 parties} | Parties and transactions are shown similar to figure~\ref{fig:TRS_settlement_example} for a settlement problem generated from real data. Transfers of securities (cash) are visualized by dotted (solid) edges and green (blue) arrow-heads. The thickness of the edges scales proportional to the transaction volume. The nodes represent parties and are colored and scaled according to the number of in- and outgoing transfers. The edge-coloring corresponds to the register-mapping, i.e.~the sets $A_i \in \mathcal{A}$. For $n_a = 16$ ancilla qubits and $n_r = 3$ register qubits used to generate the register-mapping we have $2^3=8=|\mathcal{A}|$ different colors.
%    ]{\includegraphics[width =1\linewidth]{img/SP119_128TRS_41PART_SI_pairings_marked.png}}
%    \label{fig:graph_settlement_instructions_128}
%\end{sidewaysfigure}

\FloatBarrier

\section{Appendix: Proofs}\label{app:proofs}

In this section, we provide formal proofs of claims made in the main text:
\smallskip

\paragraph{Register-preserving ansatz}

\lemmareguniform*

\begin{proof} ~\\

\enquote{$\implies$}: Given a register-uniform state $\ket{\psi} = \frac{1}{\sqrt{N_r}}\sum_{r = 1}^{N_r}\ket{\phi_r}_{\text{anc}}\otimes\ket{r}_{\text{reg}}$, direct calculation shows 
\begin{equation}
    \bra{r}\text{tr}_{\mathcal{H}_{\text{anc}}}(\ket{\psi}\bra{\psi})\ket{r}_{\text{reg}} = \frac{1}{N_r}\text{tr}(\ket{\phi_r}\bra{\phi_r}_{\text{anc}}) = \frac{1}{N_r}.
\end{equation}

\enquote{$\impliedby$}: Given $\ket{\psi}$ with $\bra{r}\text{tr}_{\mathcal{H}_{\text{anc}}}(\ket{\psi}\bra{\psi})\ket{r}  = \frac{1}{N_r}$, write $\ket{\psi}$ in the computational basis as 
\begin{equation}
    \ket{\psi} = \sum_{i = 1}^{N_a}\sum_{r=1}^{N_r}\lambda_{ir} \ket{i}_{\text{anc}}\otimes\ket{r}_{\text{reg}} = \frac{1}{\sqrt{N_r}}\sum_{r=1}^{N_r}\underbrace{\left[\sqrt{N_r}\sum_{i = 1}^{N_a}\lambda_{ir} \ket{i}_{\text{anc}}\right]}_{=:\ket{\phi_r}_{\text{anc}}}\otimes\ket{r}_{\text{reg}}
\end{equation}
Then
\begin{equation}
    \braket{\phi_r | \phi_r}_{\text{anc}} = \text{tr}(\ket{\phi_r}\bra{\phi_r}_{\text{anc}}) = N_r\text{tr}(\mathds{1}\otimes\bra{r}_{\text{reg}}\ket{\psi}\bra{\psi}\mathds{1}\otimes\ket{r}_{\text{reg}}) = N_r\bra{r}\text{tr}_{\mathcal{H}_{\text{anc}}}(\ket{\psi}\bra{\psi})\ket{r}  = 1
\end{equation}
showing that $\ket{\psi}$ is register-uniform.

\end{proof}

\theoremregpreserving*

\begin{proof}
We will proof in order $(i)\implies (ii)\implies (iii)\implies (i)$.

\smallskip

$(i) \implies (ii)$:
The idea in this part is the following: Since the ancilla states for each register are arbitrary, no \enquote{mixing} between different registers is allowed as this would result in uncontrollable superpositions on the ancilla subsystems causing a loss of \enquote{normalization}.
We will first show the following weaker claim:

\textbf{Claim 1:} $\forall\ket{\phi_r}_{\text{anc}} \in \mathcal{H}_{\text{anc}}$, $1\leq r\leq N_r$, $\exists\ket{\Tilde{\nu}_r}_{\text{anc}} \in \mathcal{H}_{\text{anc}}$ and $1\leq \Tilde{r}\leq N_r$, s.t. $U \ket{\phi_r}_{\text{anc}}\ket{r}_{\text{reg}} = \ket{\Tilde{\nu}_r}_{\text{anc}}\ket{\Tilde{r}}_{\text{reg}}$ 

\emph{Proof of Claim 1:} Assume this was not the case, i.e. $\exists\ket{\phi_s}_{\text{anc}} \in \mathcal{H}_{\text{anc}}, 1\leq s\leq N_r$ s.t.
\begin{equation}
    U\ket{\phi_s}_{\text{anc}}\ket{s}_{\text{reg}} = \lambda \ket{\Tilde{\nu}_s}_{\text{anc}}\ket{\Tilde{s}}_{\text{reg}} + \Bar{\lambda}\ket{\Tilde{\psi}}
    \label{eq:claim1_not}
\end{equation}
for some $\lambda, \ket{\Tilde{\nu}_s}_{\text{anc}}, \Tilde{s}, \ket{\Tilde{\psi}}$ with $0<|\lambda| < 1$ and $\mathds{1}_{\text{anc}}\otimes\bra{\Tilde{s}}_{\text{reg}}\ket{\Tilde{\psi}} = 0$. We denoted $\Bar{\lambda}:= \sqrt{1-|\lambda|^2}$. We note at this point, that adding a complex phase to $\ket{\phi_s}_{\text{anc}}$ merely adds the same phase to $\lambda$ and $\ket{\Tilde{\psi}}$ but otherwise does not change equation~\ref{eq:claim1_not}.

We now apply $U$ to a register-uniform state containing $\ket{\phi_s}_{\text{anc}}\ket{s}_{\text{reg}}$, resulting in
\begin{align}
    \ket{\Tilde{\phi}_{\Tilde{s}}}_{\text{anc}}&:= \mathds{1}_{\text{anc}}\otimes\bra{\Tilde{s}}_{\text{reg}} U \sum_{r = 1}^{N_r}\ket{\phi_r}_{\text{anc}}\otimes\ket{r}_{\text{reg}} \\
    &= \lambda \ket{\Tilde{\nu}_s}_{\text{anc}}+\mu\ket{\Tilde{\eta}}_{\text{anc}}
\end{align}
where $\mu\ket{\Tilde{\eta}}_{\text{anc}}=\mathds{1}_{\text{anc}}\otimes\bra{\Tilde{s}}_{\text{reg}}U \sum_{\substack{r = 1\\r\neq s}}^{N_r}\ket{\phi_r}_{\text{anc}}\otimes\ket{r}_{\text{reg}}$.
We will fix $\ket{\phi_r}_{\text{anc}}$ for $ r\neq s$ later. Using that $U$ is register-preserving, we have
\begin{equation}
    1 = \braket{\Tilde{\phi}_{\Tilde{s}}|\Tilde{\phi}_{\Tilde{s}}}_{\text{anc}} = |\lambda|^2+|\mu|^2+\lambda^*\mu \braket{\Tilde{\nu}_s|\Tilde{\eta}}_{\text{anc}}+\lambda\mu^* \braket{\Tilde{\eta}|\Tilde{\nu}_s}_{\text{anc}}.
\end{equation}
If we can show that $\mu\braket{\Tilde{\nu}_s|\Tilde{\eta}}_{\text{anc}} \in \mathbb{R}_{>0}$, then we can choose the phase of $\ket{\phi_s}_{\text{anc}}$ such that
\begin{itemize}
    \item Case 1: $\lambda = |\lambda|$
    \item Case 2: $\lambda = i|\lambda|$
\end{itemize}
which results in the contradiction
\begin{equation}
    |\lambda|^2+|\mu|^2+2|\lambda|\mu\braket{\Tilde{\nu}_s|\Tilde{\eta}}_{\text{anc}} = 1 = |\lambda|^2+|\mu|^2.
\end{equation}
It remains to show, that we can choose $\ket{\phi_r}_{\text{anc}}, r\neq s$ such that $\mu\braket{\Tilde{\nu}_s|\Tilde{\eta}}_{\text{anc}} \neq 0$ and real.

For this consider 
\begin{equation}
    U^{-1} \ket{\Tilde{\nu}_s}_{\text{anc}}\ket{\Tilde{s}}_{\text{reg}} = \frac{1}{\lambda}\ket{\phi_s}_{\text{anc}}\ket{s}_{\text{reg}}-\frac{\Bar{\lambda}}{\lambda} U^{-1} \ket{\Tilde{\psi}}
\end{equation}

\textbf{Claim 2:} $\exists w\neq s \in \{1,\ldots, N_r\}: \mathds{1}_{\text{anc}}\otimes\bra{w}_{\text{reg}} U^{-1} \ket{\Tilde{\nu}_s}_{\text{anc}}\ket{\Tilde{s}}_{\text{reg}} \neq 0$ 

\emph{Proof of Claim 2:} Assume this was not the case. Then
\begin{equation}
    U^{-1} \ket{\Tilde{\nu}_s}_{\text{anc}}\ket{\Tilde{s}}_{\text{reg}} = \frac{1}{\lambda}\ket{\phi_s}_{\text{anc}}\ket{s}_{\text{reg}}+\sigma \ket{\theta_s}_{\text{anc}}\ket{s}_{\text{reg}}
\end{equation}
Hence span$\{\ket{\Tilde{\nu}_s}_{\text{anc}}\ket{\Tilde{s}}_{\text{reg}}\} \cap \text{Im}[U_{|\text{span}\{\mathcal{H}_{\text{anc}}\otimes\ket{r}_{\text{reg}}\}}]=\emptyset~\forall r\neq s$.
In particular for $r\neq s$
\begin{equation}
    \text{dim}\left(\text{Im}[U_{|\text{span}\{\mathcal{H}_{\text{anc}}\otimes\ket{r}_{\text{reg}}\}}]\cap \text{span}\{\mathcal{H}_{\text{anc}}\otimes\ket{\Tilde{s}}_{\text{reg}}\}\right)\leq \text{dim}(\mathcal{H}_{\text{anc}})-1 = \text{dim}(\text{Im}[U_{|\text{span}\{\mathcal{H}_{\text{anc}}\otimes\ket{r}_{\text{reg}}\}}])-1.
\end{equation}
Therefore
\begin{equation}
    \forall r\neq s~ \exists \ket{\phi_r}_{\text{anc}} \text{ s.t } \mathds{1}_{\text{anc}}\otimes\bra{\Tilde{s}}_{\text{reg}} U \ket{\phi_r}_{\text{anc}}\otimes\ket{r}_{\text{reg}} = 0
\end{equation}
which implies
\begin{equation}
    \mathds{1}_{\text{anc}}\otimes\bra{\Tilde{s}}_{\text{reg}} U \sum_{r = 1}^{N_r}\ket{\phi_r}_{\text{anc}}\otimes\ket{r}_{\text{reg}}  = \lambda \ket{\Tilde{\nu}_s}_{\text{anc}}\ket{\Tilde{s}}_{\text{reg}}
\end{equation}
which is not normalized ($|\lambda|<1$), in contradiction with $U$ being register-preserving.

Therefore claim 2 holds and it exists $\ket{\phi_w}_{\text{anc}}\otimes\ket{w}_{\text{reg}}$ with $w\neq s$ such that
\begin{equation}
    \bra{\Tilde{\nu}_s}_{\text{anc}}\otimes\bra{\Tilde{s}}_{\text{reg}} U \ket{\phi_w}_{\text{anc}}\otimes\ket{w}_{\text{reg}} \neq 0.
    \label{eq:overlap_nons}
\end{equation}
We can furthermore fix the phase of $\ket{\phi_w}_{\text{anc}}$ such that the overlap in~\ref{eq:overlap_nons} is real and positive.
We can choose all other $\ket{\phi_r}_{\text{anc}}$ for $r \neq w, s$ arbitrary with the condition that if the overlap $\bra{\Tilde{\nu}_s}_{\text{anc}}\otimes\bra{\Tilde{s}}_{\text{reg}} U \ket{\phi_r}_{\text{anc}}\otimes\ket{r}_{\text{reg}}$ is non-zero, we adjust the phase so that the overlap is real and positive.
From this it follows
\begin{equation}
    \mu\braket{\Tilde{\nu}_s|\Tilde{\eta}}_{\text{anc}} = \bra{\Tilde{\nu}_s}_{\text{anc}}\otimes\bra{\Tilde{s}}_{\text{reg}}U \sum_{\substack{r = 1\\r\neq s}}^{N_r}\ket{\phi_r}_{\text{anc}}\otimes\ket{r}_{\text{reg}} \in \mathbb{R}_{>0}
\end{equation}

which concludes claim 1.

To prove the more restrictive statement $(ii)$, it remains to show:
\begin{enumerate}[label=\alph*)]
    \item $\ket{r}_{\text{reg}}\mapsto \ket{\Tilde{r}}_{\text{reg}}$ is a well-defined bijective function, $\Tilde{r} = f(r)$, $f: \{1,\ldots,N_r\}\to\{1,\ldots,N_r\}$ 
    \item $\ket{\phi_r}_{\text{anc}}\mapsto \ket{\Tilde{\nu}_r}_{\text{anc}} = \mathds{1}\otimes \bra{f(r)}U \ket{\phi_r}_{\text{anc}}\ket{r}_{\text{reg}} $ is unitary $\forall \ket{\phi_r}_{\text{anc}}\in \mathcal{H}_{\text{anc}}, ~ 1\leq r\leq N_r$
\end{enumerate}
To show that f is a well-defined function, we need to show in addition to claim 1 that the registers are mapped independently of the state of the ancilla. Assume this was not the case,~i.e. $\exists r, \ket{\phi_r^1}_{\text{anc}}, \ket{\phi_r^2}_{\text{anc}}$ s.t.
\begin{equation}
    U \ket{\phi_r^i}_{\text{anc}}\otimes\ket{r}_{\text{reg}} = \ket{\Tilde{\nu}_r^i}_{\text{reg}}\otimes \ket{\Tilde{r}^i}_{\text{reg}}
\end{equation}
with $\Tilde{r}^1 \neq \Tilde{r}^2$. Consider two register-uniform states which are identical besides one containing $\ket{\phi_r^1}_{\text{anc}}\otimes\ket{r}_{\text{reg}}$ and the other $\ket{\phi_r^2}_{\text{anc}}\otimes\ket{r}_{\text{reg}}$. Then due to linearity, one of the two states is not mapped to a register-uniform state under $U$. This is a contradiction and hence f is a well-defined function. Because U is register-preserving, f must be surjective and is hence bijective.

To show b) we note that for any $r$ the restriction $U: \text{span}\{\mathcal{H}_{\text{anc}}\otimes \ket{r}\} \to \text{span}\{\mathcal{H}_{\text{anc}}\otimes \ket{f(r)}\}$ is unitary.

$(ii) \implies (iii)$: Given $U_r$ and $f(\circ)$ as in $(ii)$, we can directly define a sequence of conditioned unitaries and a permutation on the registers. 

For any $r$ we can condition the unitary $U_r$ acting on $\mathcal{H}_{\text{anc}}$ on the bitvector-representation of $r$. Let us call the resulting unitary $CU_r$. Define the permutation $P: \mathcal{H}_{\text{reg}}\ni \ket{r} \mapsto \ket{f(r)} \in \mathcal{H}_{\text{reg}}$. Then
\begin{equation}
    U = \mathds{1}_{\text{anc}}\otimes P \prod_{r = 1}^{N_r}CU_r
\end{equation}

$(iii) \implies (i)$: As the set of register-preserving unitaries is closed under composition, it suffices to show that basis-permutations on the registers and unitaries on the ancillas conditioned on register-states are both register-preserving. Both can be verified explicitly by tracking the action on a state of the form $\frac{1}{\sqrt{N_r}}\sum_{r = 1}^{N_r}\ket{\phi_r}_{\text{anc}}\otimes\ket{r}_{\text{reg}}$ and observing that the normalization of the individual terms in the sum is not compromised. Permutations on the register merely reassign the ancilla states to different addresses while conditional unitaries on the ancillas apply unitary transformations on the ancilla-states for a subset of registers specified through the conditioning.

\end{proof}

\lemmaregpreservingsometimes*

\begin{proof}
    We will show the claim by demanding $U\ket{\psi}$ adhere to definition~\ref{def:reg_uniform}. 
    
    Projecting $U\ket{\psi}$ on a register $s$, we obtain
    \begin{align}
        \mathds{1}_{\text{anc}}\otimes \bra{s}_{\text{reg}} \mathds{1}\otimes U_{\text{reg}}\frac{1}{\sqrt{N_r}}\sum_{r = 1}^{N_r}\ket{\phi_r}_{\text{anc}}\otimes\ket{r}_{\text{reg}} = \frac{1}{\sqrt{N_r}}\ket{\Tilde{\phi}_s}_{\text{anc}}
    \end{align}
    and therefore 
    \begin{align}
        \braket{\Tilde{\phi}_s|\Tilde{\phi}_s}_{\text{anc}} = 1 \iff
        \sum_{r_1, r_2 = 1}^{N_r}\braket{\phi_{r_1}|\phi_{r_2}}_{\text{anc}} \bra{r_1}U_{\text{reg}}^\dagger\ket{s}_{\text{reg}}\bra{s}U_{\text{reg}}\ket{r_2}_{\text{reg}} = 1
    \end{align}
    We can split the sum to obtain (using the assumption of real-valued ancilla overlap)
    \begin{align}
        1 = \underbrace{\sum_{r = 1}^{N_r}\bra{r}U_{\text{reg}}^\dagger\ket{s}_{\text{reg}}\bra{s}U_{\text{reg}}\ket{r}_{\text{reg}}}_{= 1}
        +\sum_{1 \leq r_1<r_2 \leq N_r}\braket{\phi_{r_1}|\phi_{r_2}}_{\text{anc}} \underbrace{\left[\bra{r_1}U_{\text{reg}}^\dagger\ket{s}_{\text{reg}}\bra{s}U_{\text{reg}}\ket{r_2}_{\text{reg}}+\bra{r_2}U_{\text{reg}}^\dagger\ket{s}_{\text{reg}}\bra{s}U_{\text{reg}}\ket{r_1}_{\text{reg}}\right]}_{ = 2~ \text{Re}\left[(U^\dagger)_{r_1s}U_{sr_2}\right]}
        \label{eq:norm_condition_ref}
    \end{align}    
    where the first term equals 1 due to $U_{\text{reg}}\ket{r}_{\text{reg}}$ forming an orthonormal basis of $\mathcal{H}_{\text{reg}}$. As this equations has to hold for all sets $\{\ket{\phi_r}_{\text{anc}}\}_{r=1}^{N_r}$ which have real-valued overlap (hence we can engineer them such that exactly one chosen term $\braket{\phi_{r_1}|\phi_{r_2}}_{\text{anc}}$ in the sum is non-zero) and for all $s$, equation~\ref{eq:norm_condition_ref} is equivalent to $\text{Re}\left[(U^\dagger)_{r_1s}U_{sr_2}\right]=0~\forall r_1,r_2,s\in \{1,\ldots, N_r\} \text{ with } r_1<r_2$.
\end{proof}

The resulting state is register-uniform due to cancellation of phases. While this only holds for real-valued inner product between ancilla-states, it raises the question whether similar results are possible more generally if a record of the phases of the ancilla states is kept. Furthermore, we did not explore the possibility of allowing different register bases for the input and output state in the definition of a register-preserving unitary.

\paragraph{Transaction settlement problem}

For some combinatorial graph problems such as MaxCut, the corresponding QUBO matrix $Q$ is in simple correspondence with the graph and its adjacency matrix. This is not the case here, where deciding which transactions to settle corresponds to choosing a subset of \emph{edges} (not nodes, cf. figure~\ref{fig:TRS_settlement_example}). Quadratic terms beyond the graph connectivity are common for QUBO problems which incorporate constraints as quadratic penalties. This increase the number of non-zero off-diagonal elements of $Q$ and hence limits the applicability of many NISQ-QS as stressed in section~\ref{subsec:quantum_optimization_intro}. For transaction settlements, we can relate the number of non-zero elements per row of $Q$ as follows:
\begin{restatable}[]{lm}{lemmaconnectivity}
\label{lemma:connectivity}
    Given a transaction settlement with $A, ~\myvec b(\myvec s)$ as in equations~\ref{eq:QUBO_trs_A} and~\ref{eq:QUBO_trs_b}, represented by a graph $G$ with parties as $K$ nodes $\mathcal{K}$ and transactions as $I$ edges $\mathcal{I}$ connecting the transacting parties, then: The average number of non-zero entries per row of the matrix $Q = A + \textrm{Diag}(\myvec b(\myvec s))$ is bounded by
    \begin{equation}
        \mathbb{E}_{\mathcal{I}}\left [\sum_{j = 1}^I\delta_{\left\{Q_{ij}\neq 0 \right\}}\right ] \leq 4\frac{I}{K}+\frac{K}{I}\mathbb{V}_{\mathcal{K}}\left[N_k\right]-1
    \end{equation}
    where $N_k$ is the number of edges connected to node $k\in \mathcal{K}$.
\end{restatable}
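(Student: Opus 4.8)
The plan is to reduce the statement to a purely graph-theoretic count of pairs of adjacent edges and then apply the handshake lemma together with the elementary identity $\mathbb{E}[X^2]=\mathbb{V}[X]+(\mathbb{E}[X])^2$. Throughout, the average $\mathbb{E}_{\mathcal I}$ is the uniform average over the $I$ transactions (edges) and $\mathbb{E}_{\mathcal K},\mathbb{V}_{\mathcal K}$ the uniform average and variance over the $K$ parties (nodes).

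First I would identify the support of $Q$ in terms of $G$. For $i\neq j$, equation~\ref{eq:QUBO_trs_A} gives $Q_{ij}=A_{ij}=-\lambda\sum_{k=1}^K \myvec v_{ik}\cdot\myvec v_{jk}$, so $Q_{ij}\neq 0$ requires some party $k$ with $\myvec v_{ik}\neq\myvec 0\neq\myvec v_{jk}$, i.e.\ a party involved in both transactions; in graph language, the edges $e_i,e_j$ share a node. Exact numerical cancellations in the sum can only turn a potentially non-zero entry into a zero one, and the same is true for the diagonal, where $Q_{ii}=A_{ii}+b_i(\myvec s)$; hence counting every edge adjacent to $e_i$, plus the diagonal slot, over-estimates the number of non-zeros in row $i$, which is exactly what an upper bound needs. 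Fixing transaction $i$ with endpoints $k_1=k_1(i)$ and $k_2=k_2(i)$, the off-diagonal indices $j\neq i$ contributing to row $i$ lie among the edges incident to $k_1$ or to $k_2$, minus $e_i$ itself; by inclusion--exclusion there are at most $N_{k_1}+N_{k_2}-m-1$ of them, where $m\ge 1$ is the number of (possibly parallel) edges between $k_1$ and $k_2$. Adding the diagonal slot, which contributes at most $1$, yields
\begin{equation}
\sum_{j=1}^I \delta_{\{Q_{ij}\neq 0\}}\;\le\;N_{k_1(i)}+N_{k_2(i)}-m\;\le\;N_{k_1(i)}+N_{k_2(i)}-1 .
\end{equation}

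Next I would average this over the $I$ edges. The combinatorial core is that summing endpoint degrees over all edges regroups, node by node, into $N_k$ copies of $N_k$: $\sum_{i\in\mathcal I}\bigl(N_{k_1(i)}+N_{k_2(i)}\bigr)=\sum_{k\in\mathcal K}N_k^2$. Hence
\begin{equation}
\mathbb{E}_{\mathcal I}\!\left[\sum_{j=1}^I \delta_{\{Q_{ij}\neq 0\}}\right]\;\le\;\frac{1}{I}\sum_{k\in\mathcal K}N_k^2\;-\;1\;=\;\frac{K}{I}\,\mathbb{E}_{\mathcal K}\!\left[N_k^2\right]-1 .
\end{equation}
Then I would substitute $\mathbb{E}_{\mathcal K}[N_k^2]=\mathbb{V}_{\mathcal K}[N_k]+(\mathbb{E}_{\mathcal K}[N_k])^2$ and use the handshake lemma $\sum_k N_k=2I$, so $\mathbb{E}_{\mathcal K}[N_k]=2I/K$; the right-hand side becomes $\tfrac{4I}{K}+\tfrac{K}{I}\mathbb{V}_{\mathcal K}[N_k]-1$, which is the claimed bound. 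As a sanity check, for a $d$-regular graph $N_k\equiv d=2I/K$ the variance term vanishes and the bound collapses to $2d-1$, i.e.\ essentially twice the average number of transactions per party.

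The step I would be most careful with is the bookkeeping in the second paragraph: the inequality (rather than equality) is forced precisely by three facts, each of which can only decrease the count -- entries of $Q$ may cancel exactly to zero, $A_{ii}$ and $b_i(\myvec s)$ may cancel, and parallel transactions between the same pair of parties ($m>1$) are double-counted in the inclusion--exclusion -- so ``transactions $i$ and $j$ share a party'' is a legitimate worst-case sufficient condition for $Q_{ij}\neq 0$. I would also remark that delivery-vs-payment transactions, which carry both a cash and a security leg, affect only the entries of $\myvec v_{ik}$ and not the edge/vertex structure of $G$, so they leave the count unchanged. Everything after that paragraph is just the handshake lemma and the variance identity, with no further obstacle.
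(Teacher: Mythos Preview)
Your proof is correct and follows essentially the same route as the paper: bound the non-zero entries in row $i$ by $N_{k_1(i)}+N_{k_2(i)}-1$, regroup $\sum_{i\in\mathcal I}(N_{k_1(i)}+N_{k_2(i)})$ as $\sum_k N_k^2$, and finish with the handshake lemma and the variance identity. Your treatment of parallel edges and possible cancellations is slightly more explicit than the paper's, but the argument is the same.
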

For \emph{d-regular graphs} $\mathbb{V}_{\mathcal{K}}\left[N_k\right]=0$.

\begin{proof}
    We are looking for an upper-bound of the average number of non-zero elements in the rows of $Q$. For this assume the diagonal elements are all non-zero. For the off-diagonal elements we only need to consider the contributions of $A = -\lambda V V^T$. As $V_{il} := (\myvec v_{ik(l)})_{j(l)}$ is only non-zero if transaction $i$ changes balance $j(l)$ of party $k(l)$, $A_{ij}$ may only be non-zero if edges $i$ and $j$ share a node.
    Therefore
    \begin{equation}\label{eq:conn_rewr}
        \mathbb{E}_{\mathcal{I}}\left [\sum_{j = 1}^I\delta_{\left\{Q_{ij}\neq 0 \right\}}\right ] = \frac{1}{I}\sum_{i=1}^I\left [\sum_{j = 1}^I\delta_{\left\{Q_{ij}\neq 0 \right\}}\right ] \leq  \frac{1}{I}\sum_{i=1}^I\left [\underbrace{\sum_{j \in \mathcal{I}}\delta_{\left\{\text{Transaction } i \text{\&} j \text{ share a node} \right\}}}_{=:a_i}\right ].
    \end{equation}
    Let $k_1(i), k_2(i) \in \mathcal{K}$ such that $i = (k_1(i), k_2(i))$\footnote{In the case of an undirected graph we can fix any ordering}. Then \[a_i \leq N_{k_1(i)}+N_{k_2(i)}-1\] where $-1$ comes from double-counting the edge $i$ connecting $k_1(i)$ with $k_2(i)$.
    Plugging this into equ.~\ref{eq:conn_rewr} we get
    \begin{align}
        \mathbb{E}_{\mathcal{I}}\left [\sum_{j = 1}^I\delta_{\left\{Q_{ij}\neq 0 \right\}}\right ] &\leq -1 + \frac{1}{I} \sum_{k=1}^K N_k\left[\sum_{i \in \mathcal{I}}\delta_{\left\{i_0 = k\right\}}+\delta_{\left\{i_1 = k\right\}}\right] \\
        &= -1 + \frac{K}{I}\left[\frac{1}{K}\sum_{k=1}^K N_k^2\right] = -1+ \frac{K}{I}\left[\mathbb{E}_{\mathcal{K}}[N_k]^2+\mathbb{E}_{\mathcal{K}}[N_k^2]-\mathbb{E}_{\mathcal{K}}[N_k]^2\right]\\
        &=-1+\frac{K}{I}\left[\left(2\frac{I}{K}\right)^2+\mathbb{V}_{\mathcal{K}}[N_k]\right]\\
        &=4\frac{I}{K}+\frac{K}{I}\mathbb{V}_{\mathcal{K}}[N_k]-1
    \end{align}
\end{proof}

\section[Appendix: Cost estimator and marginal probability]{Appendix: Cost estimator and marginal probability distribution obtained from sampling algorithm}\label{app:marg_prob_from_sampling}

In this section, we give explicit formulas for the estimators in section~\ref{subsec:cost} and their gradients. For the special case of a disjoint covering and greedy sampling algorithm, %~\ref{alg:greedy_sampling}, 
we rigorously prove the heuristic estimators used in the cost objective.

\subsection{Cost estimator}\label{subapp:cost_estimator_formula}

To express $\hat{p}_i$ and $\hat{p}_{ij}$ through a set of measurements $\mathcal{M}=\left\{\myvec m_1, \ldots \myvec m_{n_{\text{shots}}}\right\}$ in the computational basis, let $\Tilde{\myvec x}(\myvec m)$ refer to the \enquote{bit}-vector $\in \{-1,0,1\}^I$ with the subset $A_{r(\myvec m)}$ of entries fixed to $b(\myvec m)_1 \ldots b(\myvec m)_{n_a}$ as described in section~\ref{subsec:qubit_compression} %(refer in particular to equation~\ref{eq:quantum_state_qubit_compression}) 
and all other bits set to $-1$. Define $l_r(\circ): A_r \to \{1,\ldots,n_a\}$ through $A_r[l_r(i)]=i~\forall i \in A_r$ (\enquote{bit $i$ is mapped to $l_r(i)^{\text{th}}$ ancilla bit of register $r$}).
The estimators are given as follows: 
\begin{align}
    \hat{p}_i(\myvec \theta) 
    &:= \frac{\sum_{\myvec m \in \mathcal{M}}\delta_{\Tilde{x}_i(\myvec m),1}}{\sum_{\myvec m \in \mathcal{M}}(1-\delta_{\Tilde{x}_i(\myvec m),-1})} 
    %=\frac{\left\langle \sum_{\substack{r = 1 \\ i \in A_r}}^{N_r}\mathds{1}_{l_r(i)-1}\otimes\ket{1}\bra{1}\otimes \mathds{1}_{n_a-l_r(i)}\otimes\ket{r}\bra{r}\right\rangle_{\mathcal{M}}}{\left\langle \sum_{\substack{r = 1 \\ i \in A_r}}^{N_r}\mathds{1}_{n_a}\otimes\ket{r}\bra{r}\right\rangle_{\mathcal{M}}} 
    \label{eq:p_hat_pauli}\\
    \hat{p}_{ij}(\myvec \theta) 
    &:= (1-\hat{\mu}_{ij})\hat{q}_{ij}+\hat{\mu}_{ij}\hat{p}_i\hat{p}_j \label{eq:p_ij_hat_pauli}\\
    \text{where:  } \hat{q}_{ij}(\myvec \theta)
    &:= \frac{\sum_{\myvec m \in \mathcal{M}}\delta_{\Tilde{x}_i(\myvec m),1}\delta_{\Tilde{x}_j(\myvec m),1}}{\sum_{\myvec m \in \mathcal{M}}(1-\delta_{\Tilde{x}_i(\myvec m),-1})(1-\delta_{\Tilde{x}_j(\myvec m),-1})} \\
    %&=\frac{\left\langle \sum_{\substack{r = 1 \\ i,j \in A_r}}^{N_r}\mathds{1}_{\min\{l_r(i),l_r(j)\}-1}\otimes\ket{1}\bra{1}\otimes\mathds{1}_{|l_r(i)-l_r(j)|-1}\otimes\ket{1}\bra{1}\otimes \mathds{1}_{n_a-l_r(i)-l_r(j)}\otimes\ket{r}\bra{r}\right\rangle_{\mathcal{M}}}{\left\langle \sum_{\substack{r = 1 \\ i,j \in A_r}}^{N_r}\mathds{1}_{n_a}\otimes\ket{r}\bra{r}\right\rangle_{\mathcal{M}}}
    \label{eq:q_hat_pauli}\\
    \hat{\mu}_{ij}(\myvec \theta)
    &:= \frac{\sqrt{\sum_{\myvec m \in \mathcal{M}}(1-\delta_{\Tilde{x}_i(\myvec m),-1})\sum_{\myvec m \in \mathcal{M}}(1-\delta_{\Tilde{x}_j(\myvec m),-1})}}{\sqrt{\sum_{\myvec m \in \mathcal{M}}(1-\delta_{\Tilde{x}_i(\myvec m),-1})\sum_{\myvec m \in \mathcal{M}}(1-\delta_{\Tilde{x}_j(\myvec m),-1})}+\sum_{\myvec m \in \mathcal{M}}(1-\delta_{\Tilde{x}_i(\myvec m),-1})(1-\delta_{\Tilde{x}_j(\myvec m),-1})}
    \label{eq:estimator_mu}
\end{align}
%We used:
%\begin{align}
%    \sum_{\myvec m \in \mathcal{M}}(1-\delta_{\Tilde{x}_i(\myvec m),-1}) &= n_{\text{shots}}\left\langle \sum_{\substack{r = 1 \\ i \in A_r}}^{N_r}\mathds{1}_{n_a}\otimes\ket{r}\bra{r}\right\rangle_{\mathcal{M}}\\
%    \sum_{\myvec m \in \mathcal{M}}(1-\delta_{\Tilde{x}_i(\myvec m),-1})(1-\delta_{\Tilde{x}_j(\myvec m),-1}) &= n_{\text{shots}}\left\langle \sum_{\substack{r = 1 \\ i,j \in A_r}}^{N_r}\mathds{1}_{n_a}\otimes\ket{r}\bra{r}\right\rangle_{\mathcal{M}}
%\end{align}

%Equations~\ref{eq:convergence_prob_est_1} and \ref{eq:convergence_prob_est_2} follow from:
%\begin{align}
%    \left\langle\mathds{1}_{n_a}\otimes\ket{r}\bra{r}\right\rangle_{\mathcal{M}}&\xrightarrow{n_{\text{shots}}\to \infty} |\beta_r|^2\\
%    \left\langle\mathds{1}_{l_r(i)-1}\otimes\ket{1}\bra{1}\otimes \mathds{1}_{n_a-l_r(i)}\otimes\ket{r}\bra{r}\right\rangle_{\mathcal{M}}&\xrightarrow{n_{\text{shots}}\to \infty} \sum_{\substack{b_k \in \{0,1\}\\b_{l_r(i)}=1}} |a_r^{b_1\ldots b_{n_a}}|^2|\beta_r|^2\\
%    \left\langle\mathds{1}_{\min\{l_r(i),l_r(j)\}-1}\otimes\ket{1}\bra{1}\otimes\mathds{1}_{|l_r(i)-l_r(j)|-1}\otimes\ket{1}\bra{1}\otimes \mathds{1}_{n_a-l_r(i)-l_r(j)}\otimes\ket{r}\bra{r}\right\rangle_{\mathcal{M}}&\xrightarrow{n_{\text{shots}}\to \infty} \sum_{\substack{b_k \in \{0,1\}\\b_{l_r(i)}=b_{l_r(j)}=1}} |a_r^{b_1\ldots b_{n_a}}|^2|\beta_r|^2
%\end{align}

Note, that for disjoint coverings $\mathcal{A}$, the sums over $r$ collapse to at most a single term as every bit $i$ is contained in exactly one register-set $A_{r(i)}$. In the definition of $\Tilde{q}_{ij}$ we adopt the convention $\Tilde{q}_{ij}\equiv 0$ if $\{r: i,j \in A_{r}\}=\emptyset$ (sum contains no terms).

\subsection{Derivation for disjoint covering}\label{subapp:derivation_disj_covering_estimators}

To derive the marginal probability distributions $p_{ij}(\myvec \theta) = \text{Prob}_{\myvec \theta}(x_i = 1, x_j = 1)$ and $p_i(\myvec \theta) = \text{Prob}_{\myvec \theta}(x_i = 1)$ that arise from the greedy sampling algorithm%~\ref{alg:greedy_sampling}
, consider the output of the PQC given by equation~\ref{eq:quantum_state_qubit_compression}.

We denote the sequence of registers during one run of the sampling algorithm by $S = \left (r_i \right)_{i=1}^{N_{m}}$, where $r_i$ is the register sampled in the $i^{\text{th}}$ measurement (omitting measurements with not bits being set)%, i.e. those when $i\in A_r \implies x_i \neq -1$, cf. algorithm~\ref{alg:greedy_sampling}) 
and $N_m$ is the number of such calls before termination of the algorithm. Denote by $\Tilde{A}^{(S)}_{r_i}\subset A_{r_i}$\footnote{In the case of a disjoint covering $\mathcal{A}$, we have $\Tilde{A}^{(S)}_{r_i}= A_{r_i}$} the bits set in the $i^{\text{th}}$ measurement.

Then 
\begin{equation}
    \text{Prob}(\myvec x) = \sum_{S} \text{Prob}(S) \prod_{i = 1}^{N_m(S)} p^{(r_i)}_{\Tilde{A}^{(S)}_i}(\myvec x)
\end{equation}

where $\text{Prob}(S)$ is the probability of the sampling algorithm resulting in the register-sequence $S$ and $p^{(r_i)}_{\Tilde{A}^{(S)}_{r_i}}(\myvec x)$ is defined through the complex ancilla amplitudes compatible with $\myvec x_{\Tilde{A}^{(S)}_i}$ for the given register $r_i$, i.e. 
\[
    p^{(r_i)}_{\Tilde{A}^{(S)}_i}(\myvec x) = \sum_{b_k 
    \begin{cases}
        \in\{0,1\}, &\text{for } A_{r_i}[j] \notin \Tilde{A}^{(S)}_{r_i}\\
        =x_{A_{r_i}[k]}, &\text{for } A_{r_i}[k] \in \Tilde{A}^{(S)}_{r_i}
    \end{cases}}|a^{b_1 ... b_{n_a}}_{r}|^2
\]
which simplifies to $|a^{\myvec x_{A_r}}|^2$ if $|\Tilde{A}^{(S)}_{r_i}|=n_a$.%\footnote{For the case of \enquote{rejection sampling}, instead of summing over, we would condition on the entries of $\myvec x$ which are in $A_{r_i}\backslash \Tilde{A}^{(S)}_{r_i}$, i.e.~have already been sampled. In short: Marginal for sampling without rejection, conditional probability otherwise.}

For fixed $S$ define $\Tilde{r}(i),~i\in B$ such that $i\in \Tilde{A}^{(S)}_{\Tilde{r}(i)}$, as well as $p_{i}^{(r)} := p_{\{i\}}^{(r)}(\myvec 1) = \sum_{\substack{b_k \in \{0,1\}\\b_{l_r(i)}=1}}|a^{b_1 ... b_{n_a}}_{r}|^2$ and $p_{ij}^{(r)} := p_{\{i,j\}}^{(r)}(\myvec 1) = \sum_{\substack{b_k \in \{0,1\}\\b_{l_r(i)}=b_{l_r(j)}=1}}|a^{b_1 ... b_{n_a}}_{r}|^2$. Then it follows for the marginals:
\begin{equation}
    p_i = \sum_{\myvec x : x_i = 1} \text{Prob}(\myvec x) = \sum_{S} \text{Prob}(S)p_{i}^{(\Tilde{r}(i))}
\end{equation}

\begin{equation}
    p_{ij} = \sum_{\myvec x :~x_i = 1, x_j = 1} \text{Prob}(\myvec x) = \sum_{S:~\Tilde{r}(i)\neq \Tilde{r}(j)} \text{Prob}(S) p_{i}^{(\Tilde{r}(i))}p_{j}^{(\Tilde{r}(j))} + \sum_{S:~~\Tilde{r}(i)= \Tilde{r}(j)} \text{Prob}(S) p_{ij}^{(\Tilde{r}(i))}
\end{equation}

In the case of a disjoint covering $\mathcal{A}$, every register has to be sampled once and $\Tilde{A}^{(S)}_{r_i}=A_{r_i}~\forall i\in \{1,\ldots,N_m=\frac{I}{n_a}\}$. Due to this the sum over $S$ merely consists of different orderings of $\{1,\ldots,N_r=\frac{I}{n_a}\}$. Therefore, $\text{Prob}(S)=\prod_{i = 1}^{N_m}|\beta_{r_i}|^2=(\frac{I}{n_a}!)^{-1}$ is uniform.

For general coverings however, $\text{Prob}(S)$ is not necessarily uniform and the sum over $S$ is non-trivial. However, we can \emph{hope to approximate} in the general case
\begin{equation}
    p_i = \sum_{S} \text{Prob}(S)p_{i}^{(r(i))} \approx \frac{1}{\sum_{r:~i \in A_r}|\beta_r|^2}\sum_{r:~i \in A_r}|\beta_r|^2 p_{i}^{(r)}
\end{equation}
and

\begin{align}
    p_{ij} &= \sum_{S:~r(i)\neq r(j)} \text{Prob}(S) p_{i}^{(r(i))}p_{j}^{(r(j))} + \sum_{S:~~r(i)= r(j)} \text{Prob}(S) p_{ij}^{(r(i))}\\
    &\approx \mu_{ij} \frac{1}{\sum_{r_1, r_2:~i \in A_{r_1}, j \notin A_{r_1}, j \in A_{r_2}}|\beta_{r_1}|^2|\beta_{r_2}|^2}\sum_{r_1, r_2:~i \in A_{r_1}, j \notin A_{r_1}, j \in A_{r_2}}|\beta_{r_1}|^2|\beta_{r_2}|^2p_{i}^{(r_1)}p_{j}^{(r_2)}\\ &+ (1-\mu_{ij})\frac{1}{\sum_{r:~i, j \in A_r}|\beta_r|^2}\sum_{r:~i, j \in A_r} |\beta_r|^2 p_{ij}^{(r)}
\end{align}

(equality holds for disjoint coverings, in which case the sums become trivial). We can regard this as assuming that based on symmetry considerations\footnote{which depending on the graph covering $\mathcal{A}$ may not be warranted}, the probability of the sampling algorithm running through the sequence $S$ where entry $i$ is sampled from register $r\in \{r:~i\in A_r\}$) is proportional to $|\beta_r|^2$. Similar considerations are made when looking at bit-pairs $(i,j)$ with the added complexity that they can either be sampled from the same register $r$ or from two different register $r_1, r_2$. The probability of the former is estimated as
\begin{equation}
    \mu_{ij} \approx \frac{\sqrt{\sum_{r_1, r_2:~i \in A_{r_1}, j \notin A_{r_1}, j \in A_{r_2}}|\beta_{r_1}|^2|\beta_{r_2}|^2}}{\sqrt{\sum_{r_1, r_2:~i \in A_{r_1}, j \notin A_{r_1}, j \in A_{r_2}}|\beta_{r_1}|^2|\beta_{r_2}|^2}+\sum_{r:~i, j \in A_r}|\beta_r|^2}.
\end{equation}
%TODO: There should not be a square-root (which implies error in code...), rather factor nCr(N_m,2) rather than factor N_m....
By further approximating $\sum_{r_1, r_2:~i \in A_{r_1}, j \notin A_{r_1}, j \in A_{r_2}}\approx\sum_{r_1:~i \in A_{r_1}}\sum_{r_2:~j \in A_{r_2}}$ we obtain
\begin{equation}
    p_{ij} \approx \mu_{ij}\sum_{r_1:~i \in A_{r_1}}\sum_{r_2:~j \in A_{r_2}}p_i^{(r_1)} p_j^{(r_2)}+ (1-\mu_{ij})\frac{1}{\sum_{r:~i, j \in A_r}|\beta_r|^2}\sum_{r:~i, j \in r} |\beta_r|^2 p_{ij}^{(r)}.
\end{equation}
Given this, $p_i$ is estimated by $\hat{p}_i$ (equ.~\ref{eq:p_hat_pauli}) and $p_{ij}$ by $\hat{p}_{ij}$ (equ.~\ref{eq:p_ij_hat_pauli}), where all \enquote{$\approx$} are exact for the case of a disjoint covering.

In summary, we have motivated the cost estimator in section~\ref{subsec:cost} as a heuristic for the general case and proven 

\begin{restatable}[]{lm}{lemmacostestimator}
\label{lemma:cost_estimator}
    For the greedy sampling algorithm%~\ref{alg:greedy_sampling} 
     and a disjoint covering $\mathcal{A}$ (perfect matching), equation~\ref{eq:cost_estimator} and $\mathbb{E}_{\myvec \theta}[C]$ from equation \ref{eq:QUBO_as_expectation_min} are equal in the limit $n_{\text{shots}}\to \infty$.
    In particular
    \begin{align}
        p_{i}(\myvec \theta) &= \sum_{\substack{b_k \in \{0,1\}\\b_{l_r(i)}=1}}|a^{b_1 ... b_{n_a}}_{r}|^2 \text{ where $r$ s.t. } i\in A_r\\
        p_{ij}(\myvec \theta) &=
        \begin{cases}
             \sum_{\substack{b_k \in \{0,1\}\\b_{l_{r_i}(i)}=b_{l_{r_i}(j)}=1}}|a^{b_1 ... b_{n_a}}_{r_i}|^2, &\text{if }r_i = r_j \\
             \sum_{\substack{b_k \in \{0,1\}\\b_{l_{r_i}(i)}=1}}|a^{b_1 ... b_{n_a}}_{r_i}|^2 \sum_{\substack{b_k \in \{0,1\}\\b_{l_{r_j}(j)}=1}}|a^{b_1 ... b_{n_a}}_{r_j}|^2, &\text{if }r_i \neq r_j 
        \end{cases}
        \text{  where $r_x$ s.t. } x\in A_{r_x}
    \end{align}
\end{restatable}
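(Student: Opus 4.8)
The statement bundles two things: that the bit-vector law $\text{Prob}_{\myvec\theta}(\myvec x)$ produced by the greedy sampler on a perfect matching factorizes over the register-blocks $A_r$ (which fixes the marginals $p_i,p_{ij}$ as displayed), and that the empirical estimators of Appendix~\ref{subapp:cost_estimator_formula} converge to those marginals as $n_{\text{shots}}\to\infty$, so that $\hat C(\myvec\theta)$ of equation~\ref{eq:cost_estimator} converges to $\mathbb{E}_{\myvec\theta}[C]$ of equation~\ref{eq:QUBO_as_expectation_min}. I would establish these in that order. For the first part, observe that since $\mathcal{A}$ is disjoint with $|A_r|=n_a$ we have $N_r=I/n_a$ and every bit $i$ lies in a unique block $A_{r(i)}$; a single computational-basis shot returns a register $r$ with probability $|\beta_r(\myvec\theta)|^2$ and then an ancilla string $\myvec b$ with probability $|a_r^{\myvec b}(\myvec\theta)|^2$, and the greedy rule writes the \emph{entire} block $x_{A_r}$ on the first shot that returns $r$, leaving it untouched on later shots. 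Starting from the identity $\text{Prob}(\myvec x)=\sum_S\text{Prob}(S)\prod_i p^{(r_i)}_{\tilde A^{(S)}_i}(\myvec x)$ derived in Appendix~\ref{subapp:derivation_disj_covering_estimators}, I would note that in the disjoint case every realized register sequence $S$ exhausts all $N_r$ registers with $\tilde A^{(S)}_{r_i}=A_{r_i}$, so each factor is $|a_{r_i}^{\myvec x_{A_{r_i}}}|^2$ and the product $\prod_{r=1}^{N_r}|a_r^{\myvec x_{A_r}}|^2$ is independent of $S$; pulling it out of the sum and using $\sum_S\text{Prob}(S)=1$ gives $\text{Prob}_{\myvec\theta}(\myvec x)=\prod_{r=1}^{N_r}|a_r^{\myvec x_{A_r}}(\myvec\theta)|^2$. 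In particular the blocks are independent with $x_{A_r}\sim|a_r^{\cdot}|^2$, and marginalizing yields $p_i=\sum_{\myvec b:\,b_{l_{r(i)}(i)}=1}|a_{r(i)}^{\myvec b}|^2$, while for a pair, if $r(i)=r(j)=r$ both bits live in the same block so $p_{ij}=\sum_{\myvec b:\,b_{l_r(i)}=b_{l_r(j)}=1}|a_r^{\myvec b}|^2$, and if $r(i)\neq r(j)$ independence gives $p_{ij}=p_ip_j$ — exactly the displayed formulas.

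For the second part, I would view $\mathcal{M}$ as $n_{\text{shots}}$ i.i.d.\ computational-basis samples and apply the law of large numbers to each count entering equations~\ref{eq:p_hat_pauli}--\ref{eq:estimator_mu}: $n_{\text{shots}}^{-1}\sum_{\myvec m}(1-\delta_{\tilde x_i(\myvec m),-1})\to|\beta_{r(i)}|^2$, $n_{\text{shots}}^{-1}\sum_{\myvec m}\delta_{\tilde x_i(\myvec m),1}\to|\beta_{r(i)}|^2 p_i$, and similarly for the paired sums (assuming $|\beta_r|^2>0$ for all $r$; otherwise the $\epsilon$-cutoff convention of Section~\ref{subsec:cost} applies and the affected block is never sampled). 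Forming ratios reproduces the limits~\ref{eq:convergence_prob_est_1}--\ref{eq:convergence_prob_est_2}, which for a disjoint covering collapse to single-term sums, giving $\hat p_i\to p_i$ and $\hat q_{ij}\to\sum_{\myvec b:\,b_{l_r(i)}=b_{l_r(j)}=1}|a_r^{\myvec b}|^2$ when $r(i)=r(j)=r$ (and $\hat q_{ij}=0$ by the stated convention when $r(i)\neq r(j)$). By the Appendix~\ref{subapp:derivation_disj_covering_estimators} analysis the approximations defining $\hat p_{ij}$ are in fact exact for a disjoint covering — the cross-register index set $\{r_1,r_2:\,i\in A_{r_1},\,j\notin A_{r_1},\,j\in A_{r_2}\}$ is empty precisely when $r(i)=r(j)$ and equals the single pair $(r(i),r(j))$ otherwise — so $\hat p_{ij}\to p_{ij}$ in both cases. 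Finally, substituting $\hat p_i\to p_i$ and $\hat p_{ij}\to p_{ij}$ into equation~\ref{eq:cost_estimator} and matching $Q=A+\mathrm{Diag}[\myvec b(\myvec s)]$ (off-diagonal $Q_{ij}=A_{ij}$, diagonal $Q_{ii}=A_{ii}+b_i(\myvec s)$) together with the additive constant $c(\myvec s)$ reproduces $\sum_{i\neq j}p_{ij}Q_{ij}+\sum_i p_iQ_{ii}+c(\myvec s)=\mathbb{E}_{\myvec\theta}[C]$ for fixed $\myvec s$, and hence also — by continuity of equation~\ref{eq:slack_estimator} in the $\hat p_i$ — for the substituted optimal slack $\hat{\myvec s}(\myvec\theta)$.

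The only genuinely non-mechanical step is the factorization in the first part: one must argue that the order in which registers are first hit introduces no correlations between the blocks and no dependence on the (in general non-uniform) sequence distribution $\text{Prob}(S)$. The clean way to see this is precisely the observation above — for a perfect matching every block is written once and in full, so the summand $\prod_i p^{(r_i)}_{A_{r_i}}(\myvec x)$ does not depend on $S$ and factors out of $\sum_S\text{Prob}(S)=1$. Everything downstream (reading off the marginals, the LLN limits, the term-by-term comparison with $\mathbb{E}_{\myvec\theta}[C]$) is bookkeeping; the one point worth stating with care is the $r(i)=r(j)$ versus $r(i)\neq r(j)$ dichotomy in the off-diagonal terms, since $\hat p_{ij}$ coincides with $p_{ij}$ only once the same-register convention ($\hat q_{ij}\equiv0$ / empty cross-register sum for same-register pairs) is invoked.
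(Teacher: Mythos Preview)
Your proposal follows essentially the same route as the paper's derivation in Appendix~\ref{subapp:derivation_disj_covering_estimators}: the register-sequence decomposition $\text{Prob}(\myvec x)=\sum_S\text{Prob}(S)\prod_i p^{(r_i)}_{\tilde A^{(S)}_i}(\myvec x)$, the observation that for a disjoint covering each block is written once in full so the product is $S$-independent and factors out, reading off the block-independent marginals, and then invoking the estimator limits~\ref{eq:convergence_prob_est_1}--\ref{eq:convergence_prob_est_2} (which you make explicit via the law of large numbers) together with the paper's remark that the ``$\approx$'' become equalities in the disjoint case. Your factorization step is actually stated more robustly than the paper's: you use only that the summand is $S$-independent and $\sum_S\text{Prob}(S)=1$, whereas the paper additionally asserts $\text{Prob}(S)=\prod_i|\beta_{r_i}|^2=(I/n_a)!^{-1}$ is uniform --- a claim that is neither needed for the conclusion nor obviously true for non-uniform $|\beta_r|^2$.
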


%%%
\iffalse

\subsection[Regularization]{Regularization term for uniform register sampling} \label{subapp:cost_regularization}

If certain registers are hardly sampled, i.e. $\langle \mathds{1}_{n_a} \otimes \ket{r}\bra{r} \rangle_{\mathcal{M}} \simeq 0$ we may encounter division by zero in the expressions for $\hat{p}_i$ and $\hat{q}_{ij}$. In practice, this can be dealt with by setting the corresponding estimators to $1/2$ (or any other value) when $\langle \mathds{1}_{n_a} \otimes \ket{r}\bra{r} \rangle_{\mathcal{M}} < \epsilon$ for some $\epsilon > 0$. In the optimization of $\myvec \theta$, such states will be penalized as they inflate the value of $\hat{C}(\myvec \theta)$. Alternatively, besides using a register-preserving circuit, this penalization can be made explicit by adding a term
\begin{align}
    \hat{R}(\myvec \theta) &= \eta \sum_{r = 1}^{N_r}\left [\hat{r}_r(\myvec \theta) -\frac{1}{N_r} \right]^2 \label{eq:reg_pres_penalty}\\
    \text{with:   } \hat{r}_r(\myvec \theta) &=\langle \mathds{1}_{n_a} \otimes \ket{r}\bra{r}\rangle_{\mathcal{M}}.
\end{align}

\fi
%%%

\subsection{Explicit form of cost-gradient} \label{subapp:gradient_cost}
In this section, we will give the explicit form of the derivatives $\partial_{\theta_d}$ of the cost a and register-regularization estimator

\begin{equation}
    \hat{C}(\myvec \theta) + \hat{R}(\myvec \theta)  = \sum_{\substack{i,j = 1\\i\neq j}}^I \hat{p}_{ij}(\myvec \theta) A_{ij} + \sum_{i = 1}^I \hat{p}_{i}(\myvec \theta) (A_{ii}+b_i(\hat{\myvec s}(\myvec \theta))) + c(\hat{\myvec s}(\myvec \theta))
    +\eta \sum_{r = 1}^{N_r}\left [\hat{r}_r(\myvec \theta) -\frac{1}{N_r} \right]^2.
    \label{eq:cost_estimator_and_regularization}
\end{equation}

When optimizing $\hat{C}(\myvec \theta)$ with respect to $\myvec \theta$ we may want to make use of the gradient $\nabla_{\mthet}\hat{C}$ to update $\myvec \theta$. For this, note that both terms in the quotient for $\hat{p}_i$ and $\hat{q}_{ij}$ are given as linear combinations of Pauli-expectation values (equation~\ref{eq:p_hat_pauli} and \ref{eq:q_hat_pauli}). For many variational ansätze, in particular those consisting of single-qubit Pauli-rotations used in this paper, this allows to easily calculate gradients $\nabla_{\mthet}[\hat{p}_i]$ and $\nabla_{\mthet}[\hat{q}_{ij}]$ through the parameter-shift rule (\cite{mitarai_quantum_2018, schuld_evaluating_2019}). Similarly, the gradients of $\hat{\mu}_{ij}$ and $\hat{r}_r$ can be calculated by applying the parameter-shift rule.
Through application of the chain-rule, the gradient of $\hat{C}\left(\{\hat{p}_i\}_i, \{\hat{q}_{ij}\}_{ij}, \{\hat{\mu}_{ij}\}_{ij};\{\hat{r}_r\}_{r}\right )$ is obtained.

We distinguish between the case of register-preserving ansätze and general circuit ansätze. For simplicity (and motivated by the regularization of the register-probabilities) we will treat $\hat{\mu}_{ij}$ as constant in $\myvec\theta$ in both cases.

By the chain rule we get
\begin{align}
    \partial_{\theta_d}\hat{C}(\myvec \theta) &= \sum_{\substack{i,j = 1\\i\neq j}}^I  \partial_{\theta_d}\hat{p}_{ij} A_{ij} + \sum_{i = 1}^I\left [   \partial_{\theta_d}\hat{p}_{i} (A_{ii}+b_i(\hat{\myvec s}))+\hat{p}_{i}\nabla_{\myvec s}b_i^{T}(\hat{\myvec s})\partial_{\theta_d}\hat{\myvec s}\right ] + \nabla_{\myvec s}c(\hat{\myvec s})\partial_{\theta_d}\hat{\myvec s} \\
    \partial_{\theta_d}\hat{R}(\myvec \theta) &= 2\eta \sum_{r = 1}^{N_r}\left [\hat{r}_r -\frac{1}{N_r} \right]\partial_{\theta_d}\hat{r}_r
\end{align}
where
\begin{align}
    \nabla_{\myvec s}b_i &= 2\lambda V_{(i)} \\
    \nabla_{\myvec s}c &= -2\lambda \left(\myvec l+\myvec s - \myvec{bal}\right)\\
    \partial_{\theta_d}\hat{\myvec s} &= \text{Diag}[(\delta_{\hat{s}_i> 0})_{1\leq i\leq KJ}]\sum_{i=1}^I\partial_{\theta_d}\hat{p}_{i} V_{(i)}\\
    \partial_{\theta_d} \hat{p}_{ij} &= (1-\hat{\mu}_{ij}) \partial_{\theta_d} \hat{q}_{ij}+\hat{\mu}_{ij}( \hat{p}_j\partial_{\theta_d}\hat{p}_i+\hat{p}_i\partial_{\theta_d}\hat{p}_j)
\end{align}
with $V_{(i)}$ being the $i^\text{th}$ row of V.
As $\hat{r}_r$ is the expectation of the observable $\mathds{1}_{n_a}\otimes \ket{r}\bra{r}$ its partial derivatives can be calculated directly through parameter-shift rules or similar techniques to calculate the gradient of quantum observables. If we consider a register-preserving ansatz, the same holds true for $\hat{q}_{ij}(\myvec \theta)$ and $\hat{p}_{i}(\myvec \theta)$. Otherwise, the derivatives have to be calculated separately for the nominator and denominator in equation~\ref{eq:p_hat_pauli} and \ref{eq:q_hat_pauli} respectively and recombined using the quotient rule.

\section{Expressing cost estimator as Hermitian observable}\label{app:cost_as_observable}
We observed in section~\ref{subsec:qubit_compression} and \ref{subsec:cost}, that our qubit-compression results in a cost-estimator expressed as a function of Pauli-Z measurements that cannot be written straightforwardly as the expectation over a Hermitian operator.
Here, we alleviate this issue for register-preserving circuits and fixed slack variables.

%Motivation:
%\paragraph{Why is this a problem?}
%Naively evaluating the cost estimator~\ref{eq:cost_estimator} on obtained measurements may lead to amplifying hardware noise (addition, multiplication and division propagates noise of individual Pauli-expectations). Classical shadows could be employed to more efficiently estimate the cost (\cite{huang_predicting_2020}).

%Yet, a majority of the literature (including aforementioned classical shadows) and software are tailored primarily for Hermitian expectation values. 
%Areas include theoretical results (e.g.~adiabatic theorem), the variational ansatz and optimizer itself, estimation and error mitigation as well as fault-tolerant methods for the evaluation of expectation values.

%Expressing our cost function as a Hermitian expectation would hence widen the cross-applicability of other results and code-bases.

%Mitigation:
%\paragraph{Resolution for register-preserving circuits and fixed slack variables}
We can rewrite the cost as a Hermitian expectation for register-preserving circuits and fixed slack variables $\myvec s$:
\begin{enumerate}
    \item Substitute denominators in $\hat{p}_i$ and $\hat{q}_{ij}$ as well as $\hat{\mu}_{ij}$ by exact counterpart.
    \item Define operator $C(\myvec \theta)$ such that $\hat{C}(\myvec \theta) = \langle C(\myvec \theta)\rangle_{\mathcal{M}}$ by \enquote{doubling the Hilbert space}.
\end{enumerate}

\paragraph{1.} The denominators in the expressions for $\hat{p}_i$ and $\hat{q}_{ij}$ (equations \ref{eq:p_hat_pauli} and \ref{eq:q_hat_pauli}) as well as $\hat{\mu}_{ij}$ can be replaced by scalar constants by substituting 
\begin{equation}
    \langle \mathds{1}_{n_a}\otimes\ket{r}\bra{r}\rangle_{\mathcal{M}}\mapsto\frac{1}{N_r}
\end{equation}
resulting in ($n_i:=\sum_{\substack{r=1\\i\in A_r}}^{N_r}1$, $n_{ij}:=\sum_{\substack{r=1\\i,j\in A_r}}^{N_r}1$) 
\begin{align}
    \hat{\mu}_{ij}&=\frac{\sqrt{n_i n_j}}{\sqrt{n_i n_j}+n_{ij}}\\
    \hat{p}_i&=\frac{N_r}{n_i}\left\langle \sum_{\substack{r = 1 \\ i \in A_r}}^{N_r}\mathds{1}_{l_r(i)-1}\otimes\ket{1}\bra{1}\otimes \mathds{1}_{n_a-l_r(i)}\otimes\ket{r}\bra{r}\right\rangle_{\mathcal{M}}\label{eq:p_hermitian_exp}\\
    \hat{q}_{ij}&=\frac{N_r}{n_{ij}}\left\langle \sum_{\substack{r = 1 \\ i,j \in A_r}}^{N_r}\mathds{1}_{\min\{l_r(i),l_r(j)\}-1}\otimes\ket{1}\bra{1}\otimes\mathds{1}_{|l_r(i)-l_r(j)|-1}\otimes\ket{1}\bra{1}\otimes \mathds{1}_{n_a-l_r(i)-l_r(j)}\otimes\ket{r}\bra{r}\right\rangle_{\mathcal{M}}\label{eq:q_hermitian_exp}
\end{align}
where we set $\hat{p}_i=0$ ($\hat{q}_{ij}=0$) if $n_i = 0$ ($n_{ij}=0$).

\paragraph{2.} Equations \ref{eq:p_hermitian_exp} and \ref{eq:q_hermitian_exp} suggest defining the Hermitian operators
\begin{align}
    P_i&=\frac{N_r}{n_i}\sum_{\substack{r = 1 \\ i \in A_r}}^{N_r}\mathds{1}_{l_r(i)-1}\otimes\ket{1}\bra{1}\otimes \mathds{1}_{n_a-l_r(i)}\otimes\ket{r}\bra{r}\\
    Q_{ij}&=\frac{N_r}{n_{ij}} \sum_{\substack{r = 1 \\ i,j \in A_r}}^{N_r}\mathds{1}_{\min\{l_r(i),l_r(j)\}-1}\otimes\ket{1}\bra{1}\otimes\mathds{1}_{|l_r(i)-l_r(j)|-1}\otimes\ket{1}\bra{1}\otimes \mathds{1}_{n_a-l_r(i)-l_r(j)}\otimes\ket{r}\bra{r}.
\end{align}
This would allow us to write $\hat{C}(\myvec \theta)$ in the desired form if it weren't for the terms of the form $\hat{p}_i\hat{p}_j = \left\langle P_i\right\rangle_{\mathcal{M}}\left\langle P_j\right\rangle_{\mathcal{M}}$. Instead of a quantum state $\ket{\psi({\myvec \theta})}$ in the form of equation~\ref{eq:quantum_state_qubit_compression}, we consider the product state $\ket{\psi({\myvec \theta})}\otimes\ket{\psi({\myvec \theta})}$ and the following operator acting on it:
\begin{equation}
    C = .\sum_{i,j=1}^{I}A_{ij}\left[(1-\hat{\mu}_{ij})Q_{ij}\otimes\mathds{1}_{n_q}+\hat{\mu}_{ij} P_i\otimes P_j\right]+\sum_{i=1}^I (A_{ii}+b_i(\myvec s)) P_i\otimes\mathds{1}_{n_q}+ c(\myvec s)\mathds{1}_{2n_q}
    \label{eq:C_obs}
\end{equation}
This requires doubling the number of qubits and circuit width.
The cost estimator is then given as this operator's expectation value
\begin{equation}
    \hat{C}(\myvec \theta) = \langle  C \rangle_{\mathcal{M}} \simeq \bra{\psi({\myvec \theta})}\otimes\bra{\psi({\myvec \theta})} C \ket{\psi({\myvec \theta})}\otimes\ket{\psi({\myvec \theta})}
\end{equation}
estimated through measurements.

We note:
\begin{itemize}
    \item C is Hermitian as it is real-valued and diagonal in the computational basis
    \item While we assumed the measured quantum state to be register-uniform in the substitutions of 1., one may consider using the observable in equation~\ref{eq:C_obs} even for non-register-preserving circuits if the register-amplitudes are approximately kept constant with a penalty term. % as in equation~\ref{eq:reg_pres_penalty}. 
    The penalty term can be added to $C$ in the same manner as $P_i\otimes P_j$.
    \item The assumption of a register-preserving circuit comes with a caveat: We need to be careful in applying methods tailored to Hermitian-expectation-minimization which change the variational ansatz itself based on properties of the Hermitian (QAOA being the most prominent example).
    Still there are relevant results which can be applied to a constrained set of allowed gates, such as optimization techniques (\cite{schuld_evaluating_2019, ostaszewski_structure_2021}), estimation (\cite{huang_predicting_2020}) and error mitigation techniques as well as libraries (\cite{obrien_error_2021, endo_practical_2018, larose_mitiq_2022}) or fault-tolerant methods for evaluating expectation values (e.g.~\cite{knill_optimal_2007}). %TODO: Check sources (and maybe cite software) https://mitiq.readthedocs.io/en/latest/guide/error-mitigation.html
    \item Throughout, we assumed constant slack variables $\myvec s$, as expressing the relationship in~\ref{eq:slack_estimator} through an observable is complicated by the non-linearity of the rectified linear unit. Hence, alternating adjustments of $\myvec s$ using~\ref{eq:slack_estimator} and the circuit parameters $\myvec \theta$ through a classical optimizer are needed. Alternatively, both variables could be optimized simultaneously, for which implicit differentiation may be useful (\cite{ahmed_implicit_2022}).
\end{itemize}

%TODO: We do not even need to prepare the product state to use some of the results

\section{Simulation parameters}\label{app:simulation_param}

The configurations used to determine ansatz and for optimizing circuit parameters in section~\ref{sec:results} are shown in table~\ref{tab:setup_joined_results_bm}. Pennylane (\cite{bergholm_pennylane_2022}; version:~\texttt{pennylane=0.29.1}) was used for quantum computing simulations. For parameter optimization, the SciPy (\cite{virtanen_scipy_2020}; version:~\texttt{scipy=1.10.1}) implementation of gradient-free optimizer COBYLA as well as standard gradient descent were used. The gradients of the latter were calculated through the chain rule and parameter-shift rule (\cite{mitarai_quantum_2018, schuld_evaluating_2019}).

\begin{table}
    \caption[Benchmarking setup]{Benchmarking setup} %TODO: More extensive name
    \centering
    \begin{tabular}{@{}l r@{}}
        \toprule
        {\bfseries Parameter} & {\bfseries Value(s)} \\
        \midrule
        Cost penalty $\lambda$ & 10\\ %(\ref{subsec:qubit_compression}, \ref{subsec:cost})\\
        Register-regularization penalty$^*$ $\eta$  & 1000\\
        Gradient max steps & 1500\\ %(\ref{subsec:cost}, \ref{app:gradient_cost})\\
        Gradient stepsize & $2.5\times 10^{-4}$\\ %(\ref{subsec:cost}, \ref{app:gradient_cost})\\
        Optimizer & COBYLA, Gradient descent (DESC)\\ %(\ref{subsec:opt})\\
        Number of runs / starting point & 25\\
        Parameter initialization & Uniform random $\in [0,2\pi]$ \\
        Depth $d$ & 1, 4\\
        Ancilla qubits $n_a$& 1, 4, 8, 16\\
        $n_{\text{shots}}$ & $10^4$, $2\times 10^4$\\
        Ansatz & Register-preserving, Hardware-efficient\\
        Register-mapping & Disjoint covering unless $n_r+\geq1$ \\
        \bottomrule
        $^*$ For Hardware-efficient ansatz only
    \end{tabular}
    \label{tab:setup_joined_results_bm}
\end{table}

\end{document}